\documentclass[10pt]{article}

\setlength{\oddsidemargin}{0in}
\setlength{\textwidth}{\paperwidth}
\addtolength{\textwidth}{-2in}
\setlength{\topmargin}{-.5in}
\setlength{\textheight}{8.75in}
\usepackage{eqnarray,amsmath,amssymb,amsthm,color,epsfig,mathrsfs}
\usepackage{amsbsy,array,float,caption,mathtools}
\usepackage{multirow,hhline,float,soul}
\usepackage[bookmarksnumbered=true]{hyperref}

\renewcommand{\labelenumi}{\alph{enumi}.}







\newtheorem{theorem}{Theorem}

\newtheorem{lemma}[theorem]{Lemma}

\newtheorem{proposition}[theorem]{Proposition}


\newcounter{spslist}

\newcommand{\col}[3]{ \renewcommand{\arraystretch}{#1}
                \left[\!\! \begin{array}{c} #2 \\ #3 \end{array} \!\!\right] }

\newcommand{\mat}[5]{ \renewcommand{\arraystretch}{#1}
                    \left[\!\! \begin{array}{cc}
                            #2 & #3 \\
                            #4 & #5 \end{array} \!\!\right] }

\newcounter{geqncount}
    {\refstepcounter{equation}%
     \setcounter{geqncount}{\value{equation}}%
     \setcounter{equation}{0}%
  }%
    {\setcounter{equation}{\value{geqncount}}}



\newcommand{\lfe}{{-e}}
\newcommand{\lfp}{{-p}}
\newcommand{\lfl}{{-\ell}}
\newcommand{\rte}{{+e}}
\newcommand{\rtp}{{+p}}

\newcommand{\kplus}{k_+}
\newcommand{\kzero}{k_0}
\newcommand{\kone}{k_1}

\newcommand{\Pl}{P_{\tiny -}} 
\newcommand{\Ple}{P_{-e}} 
\newcommand{\vle}{v_{-e}} 
\newcommand{\vll}{v_{-\ell}} 
\newcommand{\kle}{k_{-e}} 
\newcommand{\wle}{w_{-e}} 
\newcommand{\Plp}{P_{-p}} 
\newcommand{\vlp}{v_{-p}} 
\newcommand{\wlp}{w_{-p}} 
\newcommand{\klp}{k_{-p}} 
\renewcommand{\Pr}{P_+} 
\newcommand{\Pre}{P_{+e}} 
\newcommand{\vre}{v_{+e}} 
\newcommand{\wre}{w_{+e}} 
\newcommand{\kre}{k_{+e}} 
\newcommand{\Prp}{P_{+p}} 
\newcommand{\vrp}{v_{+p}} 
\newcommand{\wrp}{w_{+p}} 
\newcommand{\krp}{k_{+p}} 

\newcommand{\Plrp}{P_{\mp p}} 

\newcommand{\Vo}{\mathring{V}}
\newcommand{\Vom}{\mathring{V}_-}
\newcommand{\Vop}{\mathring{V}_+}
\newcommand{\To}{\mathring{T}}
\newcommand{\Do}{\mathring{D}}

\newcommand{\rzero}{{\hspace{-0.7pt}+\hspace{-0.7pt}0}}
\newcommand{\rone}{{\hspace{-0.7pt}+\hspace{-0.7pt}1}}
\newcommand{\rtwo}{{\hspace{-0.7pt}+\hspace{-0.7pt}2}}
\newcommand{\lzero}{{\hspace{-0.7pt}-\hspace{-0.7pt}0}}
\newcommand{\lone}{{\hspace{-0.7pt}-\hspace{-0.7pt}1}}
\newcommand{\ltwo}{{\hspace{-0.7pt}-\hspace{-0.7pt}2}}

\renewcommand{\O}{{O}}

\newcommand{\slantfrac}[2]{{\hbox{\tiny$\raisebox{0.6pt}{#1}\!/_{\!#2}$}}}
\newcommand{\third}{\slantfrac{1}{3}}
\newcommand{\thirds}{\slantfrac{2}{3}}
\newcommand{\mthird}{\slantfrac{--1}{3}}
\newcommand{\mthirds}{\slantfrac{--2}{3}}


\newcommand{\psig}{\psi^\guided}

\newcommand{\inc}{\text{\scriptsize in}}

\newcommand{\out}{\text{\scriptsize out}}
\newcommand{\guided}{\text{\itshape g}}

\newcommand{\Range}{{\mathrm{Ran}}}

\newcommand{\sspan}{{\mathrm{span}}}

\newcommand{\fadj}{{\text{\tiny $[$}*{\text{\tiny $]$}}}}

\newcommand{\xx}{{\mathbf x}}

\newcommand{\rr}{{\mathbf r}}
\newcommand{\HH}{{\mathbf H}}
\newcommand{\EE}{{\mathbf E}}
\newcommand{\BB}{{\mathbf B}}
\newcommand{\DD}{{\mathbf D}}

\newcommand{\ZZ}{\mathbb{Z}}
\newcommand{\RR}{\mathbb{R}}
\newcommand{\CC}{\mathbb{C}}

\renewcommand{\Re}{\text{Re}\,}
\renewcommand{\Im}{\mathrm{Im}\,}

\newcommand{\kk}{{\boldsymbol{\kappa}}}
\newcommand{\kw}{(\kk,\omega)}

\newcommand{\kwz}{(\kk^0,\omega^0)}

\bibliographystyle{plain}

\begin{document}

\begin{center}
{\bfseries \Large  Pathological scattering by a defect in a slow-light\\ \vspace{0.7ex} periodic layered medium}
\end{center}

\vspace{0ex}

\begin{center}
{\scshape \large Stephen P. Shipman${}^\dagger$ and Aaron T. Welters${}^\ddagger$} \\
\vspace{2ex}
{\itshape ${}^\dagger$Department of Mathematics\\
Louisiana State University\\
Baton Rouge, Louisiana 70803, USA\\
\vspace{1ex}
${}^\ddagger$Department of Mathematical Sciences\\
Florida Institute of Technology\\
Melbourne, Florida 32901, USA
}
\end{center}

\vspace{2ex}
\centerline{\parbox{0.9\textwidth}{
{\bf Abstract.}\
Scattering of electromagnetic fields by a defect layer embedded in a slow-light periodically layered ambient medium exhibits phenomena markedly different from typical scattering problems.  In a slow-light periodic medium, constructed by Figotin and Vitebskiy, the energy velocity of a propagating mode in one direction slows to zero, creating a ``frozen mode" at a single frequency within a pass band, where the dispersion relation possesses a flat inflection point.  The slow-light regime is characterized by a $3\!\times\!3$ Jordan block of the log of the $4\!\times\!4$ monodromy matrix for EM fields in a periodic medium at special frequency and parallel wavevector.  The scattering problem breaks down as the 2D rightward and leftward mode spaces intersect in the frozen mode and therefore span only a 3D subspace $\Vo$ of the 4D space of EM fields.\\
\indent
\hspace{1.2em} Analysis of pathological scattering near the slow-light frequency and wavevector is based on the interaction between the flux-unitary transfer matrix $T$ across the defect layer and the projections to the rightward and leftward spaces, which blow up as Laurent-Puiseux series.  Two distinct cases emerge: the generic, non-resonant case when $T$ does not map $\Vo$ to itself and the quadratically growing mode is excited; and the resonant case, when $\Vo$ is invariant under $T$ and a guided frozen mode is resonantly excited. 
}}

\vspace{3ex}
\noindent
\begin{mbox}
{\bf Revised:} \today\\
{\bf MSC Codes:} 78A45, 78A48, 78M35, 34B07, 34L25, 41A58, 47A40, 47A55 \\
{\bf Key words:}  slow light, frozen mode, defect, scattering, layered material, photonic crystal, electromagnetics, anisotropic, resonance
\end{mbox}
\vspace{3ex}

\hrule
\vspace{1.1ex}

\section{Introduction}

When a periodic layered medium carries electromagnetic fields near a frequency at which the energy velocity of a wave across the layers vanishes, it is said to be operating in the slow-light regime; or it is simply called a slow-light medium.
The typical slow-light regime for a layered medium occurs at the edge of a propagation frequency band, where the dispersion relation between frequency and transverse wavenumber (axial dispersion relation) is flat and the energy velocity of the electromagnetic (Bloch) mode therefore vanishes.  In this regime, transmission of energy from air into a slow-light medium is inefficient.  This inefficiency is overcome by nonreciprocal layered media that admit slow light at a frequency inside a propagation band.  Such media were devised by Figotin and Vitebskiy~\cite{FigotinVitebskiy2001,FigotinVitebskiy2003,FigotinVitebskiy2006} 
by using alternating layers of anisotropic and magnetic media.
Their work reveals the pathological nature of the scattering of a wave when it strikes the interface between a slow-light medium and another medium.  The modal components of scattered fields blow up near the frequency of vanishing energy velocity but cancel with one another to retain a bounded total field.  The reason lies in the coalescence of three (out of a total of four) Bloch electromagnetic modes into a single ``frozen mode" of zero flux across the layers, when the structure of rightward and leftward (or incoming and outgoing) modes, which is typical for scattering problems, breaks~down.

In the present investigation, a lossless slow-light medium plays the role of an ambient space, and electromagnetic fields in this medium are scattered by a defective layer, or slab (Fig.~\ref{fig:layered}).  The pathological scattering is particularly singular when the slab admits a ``guided frozen mode", the slow-light analogue of a guided slab mode that can be excited.  The modal coefficients of a scattering field admit expansions in negative and positive fractional powers---Laurent-Puiseux series---of the perturbation from the frequency and wavevector of a frozen mode.  This comes from the analytic perturbation of a $3\times3$ Jordan block associated with the $4\times4$ system of ordinary differential equations that the Maxwell equations reduce to for a layered medium.  The power laws associated with scattering of each mode from the right and left are summarized in Table~\ref{table:scattering} of sec.~\ref{sec:main} and detailed in Theorems~\ref{thm:leftscatteringreg}--\ref{thm:rightscatteringres}.  The proofs are delicate and utilize new results on the analytic perturbation of singular linear systems (Theorem~\ref{thm:localbandstrucandjordanstructure}, Proposition~\ref{prop:PuiseuxSeriesEigenvAsympt}, Lemma~\ref{lemma:Mtb}) and an interplay between the perturbed Jordan-form matrix, an indefinite inner product (the flux), and a flux-unitary transfer matrix across the slab.

Our results include an analysis of the combined effects of slow light and guided-mode resonance, which we call {\em resonant pathological scattering} (Theorems \ref{thm:leftscatteringres} and \ref{thm:rightscatteringres}).  In \cite{ShipmanWelters2012} we contrasted resonant scattering by a defect slab in two scenarios: when the ambient medium is anisotropic (but not in the slow-light regime) and when the ambient medium is operating in the slow-light regime.  Our work \cite{ShipmanWelters2013} gives a detailed analysis of the first scenario, particularly of the Fano-type resonance associated with a guided mode~\cite{FanJoannopoul2002,FanSuhJoannopoul2003,ShipmanVenakides2005,Shipman2010}.  This paper gives a detailed analysis of the second scenario.

A main result shows how the distinction between nonresonant and resonant pathological scattering is reflected cleanly in the linear-algebraic structure of the Maxwell equations at the parameters for which the ambient periodic medium admits a frozen mode (Theorem~\ref{thm:D}).  In the nonresonant case, the defect slab admits no guided frozen mode, and the quadratically growing mode associated with the $3\times3$ Jordan block can be excited.  In the resonant case, a guided frozen mode is excited, and the quadratically growing one cannot be excited by any incident field (Table~\ref{table:scattering}).

We emphasize that this article concerns a particular class of slow-light media.
Vanishing energy velocity can be due to various physical reasons, one of which is strong spatial dispersion produced by the periodicity of a composite structure.
The present study deals with periodic structures that are layered, or one-dimensional.
A slow-light regime is achieved by tuning not only the frequency $\omega$, but also the wavevector $\kk$ parallel to the layers.  At fixed values of these parameters, the Maxwell equations admit four electromagnetic modes (Floquet eigenmodes).  
In our study, at a certain value of $\kk$, a branch of the axial dispersion relation $\omega=\omega(k)$ between frequency $\omega$ and the wavenumber $k$ perpendicular to the layers has a stationary inflection point ($d\omega/dk = d^2\omega/dk^2 = 0, d^3\omega/dk^3 \not= 0$).  Thus the corresponding mode is frozen only in the sense that its energy velocity perpendicular to the layers is zero.  This is possible only for very special periodically layered structures.  A necessary condition is axial spectral asymmetry, or $\omega(k)\not=\omega(-k)$ \cite{FigotinVitebskiy2003a}.   Fabricating an axially asymmetric structure with a stationary inflection point is challenging; it is necessary, although not sufficient, to include anisotropic layers \cite{JBallatoABallato2005,JBallatoABallatoFigotinVitebskiy2005, FigotinVitebskiy2001, FigotinVitebskiy2003, FigotinVitebskiy2003a}.


\begin{center}
{\scalebox{0.55}{\includegraphics{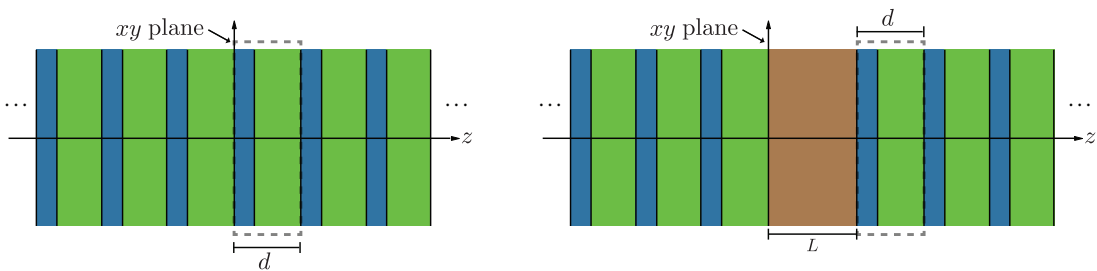}}}
\vspace{-1.8ex}
\captionof{figure}{\small
{\bfseries Left.}  A periodic layered medium.  Appropriate choices of anisotropic and magnetic layers render it a slow-light medium at specific frequency and wavevector parallel to the layers.
{\bfseries Right.}  A defective layer, or slab, embedded in an ambient periodic layered medium.
}
\label{fig:layered}
\end{center}

\subsection{Overview of applications of slow light}

Let us take a brief look at the scientific interest in slow light.
The activity in recent years is witnessed by a sizable and growing literature; browse, for example, \cite{Melloni12, FigotinVitebskiy2011, Cassan2011, WhiteKraussMelloni2010, BoydGauthier2011, Boyd2009, Baba2008, Krauss2008, KhurginTucker2008, Krauss2007, BabaMori2007, Ku2007, Boyd2006, FigotinVitebskiy2006} for an overview of the ideas.  Interest in slow light is due essentially to its application in producing two operations in optical devices.  First, it allows a time delay in an optical line to be tuned without physically changing the device length \cite{PovinelliJohnson2005}.  Second, it is used for the spatial compression of optical signals and energy together with increased dwell time, which can reduce device size and enhance light-matter interactions \cite{FigotinVitebskiy2005, FigotinVitebskiy2006, FigotinVitebskiy2007, SettleKrauss2007, Baba2008, BurrVitebskiy2013}. For the former, research into slow light is anticipated to be especially useful in the fields of optical telecommunication including all-optical data storage and processing, whereas the latter can dramatically enhance optical absorption, gain, phase shift \cite{SoljacicJohnson2002}, and nonlinearity \cite{SoljacicJoannopoulos2004, CorcoranKrauss2009, MonatKrauss2010, MonatEggleton2010, Boyd2011,ShramkovaSchuchinsk2014}, which could improve and miniaturize numerous optical devices such as amplifiers and lasers~(see \cite{FigotinVitebskiy2006, Baba2008}).  For instance, slow light has been investigated for absorption enhancement for potential applications in increasing solar cell efficiency \cite{Duche2008, John2008, ParkSeassal2009, Chen2011, DemesyJohn2012}, the enhancement of gain or spontaneous emission of lasers \cite{DowlingScalora1994, NohCao2008, NohCao2011, VitebskiyKottos2014}, the miniaturization of antennas or radio-frequency devices \cite{FigotinVolakis2005, MumcuVolakis2006, YargaVolakis2008, YargaVolakis2009, IrciVolakis2010, SertelVolakis2011}, \cite[Chap.~5]{Gross2011}, and the superamplification in high-power microwave amplifiers~\cite{Schamiloglu2012}.

Of particular interest recently has been the exploitation of enhanced light-matter interaction associated with slow light propagation for applications in gyroscopic or rotating systems and gyrotropic/magnetic systems, especially in magnetophotonic crystals that have spin-dependent photonic bandgap structure and localized modes of light with magnetic tunability \cite{Inoue2006, Inoue2013}. For instance, slow-light enhanced light-matter interactions have been proposed for increasing the sensitivity of optical gyroscopes \cite{Leonhardt2000, Shahriar2007, Peng2007, Fan2009, Schwartz2014}, enhancing rotary photon drag and image rotation based on a mechanical analog of the magnetic Faraday effect \cite{Faraday1846, Nienhuis1992, Padgett2006, ArnoldBoydPadgett2011, BarkerBoydPadgett2013, BarkerBoydPadgett2014a}, and enhancing magneto-optical (MO) effects, such as Faraday or Kerr rotation \cite{Zvezdin1997, Zvezdin2004, Zvezdin2009, Inoue2013b, Vinogradov2013}, which are important in applications using optical isolators, circulators, or other nonreciprocal devices \cite{Zvezdin1997, Pozar2011, Fan2011, FigotinVitebskiy2013, Alu2014a, Alu2014b}.  For instance, the enhancement of MO effects in multilayered structures such as one-dimensional magnetic photonic crystals (see~\cite{Vinogradov2013}) has been attributed to: (i) the localization of light near a defect and to those defect states (guided modes) with a high $Q$-factor (quality factor) associated with resonant transmission anomalies \cite{Inoue1998, Inoue1999, Inoue2000, SteelLevy2000, Levy2000, Inoue2006}; (ii) the enhanced light-matter interaction of slow light due to the low group velocity increasing interaction time \cite{Zvezdin2004, Zvezdin2009}; (iii) the Borrmann effect in photonic crystals, specifically relating to the frequency-dependent field redistribution and enhancement inside a photonic crystal unit cell \cite{Borrmann1941, VinogradovInoue2007, InoueRazdolskii2008, Inoue2008, Vinogradov2009}.

This study was carried out with these applications in mind---it incorporates scattering by a defect into a slow-light photonic crystal medium.  Such a system exhibits amplitude enhancement of fields scattered by the defect.  It also exhibits a new phenomenon, namely {\em the exitation of a guided frozen mode} which can be conceived as frozen-light analogy of a guided mode.

\subsection{Overview of analysis of scattering in the slow-light regime}

Although the analysis becomes quite technically involved, there is an underlying esthetic structure that we try to make clear in sec.~\ref{sec:pathological}:  The asymptotic nature of the scattering problem near the frozen-mode parameters $\kwz$ is reflected in the linear-algebraic relationships between two matrices defined at $\kwz$: a flux-hermitian matrix $\mathring{K}$ with a $3\times3$ Jordan block, and a flux-unitary matrix $\To$.  The next few paragraphs describe the framework in which the analysis takes place.

In a layered medium, the dielectric and magnetic tensors, $\epsilon$ and $\mu$, depend only on one spatial variable~$z$ and are independent of the other spatial variables $\xx=(x,y)$.  When considering EM fields of the form $(\EE(z),\HH(z))e^{i(\kk\cdot\xx-\omega t)}$, where the frequency $\omega$ and the wavevector $\kk$ parallel to the layers are fixed, the Maxwell equations reduce to a $4\!\times\!4$ system of ordinary differential equations (ODEs) for the components of the $E$ and $H$ fields directed parallel to the layers (Fig.~\ref{fig:layered}).  The matrix that transfers an EM field across one period of the layered structure is called the unit-cell transfer matrix or monodromy matrix, and it can be written as $e^{iKd}$, where $K$ is a matrix that is self-adjoint with respect to an indefinite energy-flux form coming from the electromagnetic Poynting vector.

Consider for a moment typical EM scattering in a periodically layered medium. 
At generic parameters $\kw$, the medium admits a two-dimensional space of rightward-directed modes and a two-dimensional space of leftward-directed modes, and these spaces together span the four-dimensional space of all fields $(\EE_\|,\HH_\|)$, identified with $\CC^4$.  Particularly interesting is the case in which the medium admits a pair of propagating modes, one rightward and one leftward, and a pair of evanescent modes, rightward and leftward; this is possible in anisotropic (but not isotropic) media.  By choosing a defect layer appropriately, one can construct guided modes that are exponentially confined to the layer and spectrally embedded in a propagation band of the ambient periodic medium.  The instability of these modes under perturbations of the system is associated with sharp resonance phenomena.  Detailed analysis of this resonance is carried out in our previous work~\cite{ShipmanWelters2013}.

But in a slow-light medium, the rightward/leftward mode structure breaks down at some specific pair $\kwz$.  The matrix $K=\mathring{K}$ admits a Jordan normal form with a one-dimensional block corresponding to a rightward positive-flux eigenmode; and a three-dimensional block corresponding to a zero-flux eigenmode, a mode of linear growth in $z$ and negative flux, and a quadratically growing mode.  The zero-flux mode is the {\em frozen mode}.

Under a generic perturbation of $\kw$ from $\kwz$, $K$ becomes diagonalizable, and the medium possesses the familiar eigenmode structure, where a 2D rightward space $V_+$ and a 2D leftward space $V_-$ each contains a propagating and an evanescent mode.  Take any one-parameter analytic perturbation described by an analytic function $(\kk(\eta),\omega(\eta))$ of a complex variable $\eta$ in a neighborhood of $\eta=0$, with $(\kk(0),\omega(0))=\kwz$.  In the limit $\eta\to0$, the rightward propagating mode in $V_+$ persists, but the other three modes---the rightward evanescent mode in $V_+$ and both leftward modes in $V_-$---all coalesce into the zero-flux mode of the three-dimensional Jordan block.  These three modes admit expansions that are Puiseux series in $\eta^\third$.

The key to pathological scattering by a defect layer lies in the limiting behavior of the rightward and leftward spaces $V_+$ and $V_-$ as $\eta\to0$.  While for $\eta\not=0$, $V_+$ and $V_-$ span $\CC^4$, their limiting spaces $\Vop$ and $\Vom$ do not.  The span $\Vo:=\Vop+\Vom$ is a three-dimensional subspace of $\CC^4$ characterized by the exclusion of quadratically growing fields.

Denote by $T(\eta)$ the analytic transfer matrix across the defect layer, and set $\To=T(0)$.  We show in this work that one of two distinctly different asymptotic scattering behaviors occurs as $\eta\to0$, depending on whether or not $\To$ transfers the limiting space $\Vo$ to itself or not.  In the generic case, the 3D spaces $T(\Vo)$ and $\Vo$ have a 2D intersection in $\CC^4$, and it turns out that necessarily the intersection of $\Vo$ with each the spaces $\To(\Vom)$ and $\To^{-1}(\Vop)$ is 1D and that this 1D intersection identifies in a precise way the limit as $\eta\to0$ of the field scattered by an energy-carrying wave incident upon the slab from the right or left.  The part of $\To(\Vom)$ or $\To^{-1}(\Vop)$ not in $\Vo$ produces excitation of the quadratic mode.

The special case that $\To(\Vo)=\Vo$ is the {\em resonant} scenario.  It turns out that this occurs exactly when $\To$ maps the zero-flux mode to itself.  This means that the structure supports a global field that has no energy flux but yet is not exponentially decaying.  This field is the slow-light analogue of a guided mode, which decays exponentially away from the defect in the typical scattering regime analyzed in~\cite{ShipmanWelters2013}; we call it the {\em guided frozen mode}.

The delicate nature of the asymptotics of the scattering problem as $\eta\to0$ is manifest in the Laurent-Puiseux expansions of the projections onto $V_+$ and $V_-$.  These projections blow up as $\eta^\mthirds$ (Propostion~\ref{prop:PuiseuxSeriesProjAsympt}).  The analysis reduces to a singular analytic perturbation problem of linear algebra in $\CC^4$; the players are

\smallskip
\noindent
\hspace*{1.5em}(1) an indefinite flux form $[\cdot\,,\,\cdot]$ with two-dimensional positive and negative spaces;\\
\hspace*{1.5em}(2) a flux-self-adjoint matrix $K(\eta)$ that is analytic in $\eta$;\\
\hspace*{1.5em}(3) a $3\times3$ Jordan block of $K(0)$;\\
\hspace*{1.5em}(4) an analytic flux-unitary matrix $T(\eta)$.
\smallskip

\noindent
At the center of the analysis is the collapse of the rightward and leftward spaces formed by the eigenvectors of $K(\eta)$ as $\eta\to0$ and the images of these spaces under $T(\eta)$.

\smallskip
Here is a brief overview of the exposition:

\medskip
{\bfseries Section~\ref{sec:slowlight}:  Slow light in layered media and scattering by a defect.}
The reduction of the equations of electrodynamics in a lossless periodic layered medium to an ODE depending on frequency and layer-parallel wavevector is reviewed.
Then the connection between the Jordan form of the Maxwell ODEs and a {frozen mode} within a spectral band is developed (Theorem~\ref{thm:localbandstrucandjordanstructure}).  We derive canonical Laurent-Puiseux series for the EM modes and the projections onto these modes when the frequency and wavevector are perturbed from those that admit a frozen mode (Propositions \ref{prop:PuiseuxSeriesEigenvAsympt} and \ref{prop:PuiseuxSeriesProjAsympt}).
Scattering of EM waves by a defect layer is reviewed, and we introduce the idea of a {guided frozen mode} of the defect.

\smallskip
{\bfseries Section~\ref{sec:pathological}: Pathological scattering in the slow-light limit.}
At frequency and wavevector that admit a frozen mode for a periodic ambient medium, we develop the
algebraic structure of the scattering problem, involving the Jordan form of the Maxwell ODEs, the transfer matrix across the defect slab, the indefinite flux form, and an algebraic characterization of a guided frozen mode of the defect (Theorem~\ref{thm:D}).
Sec.~\ref{sec:main} contains the main results, Table~\ref{table:scattering} and Theorems
\ref{thm:leftscatteringreg}--\ref{thm:rightscatteringres}.  They present a detailed analysis of delicate pathological scattering, both nonresonant and resonant.

\smallskip
{\bfseries Section~\ref{sec:analyticperturbationtheory}: Analytic perturbation for scattering in the slow-light regime.}
This section proves results used in the analysis that establish connections between the Jordan canonical structure of $K(0)$ and the band structure of a slow-light medium; and canonical Laurent-Puiseux series for the eigenvalues and eigenvectors of $K(\eta)$ and the associated projections.

\section{Slow light in layered media and scattering by a defect}\label{sec:slowlight} 

This section reviews electromagnetic propagation theory in layered media, and in the slow-light regime it further develops the mode structure and problem of scattering by a defect layer.  It also introduces the concept of a guided frozen mode of a defect layer.

\subsection{Electrodynamics of lossless layered media}\label{sec:CanonicalODEs}

The Maxwell equations for time-harmonic electromagnetic fields $\big(\EE(\rr),\HH(\rr),\DD(\rr),\BB(\rr)\big)e^{-i\omega t}$ ($\omega\not=0$) in linear anisotropic media without sources are
\begin{equation}\label{MaxwellEqsTimeHar}
\mat{1.1}{0}{\nabla\times}{-\nabla\times}{0}
\col{1.1}{\EE}{\HH}
=
-\frac{i\omega}{c}
\col{1.1}{\DD}{\BB},\qquad
\col{1.1}{\DD}{\BB}
=
\mat{1.1}{\epsilon}{0}{0}{\mu}
\col{1.1}{\EE}{\HH}
\end{equation}
(in Gaussian units), where $c$ denotes the speed of light in a vacuum. We consider non-dispersive and lossless media, which means that the dielectric permittivity $\epsilon$ and magnetic permeability $\mu$ are $3\times3$ Hermitian matrices that depend only on the spatial variable $\rr=(x,y,z)$.
A layered medium is one for which $\epsilon$ and $\mu$ depend only on~$z$.  Thus
\begin{align}\label{LosslessLayeredMedia}
\epsilon=\epsilon(z)=\epsilon(z)^*,\;\;\mu=\mu(z)=\mu(z)^*,
\end{align}
where $*$ denotes the Hermitian conjugate (adjoint) of a matrix. 
Typically, a layered medium consists of layers of different homogeneous materials.
We assume that the medium is passive.  This means that the frequency-independent and $z$-dependent material tensors $\epsilon, \mu$ are Hermitian matrix-valued functions which are bounded (measurable) and coercive, that is, for some constants $c_1, c_2>0$ 
\begin{align}\label{PassiveMedia}
c_1I\leq \epsilon(z) \leq c_2 I
\;\text{ and }\;
c_1I\leq \mu(z) \leq c_2 I
\end{align}
for all $z\in\RR$, where $I$ denotes the $3\times 3$ identity matrix.


\subsubsection{The canonical Maxwell ODEs}\label{sec:TheCanonicalMaxwellODEs}
Because of the translation invariance of layered media along the $xy$ plane, solutions of equation (\ref{MaxwellEqsTimeHar}) are sought in the form
\begin{equation}\label{TangentialPlaneWavesRepr}
\col{1.1}{\EE}{\HH}=
\col{1.1}{\EE(z)}{\HH(z)}
e^{i(k_1x+k_2y)}\,,
\end{equation}
in which
%
  $\kk=(k_1,k_2)$
%
is the wavevector parallel to the layers.
The harmonic Maxwell equations (\ref{MaxwellEqsTimeHar}) for this type of solution can be reduced to a system of ordinary differential equations for the tangential electric and magnetic field components (see \cite{Berreman1972} and \cite[Appendix]{ShipmanWelters2013}),
\begin{equation}\label{canonical}
  -iJ\, \frac{d}{dz} \psi(z) \,=\, A(z;\kk,\omega)\, \psi(z),
\end{equation}
in which
\[
\psi(z)=\left[E_1(z), E_2(z), H_1(z),H_2(z)\right]^T\,,
\]
\begin{equation}\label{J}
  J=
  \renewcommand{\arraystretch}{1.2}
\left[
  \begin{array}{cccc}
    0 & 0 & 0 & 1 \\
    0 & 0 & -1 & 0 \\
    0 & -1 & 0 & 0 \\
    1 & 0 & 0 & 0
  \end{array}
\right], \quad
J^*=J^{-1}=J\,.
\end{equation}
The $4\times 4$ matrix $A(z;\kk,\omega)$ is given in \cite[\S A5, A7]{ShipmanWelters2013} and its analytic properties are described in \cite[\S A8, A12, A13]{ShipmanWelters2013}, in particular, it is a Hermitian matrix for real $\kw$, $\omega\not=0$. We will refer to the ODEs in (\ref{canonical}) as the canonical \emph{Maxwell ODEs}.

Interface conditions for electromagnetic fields in layered media require that tangential electric and magnetic field components $\psi(z)$ be continuous across the layers \cite{Jackson1999}, which means $\psi$ is an absolutely continuous function of~$z$ satisfying the Maxwell ODEs (\ref{canonical}).
As was shown in \cite[Appendix]{ShipmanWelters2013}, every solution $\psi(z)$ of the Maxwell ODEs (\ref{canonical}) is the tangential EM field components of a unique electromagnetic field in the form (\ref{TangentialPlaneWavesRepr}) and vice versa.

\subsubsection{The transfer matrix} \label{sec:transfermatrix}

The initial-value problem
\begin{equation}
  -iJ\, \frac{d}{dz} \psi(z) \,=\, A(z;\kk,\omega)\, \psi(z),
  \qquad
  \psi(z_0)=\psi_0
\end{equation} 
for the Maxwell ODEs (\ref{canonical}) has a unique solution \cite[Appendix]{ShipmanWelters2013}
\begin{equation}
  \psi(z) \,=\, T(z_0,z)\, \psi(z_0)
\end{equation}
for each initial condition $\psi_0\in \CC^4$.   The $4\times 4$ matrix $T(z_0,z)$ is called the \emph{transfer matrix}.
It satisfies
\begin{align}
T(z_0,z)=T(z_1,z)T(z_0,z_1),\quad T(z_0,z_1)^{-1}=T(z_1,z_0),\quad T(z_0,z_0)=I,
\end{align}
for all $z_0,z_1,z\in \RR$.   As a function of $z$, it is absolutely continuous and, as a function of the wavevector-frequency pair $(\kk,\omega)\in \CC^2\times \CC\setminus\{0\}$, it is analytic~\cite{ShipmanWelters2013} (in the uniform norm as a bounded function on compact subsets of the $z$-axis).
Perturbation analysis of analytic matrix-valued functions and their spectrum is central to the study of scattering problems, particularly 
those involving guided modes, as in our previous work \cite{ShipmanWelters2013}, or slow light, as in this work or as discussed in \cite{FigotinVitebskiy2006,Welters2011,Welters2011a,ShipmanWelters2012}, for instance.

\subsubsection{Electromagnetic energy flux in layered media}

There is an indefinite inner product $[\cdot,\cdot]:\mathbb{C}^4\times \mathbb{C}^4\rightarrow\mathbb{C}$ associated
with the energy conservation law for the Maxwell ODEs (\ref{canonical}) coming from Poynting's theorem, for lossless layered media.  An important consequence, which we will discuss below, is that the transfer matrix is unitary with respect to this indefinite inner product for real frequencies and wavevectors. Moreover, as was shown in \cite{FigotinVitebskiy2006, ShipmanWelters2013} and as we shall see later in this paper, $[\cdot,\cdot]$ and the transfer matrix $T$ are fundamental to analyzing scattering problems in lossless layered media, especially problems involving guided modes and slow light.

\smallskip
{\bfseries The indefinite inner product.}\;
The indefinite sesquilinear energy-flux form associated with the canonical ODEs (\ref{canonical}) is
\begin{equation}\label{def:flux}
  [\psi_1,\psi_2] := \frac{c}{16\pi}(J\psi_1,\psi_2),\;\;\;\psi_1,\psi_2\in\mathbb{C}^4,
\end{equation}
where $(\cdot,\cdot)$ is the usual complex inner product in $\mathbb{C}^4$ with the convention of linearity in the second component and conjugate-linearity in the first, that is $(v,cw)=c(v,w)=(\bar cv,w)$.  The indefinite inner product $[\cdot,\cdot]$ will play a central role in the analysis of scattering, and its physical significance is that the corresponding quadratic form $\psi\mapsto[\psi,\psi]$ gives the time-averaged electromagnetic energy flux (the Poynting vector) in the normal direction of the layers \cite[\S III.B]{ShipmanWelters2013} as described below. The adjoint of a matrix $M$ with respect to $[\cdot,\cdot]$ is denoted by $M^\fadj$ and is called the {\em flux-adjoint} of $M$.  It is equal to $M^\fadj=J^{-1}M^*J$, where $M^*=\overline{M}^T$ is the adjoint of $M$ with respect to the standard inner product $(\cdot, \cdot)$. 

If $\omega$ is real and nonzero and $\kk$ is real, then the matrix $A$ is self-adjoint with respect to $(\cdot,\cdot)$, {\itshape i.e.}, $A^*=A$, and $JA$ is self-adjoint with respect to $[\cdot,\cdot]$, and $T$ is unitary with respect to $[\cdot,\cdot]$, {\itshape i.e.}, for any $\psi_1, \psi_2\in \mathbb{C}^4$,
\begin{equation*}
\renewcommand{\arraystretch}{1.1}
\left.
  \begin{array}{ll}
  {} [T\psi_1,\psi_2] = [\psi_1,T^{-1}\psi_2]\,, &\hspace*{1em} T^\fadj = T^{-1}\,.
  \end{array}
\right.
\end{equation*}
The flux-unitarity of $T$ follows from the energy conservation law \cite[Theorem 3.1]{ShipmanWelters2013} and expresses the conservation of energy in a $z$-interval $[z_0,z_1]$ through the principle of spatial energy-flux invariance for lossless media,
\begin{equation}\label{flux1}
  [\psi(z_1),\psi(z_1)]=[T(z_0,z_1)\psi(z_0),T(z_0,z_1)\psi(z_0)]=[\psi(z_0),\psi(z_0)]
  \quad \big((\kk,\omega) \,\text{ real}\big).
\end{equation}

\smallskip
{\bfseries The EM energy flux.}\;
For any time-harmonic EM field with spatial factor of the form (\ref{TangentialPlaneWavesRepr}) with real nonzero $\omega$ and real $\kk$, the time-averaged energy flux (as defined in \cite[\S6.9]{Jackson1999}) is given by the real part of the complex Poynting vector $\mathbf{S}=\frac{c}{8\pi}\EE\times\overline{\HH}$, namely,
\begin{align}
\operatorname{Re}\,\mathbf{S} = \frac{c}{8\pi}\operatorname{Re}\,\left(\EE\times\overline{\HH}\right),
\end{align}
which only depends on the spatial variable $z$, {\itshape i.e.},
\begin{align}
\operatorname{Re}\,\mathbf{S}(\rr)=\operatorname{Re}\,\mathbf{S}(z)=\frac{c}{8\pi}\operatorname{Re}\,\left(\EE(z)\times\overline{\HH(z)}\right).
\end{align}
The time-averaged energy flux in the normal direction ${\bf e}_3$ (pointing in the positive $z$-direction) of the layers is just the normal component of $\operatorname{Re}\,\mathbf{S}(\rr)$ which reduces to \cite[\S III.B]{ShipmanWelters2013}
\begin{align}
\operatorname{Re}\,\mathbf{S}(z)\cdot {\bf e}_3=[\psi(z),\psi(z)],
\end{align}
where $\psi(z)$ is the tangential EM field components of the EM field as described in sec. \ref{sec:TheCanonicalMaxwellODEs}, and this is a solution of the Maxwell ODE (\ref{canonical}). As such, the time-averaged energy flux in the normal direction ${\bf e}_3$ of the layers is constant, {\itshape i.e.},
\begin{align}
[\psi(z),\psi(z)]=[\psi(z_0),\psi(z_0)]
\end{align}
for all $z_0,z\in\mathbb{R}$. This is a statement of the energy conservation law for lossless layered media (cf.~\cite[Theorem 3.1]{ShipmanWelters2013}).

\subsection{A periodic ambient medium in the slow-light regime}\label{sec:periodic} 

Let the material coefficients $\epsilon$ and $\mu$ of the ambient space be periodic with period $d$, {\itshape i.e.},
\begin{align}
\epsilon(z+d)=\epsilon(z),\;\;\;\mu(z+d)=\mu(z).\label{periodicmaterials}
\end{align}
Then for the Maxwell ODEs (\ref{canonical}) the propagator $iJ^{-1}A(z;\kk,\omega)$ for the field $\psi(z)$ along the $z$-axis is $d$-periodic.  According to the Floquet theory (see, {\it e.g.}, \cite[Ch.\!~II]{YakubovichStarzhinsk1975}), the general solution of the Maxwell ODEs is pseudo-periodic, meaning that the transfer matrix $T(0,z)$ is the product of a periodic matrix and an exponential matrix,
\begin{eqnarray*}
 && T(0,z)=F(z)e^{iKz},
   \quad
  F(z+d) = F(z)\,,
  \quad
  F(0) = I\,,
 \\
  && \psi(z) \,=\, T(0,z) \psi(0) = F(z) e^{iKz} \psi(0)\,.
\end{eqnarray*}
For a real pair $(\kk,\omega)$ with $\omega\not=0$, $F(z)$ can be chosen to be flux-unitary and the constant-in-$z$ matrix $K$ to be flux-self-adjoint:
\begin{equation*}
  [F(z)\psi_1,\psi_2] = [\psi_1,F(z)^{-1}\psi_2]\,,
  \quad
  [K\psi_1,\psi_2] = [\psi_1,K\psi_2]\,.
\end{equation*}
In concise notation, $F(z)^\fadj=F(z)^{-1}$ and $K^\fadj=K$.  The matrix $T(0,d)=e^{iKd}$ is called the unit-cell transfer matrix or monodromy matrix for the sublattice $d\ZZ\subset\RR$. The flux-self-adjointness of $K$ implies that its eigenvalues come in conjugate pairs.  For any pair $\kwz\in\RR^2\times \RR/\{0\}$ the matrices $F$ and $K$ can be chosen such that they are analytic in a complex open neighborhood of $\kwz$.  This follows from the analytic properties of $A$ (see \cite[(A14), (A8)]{ShipmanWelters2013} and \cite[\S III.4.6]{YakubovichStarzhinsk1975}). We shall assume that this is the case in a complex open neighborhood $\mathcal{N}$ of a pair $\kwz$ and restrict ourselves to wavevector-frequency pairs $\kw$ in this neighborhood.

\subsubsection{Electromagnetic Bloch waves and the dispersion relation}\label{sec:EMBlochwaves}

Each solution to the eigenvalue problem
\begin{equation}
  K\psi_0=k\psi_0,\;\;\;k\in\mathbb{C},\;\;\;\psi_0\in\mathbb{C}^4,\;\;\;\psi_0\not=0
\end{equation}
corresponds to an electromagnetic Bloch wave (eigenmode) 
\begin{equation}
  \psi(z)=T(0,z)\psi_0,
\end{equation}
which is pseudo-periodic,
\begin{equation}
\psi(z+d)=e^{ikd}\psi(z),\;\;\;z\in \mathbb{R}.  
\end{equation}
The vector $\psi(0)=\psi_0$ is an eigenvector of the monodromy matrix $T(0,d)$ ({\itshape i.e.}, the transfer matrix over the periodic unit cell) with corresponding eigenvalue $e^{ikd}$ ({\itshape i.e.}, a Floquet multiplier) since
\begin{equation}
T(0,d)\psi(0)=e^{ikd}\psi(0).
\end{equation}

The dispersion relation $\omega=\omega(\kk,k)$ is a function (possibly multivalued) which is implicitly defined by the zero set of an analytic function $R$, specifically,
\begin{gather}
R(k,\kk,\omega)=\det\left(kI-K\kw\right),\;\;\;(k,\kk,\omega)\in\mathbb{C}\times\mathcal{N},\label{def:CharacteristicPolynomialIndicatorMatrix}\\
\omega=\omega(\kk,k) \Leftrightarrow R(k,\kk,\omega)=0.
\end{gather}
Moreover, for a given $\kw\in\mathcal{N}$, the set of (axial) wavenumbers modulo $2\pi/d$ are the set of $k\in\mathbb{C}$ such that $R(k,\kk,\omega)=0$.

The following theorem is important in the study of slow light since it gives the fundamental connection between the local band structure of the dispersion relation, particularly its stationary points, and the Jordan normal form of the indicator matrix $K$.  Although stated in the context of electrodynamics in periodic layered media for the $4\times 4$ indicator matrix $K$ (where $e^{iKd}=T(0,d)$),
this theorem is very general, essentially applying to any periodic differential-algebraic equations (DAEs) or ODEs that are of definite type (see \cite{Welters2011a}) and hence for indicator matrices of any finite dimension.  Parts (i)--(iii) were first proved for the unit-cell transfer matrix $T(\omega)=T(0,d;\omega, \kappa_0)$ near a stationary inflection point of the dispersion relation by Figotin and Vitebskiy \cite[Appendix B]{FigotinVitebskiy2003} under the three assumptions:
a) $T(\omega)$ is analytic near $\omega_0$;
b) $T(\omega_0)$ has a degenerate eigenvalue $e^{ik_0d}$ (Floquet multiplier) with algebraic multiplicity $3$ satisfying $\left|e^{ik_0d}\right|=1$;
c) the generic condition on the dispersion relation $\frac{\partial}{\partial \omega}\det(e^{ik_0d}-T(\omega))|_{\omega=\omega_0}\not = 0$ is satisfied.
Their analysis also proved that the geometric multiplicity of $T(\omega_0)$ corresponding to the eigenvalue $e^{ik_0d}$ must be $1$, that is, the Jordan normal form of $T(\omega_0)$ corresponding to this eigenvalue is a single $3\times 3$ Jordan block. A further generalization of this result for the monodromy matrix was proved by Welters \cite{Welters2011, Welters2011a} in the context of periodic DAEs and ODEs. The theorem below is an extension of these results and is crucial to the study of slow-light scattering problems, as will be seen later on.

\begin{theorem}[local band structure and Jordan structure]\label{thm:localbandstrucandjordanstructure}
Let $k^0$ be a real eigenvalue of the indicator matrix $K\kwz$, and let $g$ be the number of Jordan blocks (geometric multiplicity) corresponding to $k^0$ in the Jordan normal form of $K\kwz$.
Let $m_1\geq\cdots\geq m_g\geq 1$ be the dimensions of each of those Jordan blocks (partial multiplicities), and set $m=m_1+\cdots+m_g$ (algebraic multiplicity of $k^0$).  Then
\begin{enumerate}
\item[i.] The order of the zero of $R(k^0,\kk^0,\omega)$ at $\omega=\omega^0$ is $g$ and the order of the zero of $R(k,\kk^0,\omega^0)$ at $k=k^0$ is $m$, where $R(k,\kk^0,\omega)$ is the characteristic polynomial of the matrix $K(\kk^0,\omega)$ (see (\ref{def:CharacteristicPolynomialIndicatorMatrix}).
\item[ii.] All solutions $(k,\omega)$ to $R(k,\kk^0,\omega)=0$ in a complex neighborhood of $(k^0,\omega^0)$ are given by exactly $g$ (counting multiplicities) single-valued nonconstant real-analytic functions $\omega=\omega_1(\kk^0,k), \ldots, \omega=\omega_g(\kk^0,k)$ (the band functions) which satisfy $\omega_j(\kk^0,k^0)=\omega^0$.
\item[iii.] The band functions can be reordered such that the number $m_j$ is the order of the zero of the analytic function $\omega_j(\kk^0,k)-\omega^0$ at $k=k^0$, for $j=1,\ldots, g$.
\item[iv.] Let the ordered set $\{\sigma_j$, $j=1,\ldots,g\}$ denote the sign characteristic of the self-adjoint matrix $K\kwz$ with respect to the indefinite inner product $[\cdot,\cdot]$ corresponding to the eigenvalue $k^0$ and let $\omega_j^{(m_j)}(\kk^0,k^0)$ denote the $m_j$-th derivative of the $j$-th band function with respect to $k$ at $k=k^0$. Then the sign characteristic can be reordered such that $\sigma_j=\frac{\omega_j^{(m_j)}(\kk^0,k^0)}{\left\vert\omega_j^{(m_j)}(\kk^0,k^0)\right\vert}$ for $j=1,\ldots, g$. In particular, the sign characteristic is uniquely determined by the limits of the signs of the group velocities, {\itshape i.e.},
\begin{equation}
\sigma_j=\lim\limits_{k\downarrow k^0}\frac{\omega_j^{(1)}(\kk^0,k)}{\left\vert\omega_j^{(1)}(\kk^0,k)\right\vert},
\end{equation}
for $j=1,\ldots,g$.
\end{enumerate}
\end{theorem}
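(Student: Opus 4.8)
The plan is to treat the entire statement as an exercise in the analytic perturbation of the flux-self-adjoint matrix family $\omega\mapsto K(\kk^0,\omega)$ in the single variable $\omega$ near $\omega^0$, exploiting two structural facts: that $K(\kk^0,\omega)^\fadj=K(\kk^0,\omega)$ for real $\omega$, and that the family is of \emph{definite type}, meaning the energy-density form $\psi\mapsto[\,\partial_\omega K(\kk^0,\omega^0)\psi,\psi\,]$ is positive definite (inherited from passivity, $\epsilon,\mu>0$, through the analytic properties of $A$ recorded in \cite[\S A]{ShipmanWelters2013}). The second assertion of part (i) is immediate: $R(k,\kk^0,\omega^0)=\det(kI-K(\kk^0,\omega^0))$ is by definition the characteristic polynomial of $K(\kk^0,\omega^0)$, so its order of vanishing at $k=k^0$ is the algebraic multiplicity $m=m_1+\cdots+m_g$. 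All remaining content concerns the $\omega$-dependence.

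For the first assertion of part (i) I would set $M(\omega):=k^0 I-K(\kk^0,\omega)$, an analytic matrix function with $\dim\Null M(\omega^0)=g$. Passing to its local Smith (analytic-equivalence) form at $\omega^0$, exactly $g$ invariant factors vanish there, to orders $\kappa_1,\dots,\kappa_g\ge 1$ (the partial multiplicities of $M$ in $\omega$), whence $\mathrm{ord}_{\omega^0}R(k^0,\kk^0,\omega)=\mathrm{ord}_{\omega^0}\det M=\sum_j\kappa_j\ge g$. It then suffices to show every $\kappa_j=1$. If some $\kappa_j\ge 2$ there is a root function of order $\ge 2$, i.e.\ a nonzero $\psi_0\in\Null M(\omega^0)$ and a $\psi_1$ with $M(\omega^0)\psi_1+M'(\omega^0)\psi_0=0$; thus $M'(\omega^0)\psi_0\in\Range M(\omega^0)$. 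Since $k^0\in\RR$ and $K^\fadj=K$, the matrix $M(\omega^0)$ is flux-self-adjoint, so $\Range M(\omega^0)=(\Null M(\omega^0))^\fperp$; as $M'(\omega^0)=-\partial_\omega K(\kk^0,\omega^0)$, this membership forces $[\,\partial_\omega K(\kk^0,\omega^0)\psi_0,\psi_0\,]=0$, contradicting positivity of the energy-density form at $\psi_0\ne 0$. Hence $\sum_j\kappa_j=g$.

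For parts (ii) and (iii), Weierstrass preparation in $\omega$ (licensed by part (i)) factors $R=U\cdot P$ with $U$ a unit at $\kwz$ and $P$ a Weierstrass polynomial of degree $g$ in $(\omega-\omega^0)$ whose coefficients vanish at $k^0$. A priori its roots are Puiseux series in $k-k^0$; to rule out branching I would invoke that for real $k$ the frequencies solving the Bloch problem are eigenvalues of a self-adjoint problem, so by Rellich's theorem the $g$ roots are real-analytic single-valued functions $\omega_j(\kk^0,k)$ with $\omega_j(\kk^0,k^0)=\omega^0$; none is constant, since $P(k,\omega^0)$ (up to a unit, equal to $R(k,\kk^0,\omega^0)$) vanishes only to finite order $m$ at $k^0$. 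This gives (ii). For (iii), writing $R=U\prod_j(\omega-\omega_j(\kk^0,k))$ and evaluating at $\omega=\omega^0$ shows $\sum_j n_j=m$, where $n_j:=\mathrm{ord}_{k^0}(\omega_j(\kk^0,k)-\omega^0)$. The individual matching $n_j=m_j$ comes from the dual picture: by the definite-type Puiseux theory (Proposition~\ref{prop:PuiseuxSeriesEigenvAsympt}) each Jordan block of size $m_i$ splits, under the $\omega$-perturbation, into one Puiseux cycle of $m_i$ eigenvalues $k-k^0\sim c_i(\omega-\omega^0)^{1/m_i}$ with $c_i\ne 0$; inverting gives a single-valued analytic band $\omega-\omega^0\sim c_i^{-m_i}(k-k^0)^{m_i}$, so that block is carried by exactly one band function of vanishing order $m_i$. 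The correspondence is bijective, yielding (iii) after reordering.

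The genuinely flux-specific part, and the main obstacle, is (iv). First I would note the two displayed formulas are equivalent: since $\omega_j^{(1)}(\kk^0,k)\sim\tfrac{1}{(m_j-1)!}\omega_j^{(m_j)}(\kk^0,k^0)(k-k^0)^{m_j-1}$, the one-sided limit $k\downarrow k^0$ has the sign of $\omega_j^{(m_j)}(\kk^0,k^0)$, so it remains to prove $\sigma_j=\mathrm{sign}\,\omega_j^{(m_j)}(\kk^0,k^0)$. For a single Jordan block with chain $u_0,\dots,u_{m_j-1}$ the Gohberg--Lancaster--Rodman sign characteristic is $\sigma_j=\mathrm{sign}\,[u_0,u_{m_j-1}]$, the real nonzero top--bottom flux pairing. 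The link to the band is the generalized group-velocity identity obtained by differentiating $K(\kk^0,\omega_j(k))\psi(k)=k\psi(k)$ and pairing with $\psi(k)$, giving $\omega_j^{(1)}=[\psi(k),\psi(k)]\big/[\,\partial_\omega K\,\psi(k),\psi(k)\,]$ with positive denominator; thus $\mathrm{sign}\,\omega_j^{(m_j)}=\lim_{k\downarrow k^0}\mathrm{sign}\,[\psi(k),\psi(k)]$. The hard step is then the leading-order flux computation: using the Puiseux expansion of the Bloch mode $\psi(k)$ in the Jordan basis together with the standard first-order Jordan-perturbation formulas and the positivity of $[\,\partial_\omega K\,\cdot,\cdot\,]$, one shows that $[\psi(k),\psi(k)]$ has leading sign equal to $\mathrm{sign}\,[u_0,u_{m_j-1}]$. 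Reconciling the $m_j>1$ bookkeeping of the Jordan-chain pairings with this one-sided limit is where I expect the real work to lie.
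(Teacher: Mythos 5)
Your treatment of parts (i)--(iii) is essentially the paper's own argument in different packaging. The contradiction you derive from $M'(\omega^0)\psi_0\in\Range M(\omega^0)$ via $\Range M(\omega^0)=\big(\Null M(\omega^0)\big)^\fperp$ and positivity of the energy-density form is exactly the paper's proof that $\omega=\omega^0$ is a semisimple (hence algebraic multiplicity $=g$) eigenvalue of $L(k^0,\omega)=\frac{c}{16\pi}J(K(\omega)-k^0I)$, and your block-by-block inversion of the Puiseux cycles for (iii) is the ``completely regular splitting'' statement that the paper imports from \cite[Corollary 4.3 and the proof of Theorem 4.2]{HrynivLancaster1999}. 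One caveat: you cannot cite Proposition~\ref{prop:PuiseuxSeriesEigenvAsympt} for that splitting, since it treats only a single $3\times3$ block under the generic condition (\ref{GenericCondition}), whereas (iii) is asserted for an arbitrary Jordan structure of $K\kwz$ at $k^0$; the general CRS property for a definite-type eigenvalue is an external input you must either quote correctly or prove. Your appeal to Rellich for (ii) is likewise a stand-in for the finite-dimensional analyticity result for definite-type eigenvalues of Hermitian analytic matrix functions, which is the clean statement actually needed since $L(k^0,\omega)$ is not linear in $\omega$.

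The genuine gap is part (iv). Your reductions are all correct --- the equivalence of the two displayed formulas, the identity $\omega_j^{(1)}=[\psi,\psi]\big/[\psi,\partial_\omega K\,\psi]$ with positive denominator, and the identification $\sigma_j=\mathrm{sign}\,[u_0,u_{m_j-1}]$ for a canonically chosen chain --- but the decisive step, namely that $\lim_{k\downarrow k^0}\mathrm{sign}\,[\psi(k),\psi(k)]=\mathrm{sign}\,[u_0,u_{m_j-1}]$, is precisely what you defer as ``where the real work lies'' and never carry out. That computation is not routine: one must expand the Bloch eigenvector along the band after inverting the $1/m_j$-cycle, track which Jordan-chain pairings survive at leading order in $[\psi(k),\psi(k)]$, and deal with the case of several blocks of equal size, where only the signature of the matrix of top--bottom pairings is invariant rather than the individual signs. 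This is exactly the content of \cite[Theorem 5.11.2]{GohbergLancasterRodman2005} combined with \cite[Theorem 5.1]{HrynivLancaster1999}, which is how the paper closes the argument (passing through the analytic eigenvalues $\zeta_j(k)$ of the Hermitian matrix $-L(k,\omega^0)$ rather than through the flux of the Bloch modes). As written, your proposal establishes (i)--(iii) modulo the CRS citation and does not establish (iv).
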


\subsubsection{Stationary points of the dispersion relation and the slow-light regime}

The slow-light regime can be understood as a small complex open neighborhood of a real pair $\kwz\in \mathbb{R}^2\times \mathbb{R}\!\setminus\!\{0\}$ such that the axial dispersion relation $\omega=\omega(\kk^0,k)$ has a stationary point at $(k^0,\omega^0)$; in other words, there exists a band function $\omega=\omega_j(\kk^0,k)$ with $\omega^0=\omega_j(\kk^0,k^0)$, where $k^0$ is an real eigenvalue of $K\kwz$, such that the axial group velocity is zero at $k=k^0$, {\itshape i.e.}, $\frac{\partial \omega_j}{\partial k}(\kk^0,k^0)=0$. Any electromagnetic Bloch wave $\psi(z)$ with $\psi(0)$ an eigenvector of $K(\kk^0,\omega_j(\kk^0,k))$ corresponding to the real eigenvalue $k$ with $0<|k-k_0|\ll 1$ is called a slow wave or slow light field with frequency $\omega_j(\kk^0,k)$, tangential wavevector $\kk^0$, and axial wavenumber $k$.

From Theorem \ref{thm:localbandstrucandjordanstructure}, we see that the slow-light regime corresponds to a real pair $\kwz\in \mathbb{R}^2\times \mathbb{R}\!\setminus\!\{0\}$ at which the indicator matrix $K$ is non-diagonalizable, having a real eigenvalue with a corresponding Jordan block of dimension at least $2$.
Our study concerns a $3\times3$ Jordan block, although the type of anisotropic media assumed in this paper allows for any Jordan normal form and sign characteristics that are possible for a $4\times 4$ matrix that is flux-self-adjoint (see \cite[Corollary 5.2.1; Table, p.~90]{GohbergLancasterRodman2005}).

\smallskip
{\bfseries Assumptions on the unperturbed Jordan structure.}\;
At a real pair $\kwz$, assume that $K$ has a real eigenvalue $k_0$ whose corresponding Jordan normal form contains a $3\times 3$ Jordan block.  Set $\mathring{K}:=K\kwz$, and assume that the eigenvalues $\kplus$ and $\kzero$ of $\mathring{K}$ are distinct and that the energy-flux form $[\cdot,\cdot]$ is positive-definite on the eigenspace of $\mathring{K}$ corresponding to $\kplus$. 

We introduce a one-parameter perturbation $(\kk(\eta),\omega(\eta))$ of $\kwz$ that is analytic in a complex neighborhood of $\eta=0$, is real for real $\eta$, and satisfies $(\kk(0),\omega(0))=\kwz$.  For example, this may be a small perturbation of the frequency alone, {\itshape i.e.}, $\omega(\eta)=\omega^0+\eta$ and $\kk(\eta)=\kk^0$. We use the notational convention that if $f\kw$ is any function of $\kw$ defined at $(\kk(\eta),\omega(\eta))$, then $f(\eta):=f(\kk(\eta),\omega(\eta))$.

It follows from these assumptions that $K(\eta)$ is an analytic matrix-valued function of $\eta$ in a complex neighborhood of $\eta=0$ and is flux-self-adjoint for real $\eta$ near $\eta=0$ with $K(0)=\mathring{K}$. This allows one to apply analytic perturbation theory for matrices to the study of how the spectrum and Jordan structure of $K(\eta)$ changes as a function of a single perturbation parameter $\eta$.

\subsubsection{Algebraic structure of the slow-light ambient medium}\label{sec:UnidirAmbMed}

There are three distinguished bases of $K(\eta)$ that are natural for the study of the scattering problem: a Jordan basis at $\eta=0$; a basis of eigenvectors at $\eta\not=0$; a modification of the latter that incorporates the normal mode of $\mathring K$ that is linear in $z$.

\smallskip
{\bfseries Jordan basis.}\;
At $\eta=0$, the theory of indefinite linear algebra \cite{GohbergLancasterRodman2005} provides a Jordan basis
$\{e_+,e_0,e_1,e_2\}\subset\CC^4$ with respect to which the propagator $\mathring{K}=K(0)$ and the energy-flux form $[\cdot,\cdot]$ have the form
\begin{equation}\label{jordanform}
  \widetilde {\mathring{K}} \,=\,
  \renewcommand{\arraystretch}{1.1}
\left[
  \begin{array}{cccc}
    \kplus & 0 & 0 & 0 \\
    0 & \kzero & 1 & 0 \\
    0 & 0 & \kzero & 1 \\
    0 & 0 & 0 & \kzero
  \end{array}
\right]\,,
\quad
  \big([e_i,e_j]\big)_{i,j\in\{+,0,1,2\}} \,=\,
  \renewcommand{\arraystretch}{1.1}
\left[
  \begin{array}{cccc}
    1 & 0 & 0 & 0 \\
    0 & 0 & 0 & -1 \\
    0 & 0 & -1 & 0 \\
    0 & -1 & 0 & 0
  \end{array}
\right]\,
\end{equation}
with $\kplus$ and $\kzero$ distinct real numbers.
Hence, it follows that the dual basis to $\{e_+,e_0,e_1,e_2\}$ is
\begin{equation}\label{dualbasis}
  \left\{ \varepsilon^+,\, \varepsilon^0,\, \varepsilon^1,\, \varepsilon^2  \right\}
  \,=\, \left\{ [e_+,\cdot],\,-[e_2,\cdot],\,-[e_1,\cdot],\,-[e_0,\cdot] \right\}\,.
\end{equation}
The columns of the matrix $e^{i\widetilde {\mathring{K}}z}$ express the generalized eigenmodes as combinations of the $e_i$\,:
\begin{equation}\label{jordanpropagation}
  e^{i\widetilde {\mathring{K}}z} =
  \renewcommand{\arraystretch}{1.0}
\left[
  \begin{array}{cccc}
    1 & 0 & 0 & 0 \\
    0 & 1 & iz & -z^2/2 \\
    0 & 0 & 1 & iz \\
    0 & 0 & 0 & 1
  \end{array}
\right]
\renewcommand{\arraystretch}{1.0}
\left[
  \begin{array}{cccc}
    e^{i\kplus z} & 0 & 0 & 0 \\
    0 & e^{i\kzero z} & 0 & 0 \\
    0 & 0 & e^{i\kzero z} & 0 \\
    0 & 0 & 0 & e^{i\kzero z}
  \end{array}
\right]\,.
\end{equation}
The first column of $e^{i\widetilde {\mathring{K}}z}$, or $e_+e^{ik_+z}$, is the {\em rightward propagating eigenmode}; it carries positive energy flux $[e_+,e_+]=1$, and the second column, $e_0e^{ik_0z}$, is the {\em zero-flux eigenmode} with $[e_0,e_0]=0$.  The third column, $(e_1+ize_0)e^{ik_0z}$, is called the {\em linear mode}, which, as seen by the interaction matrix in~(\ref{jordanform}), carries negative energy flux $-1$; and the fourth column $(e_2+ize_1-z^2e_0/2)e^{ik_0z}$ is the {\em quadratic mode}, which carries zero energy~flux.

The Maxwell field corresponding to the zero-flux eigenmode is called a {\em frozen mode} of the periodic medium~\cite{FigotinVitebskiy2003a}:
\begin{equation}
  F(z) e_0 e^{ik_0z} \qquad \text{(frozen mode of periodic medium).}
\end{equation}
A  medium that admits a frozen mode is commonly called ``unidirectional" because it has only one propagating eigenmode (corresponding to $e_+$), which carries energy across the layers in one direction~\cite{FigotinVitebskiy2013,FigotinVitebskiy2003}.  However, one should keep in mind that a generalized eigenmode, namely the linear mode $e_1+ize_0$, carries energy in the other direction.  Thus from the point of view of energy flux, the medium is not truly unidirectional.

\smallskip
{\bfseries Basis of rightward and leftward eigenvectors.}\;
Under a generic analytic perturbation about $\eta=0$, where by generic we mean
\begin{align}
[e_0,K^{\prime}(0)e_0]\not=0,\label{GenericCondition}
\end{align}
the matrix $K(\eta)$ admits an eigenvector basis that determines the rightward and leftward modes that define the problem of scattering by a slab, discussed below in sec.~\ref{sec:scatteringdefectlayer}.
Propositions~\ref{prop:PuiseuxSeriesEigenvAsympt} and~\ref{prop:PuiseuxSeriesProjAsympt} are proved in sec.~\ref{sec:analyticperturbationtheory}.

\begin{proposition}\label{prop:PuiseuxSeriesEigenvAsympt}
The matrix $K(\eta)$ has four distinct eigenvalues $\krp(\eta),\klp(\eta),\kre(\eta),\kle(\eta)$ for $0<|\eta|\ll~\!1$. The eigenvalue $\krp$ is an analytic function of $\eta$ which is real for real $\eta$, and the eigenvalues $\klp,\kre,\kle$ are the branches of a convergent Puiseux series in $\eta^\third$,
\begin{equation}\label{eigenvalues}
  \renewcommand{\arraystretch}{1.2}
\left.
  \begin{array}{l}
  \krp = \kplus + \O(\eta)\,, \\
  \klp = \kzero + \eta^\third \kone + \O(\eta^\thirds)\,, \\
   \kre = \kzero + \eta^\third \zeta\kone + \O(\eta^\thirds)\,, \\
   \kle = \kzero + \eta^\third \zeta^2\kone + \O(\eta^\thirds)\,,
  \end{array}
\right.
\end{equation}
as $\eta\rightarrow 0$, where $\zeta=e^{2\pi i/3}$ and $\kone$ is the number
\begin{align}\label{k1}
\kone=-\sqrt[3]{[e_0,K^{\prime}(0)e_0]}\not=0.
\quad\text{(for the real cube root)}
\end{align}
(The cube root can be taken to be real because the flux-self-adjointness of $K(\eta)$ makes $[e_0,K^{\prime}(0)e_0]$ real.)
The eigenvalue $\klp(\eta)$ is real for real $\eta^\third$ near $\eta=0$.  An eigenvector $\vrp$ of $\krp$ can be chosen to be analytic in $\eta$, and for the eigenvalues $\klp,\kre,\kle$, eigenvectors can be chosen to be the branches of a convergent Puiseux series in $\eta^\third$ such that
\begin{equation}\label{eigenvectors}
  \renewcommand{\arraystretch}{1.1}
\left.
  \begin{array}{l}
    \vrp \,=\, e_+ + \O(\eta)\,, \\ 
    \vre \,=\, e_0 + \eta^\third\zeta w_1 + \eta^\thirds\zeta^2w_2 + \O(\eta)\,, \\
    \vlp \,=\, e_0 + \eta^\third w_1 + \eta^\thirds w_2 + \O(\eta)\,, \\
    \vle \,=\, e_0 + \eta^\third\zeta^2 w_1 + \eta^\thirds \zeta w_2 + \O(\eta)\,,   
  \end{array}
\right.
\end{equation}
as $\eta\rightarrow 0$, for some $w_1,w_2\in \mathbb{C}^4$. These coefficients have the following properties
\begin{equation}\label{LowOrderExpanEigenvecSpanRel}
  \renewcommand{\arraystretch}{1.1}
\left.
  \begin{array}{lll}
     w_1\in \sspan\{e_0,e_1\}, & w_1\not\in \sspan\{e_0\}, & [e_1,w_1]=-\kone,\\
     w_2\in \sspan\{e_0,e_1,e_2\}, & w_2\not\in \sspan\{e_0,e_1\}, & [e_0,w_2]=-\kone^2.
  \end{array}
\right.
\end{equation}
By (\ref{dualbasis}), one has $w_1=k_1e_1+ce_0$ and $w_2=k_1^2e_2+ae_1+be_0$ for some constants $a$, $b$, and $c$.
The eigenvectors in (\ref{eigenvectors}) can also be chosen to have the additional property that for real $\eta^{\third}$ near $\eta=0$, the flux interactions between the eigenvectors $\{\vrp,\vre,\vlp,\vle\}$ are given by the matrix
\begin{equation}\label{fluxmatrix2}
  \left[v_i(\eta),v_j(\eta)\right]_{i,j\in\{+p,+e,-p,-e\}} =
    \renewcommand{\arraystretch}{1.1}
\left[
  \begin{array}{cccc}
    1 & 0 & 0 & 0 \\
    0 & 0 & 0 & -C\zeta\eta^\thirds \\
    0 & 0 & -C\eta^\thirds & 0 \\
    0 & -C\zeta^2\eta^\thirds & 0 & 0
  \end{array}
\right]
\,,
\end{equation}
in which
\begin{equation}
C=3\kone^2>0.\label{UniqueConstantInNormalizedFluxFormPerturbed}
\end{equation}
\end{proposition}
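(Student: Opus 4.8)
The plan is to treat Proposition~\ref{prop:PuiseuxSeriesEigenvAsympt} as a singular analytic perturbation of the flux-self-adjoint family $K(\eta)$ at its $3\times3$ Jordan block, first peeling off the simple eigenvalue $\kplus$ and then resolving the block by order-by-order matching in $t:=\eta^\third$. Since $\kplus\neq\kzero$ and the $\kplus$-eigenspace is one-dimensional and flux-positive, standard perturbation theory for an isolated simple eigenvalue applies: the Riesz projection $P_+(\eta)=\frac{1}{2\pi i}\oint(sI-K(\eta))^{-1}\,ds$ around $\kplus$ is analytic and rank one, $\krp(\eta)=\operatorname{tr}(K(\eta)P_+(\eta))$ is analytic, and $\vrp(\eta):=P_+(\eta)e_+$ is an analytic eigenvector with $\vrp(0)=e_+$. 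Flux-self-adjointness of $K(\eta)$ for real $\eta$ forces $\krp$ real and lets us normalize $[\vrp,\vrp]=1$ (a flux-positive simple eigenvalue stays real; see \cite{GohbergLancasterRodman2005}). The complementary projection $I-P_+(\eta)$ then supplies a three-dimensional analytic invariant subspace on which $K(0)-\kzero I$ is the single nilpotent block $N$ with chain $(\mathring K-\kzero)e_1=e_0,\ (\mathring K-\kzero)e_2=e_1$, and the task reduces to perturbing $N$.

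Next I would pin down the Puiseux structure and the leading coefficient. The reduced characteristic polynomial $p(\mu,\eta)=\det(\mu I-(K(\eta)-\kzero I)|_{3\mathrm D})=\mu^3+a_2(\eta)\mu^2+a_1(\eta)\mu+a_0(\eta)$ has coefficients analytic in $\eta$ with $p(\mu,0)=\mu^3$. Because $N$ is strictly upper triangular, its trace and all principal $2\times2$ minors vanish, so $a_2,a_1=\O(\eta)$, while the $\eta$-derivative of $\det(K(\eta)-\kzero I)|_{3\mathrm D}$ at $\eta=0$ equals $\operatorname{tr}(N^2K'(0)|_{3\mathrm D})=-[e_0,K'(0)e_0]$ by Jacobi's adjugate formula, using $\mathrm{adj}(N)=N^2$ (the rank-one map $e_2\mapsto e_0$) and the dual-basis identity $\varepsilon^2=-[e_0,\cdot]$ from (\ref{dualbasis}). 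Under the generic condition (\ref{GenericCondition}) this is nonzero, so the Newton polygon of $p$ is the single segment from $(3,0)$ to $(0,1)$ of slope $1/3$: the three eigenvalues form one Puiseux $3$-cycle with leading term $\kzero+\kone\zeta^j\eta^\third$ and $\kone^3=-[e_0,K'(0)e_0]$, matching (\ref{eigenvalues}) and (\ref{k1}). Differentiating $K(\eta)^\fadj=K(\eta)$ gives $K'(0)^\fadj=K'(0)$, whence $[e_0,K'(0)e_0]$ is real, $\kone$ is its real cube root, and the three distinct leading deviations make the eigenvalues distinct for $0<|\eta|\ll1$; convergence and the single-cycle claim may also be quoted from the general single-Jordan-block theory in \cite{Welters2011a}. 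The one real branch is $\klp$, with $\kre=\overline{\kle}$, so $\klp$ is real for real $\eta^\third$.

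For the eigenvectors I would insert $v=e_0+tw_1+t^2w_2+\O(t^3)$ and $k=\kzero+\kone t+\O(t^2)$, with $t=\eta^\third$, into $K(\eta)v=kv$ and match powers of $t$. Orders $t^1$ and $t^2$ read $(\mathring K-\kzero)w_1=\kone e_0$ and $(\mathring K-\kzero)w_2=\kone w_1+k_2e_0$; solving along the Jordan chain gives $w_1=\kone e_1+ce_0$ and $w_2=\kone^2e_2+ae_1+be_0$, which yields the span conditions and, from (\ref{jordanform}), $[e_1,w_1]=-\kone$ and $[e_0,w_2]=-\kone^2$ as in (\ref{LowOrderExpanEigenvecSpanRel}). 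The solvability condition at order $t^3$, namely that $\kone w_2+k_2w_1+k_3e_0-K'(0)e_0$ lie in $\Range(N)=\sspan\{e_0,e_1\}$, reproduces $\kone^3=-[e_0,K'(0)e_0]$ and so is consistent with the eigenvalue computation. Replacing $t$ by $\zeta t$ and $\zeta^2t$ produces the conjugate branches $\vre,\vle$, giving exactly the form (\ref{eigenvectors}).

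Finally, the flux matrix (\ref{fluxmatrix2}) follows by combining the orthogonality of eigenvectors of a flux-self-adjoint matrix, $[v_i,v_j]=0$ unless $\overline{k_i}=k_j$, with the reality pattern $\krp,\klp\in\RR$ and $\kre=\overline{\kle}$: this forces all the displayed zeros, the conjugacy between the $(+e,-e)$ entries, and $[\vrp,\vrp]=1$. Expanding the surviving pairings with $[e_0,e_0]=0$, $[e_1,e_1]=-1$, $[e_0,e_2]=-1$ gives $[\vlp,\vlp]=-3\kone^2t^2+\O(t^3)$ and $[\vre,\vle]=-3\kone^2\zeta\,t^2+\O(t^3)$, identifying $C=3\kone^2>0$ as in (\ref{UniqueConstantInNormalizedFluxFormPerturbed}); the remaining scalar Puiseux rescaling along the branch orbit then clears higher-order corrections to make (\ref{fluxmatrix2}) exact. \textbf{The main obstacle} is the rigorous control of the Puiseux structure---confirming a single slope-$1/3$ Newton segment and the convergence of the eigenvector series---together with the bookkeeping that pins the flux form to its exact normal form; the leading-coefficient computations themselves are routine once the identity $\mathrm{adj}(N)=N^2$ and the dual-basis description (\ref{dualbasis}) are in hand.
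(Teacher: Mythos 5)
Your proposal is correct in substance but takes a genuinely different technical route from the paper for the core eigenvalue asymptotics. The paper never works with the characteristic polynomial of the reduced $3\times3$ block directly: it forms the Hermitian analytic matrix function $L(k,\eta)=\frac{c}{16\pi}J(K(\eta)-kI)$ and invokes the Hryniv--Lancaster machinery --- the eigenvalue $\eta=0$ of $L(\kzero,\eta)$ is of definite type by the generic condition (\ref{GenericCondition}), hence semisimple of multiplicity one, so all nearby spectrum is described by a single real-analytic function $\eta=\eta(k)$; the completely regular splitting property then gives $\eta(k)=c_3(k-\kzero)^3+\cdots$ with $c_3\not=0$, and inverting this yields the three Puiseux branches at once, with reality of $\klp$ for real $\eta^{\third}$ falling out of the real-analyticity of $\eta(k)$. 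Your route --- peel off $\krp$ with a Riesz projection, then apply the Newton polygon to the reduced characteristic polynomial with the leading coefficient computed via Jacobi's formula and $\mathrm{adj}(N)=N^2$ --- is more elementary and makes the cube-root structure and the identity $\kone^3=-[e_0,K'(0)e_0]$ completely explicit, but you must separately argue reality of $\klp$ (your conjugate-pair argument is fine) and you still have to import convergence of the eigenvector Puiseux series from the literature, exactly as the paper does via Lancaster--Markus--Zhou and Welters. The order-by-order eigenvector recursion and the identification of $w_1,w_2$ coincide with the paper's. On the one point you flag as terse --- making (\ref{fluxmatrix2}) \emph{exact} --- the paper's mechanism is worth internalizing: all nonzero pairings among $\{\vlp,\vre,\vle\}$ are values of the single analytic function $\delta\mapsto[\nu(\bar\delta),\nu(\delta)]=-3\kone^2\delta^2 g(\delta)$ with $g(0)=1$ and $g$ real on reals, so dividing $\nu$ by the real analytic square root of $g$ normalizes every entry simultaneously; your ``scalar Puiseux rescaling along the branch orbit'' is precisely this, and it is essential that the rescaling be by one function of $\delta=\eta^{\third}$ evaluated at the rotated arguments $\zeta^h\delta$, not three independent normalizations.
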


Notice in Proposition~\ref{prop:PuiseuxSeriesEigenvAsympt} that the formula (\ref{fluxmatrix2}) for the flux interactions between the eigenvectors is {\em exact}---there are no  corrections of higher order in~$\eta$.

The rightward and leftward spaces are defined by
\begin{eqnarray}
  && V_+ = V_+(\eta) = \sspan\{\vrp,\vre\}\,,  \qquad \text{(rightward space)} \label{rightwardspace} \\
  && V_- = V_-(\eta) = \sspan\{\vlp,\vle\}\,.  \qquad \text{(leftward space)} \label{leftwardspace}  
\end{eqnarray}

\begin{proposition}\label{prop:PuiseuxSeriesProjAsympt}
In a complex neighborhood of $\eta=0$, the eigenprojection $\Prp(\eta)$ for $K(\eta)$ corresponding to the eigenvalue $\krp(\eta)$ is analytic and the eigenprojections $\Plp(\eta)$, $\Pre(\eta)$, $\Ple(\eta)$ of $K(\eta)$ corresponding to the eigenvalues $\klp(\eta)$, $\kre(\eta)$, $\kle(\eta)$ are the branches of a convergent Laurent-Puiseux series in $\eta^{\third}$, which for real $\eta^{\third}$ near $\eta=0$ are
\begin{eqnarray}
\Prp &=& \frac{[\vrp,\cdot\;]}{[\vrp,\vrp]}\vrp=[e_+,\cdot]e_++O(\eta) \label{Prp}\\
\Plp &=& \frac{[\vlp,\cdot\;]}{[\vlp,\vlp]}\vlp \label{Plp}\\
       &=& \frac{-1}{C}\left(\eta^{-\thirds}[e_0,\cdot]e_0+\eta^{-\third}\left([w_1,\cdot]e_0+[e_0,\cdot]w_1\right)+\left([w_2,\cdot]e_0+[w_1,\cdot]w_1+[e_0,\cdot]w_2\right)\right)+O(\eta^\third) \notag\\
\Pre &=& \frac{[\vle,\cdot\;]}{[\vle,\vre]} \vre \label{Pre}\\
       &=& \frac{-1}{C}\left(\zeta\eta^{-\thirds}[e_0,\cdot]e_0+\zeta^{2}\eta^{-\third}\left([w_1,\cdot]e_0+[e_0,\cdot]w_1\right)+\left([w_2,\cdot]e_0+[w_1,\cdot]w_1+[e_0,\cdot]w_2\right)\right)+O(\eta^\third) \notag\\
\Ple &=& \frac{[\vre,\cdot\;]}{[\vre,\vle]} \vle \label{Ple}\\
       &=& \frac{-1}{C}\left(\zeta^{2}\eta^{-\thirds}[e_0,\cdot]e_0+\zeta\eta^{-\third}\left([w_1,\cdot]e_0+[e_0,\cdot]w_1\right)+\left([w_2,\cdot]e_0+[w_1,\cdot]w_1+[e_0,\cdot]w_2\right)\right)+O(\eta^\third) \notag
\end{eqnarray}
as $\eta\rightarrow 0$.
\end{proposition}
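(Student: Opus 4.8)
The plan is to produce each eigenprojection in two stages: first the closed rank-one formula (the first equality in each of (\ref{Prp})--(\ref{Ple})), and then the explicit Laurent--Puiseux expansion by inserting the eigenvector series of Proposition~\ref{prop:PuiseuxSeriesEigenvAsympt}. The first stage rests on flux-biorthogonality. For real $\eta$ the matrix $K(\eta)$ is flux-self-adjoint, so $[Kv_i,v_j]=[v_i,Kv_j]$ forces $(\overline{\lambda_i}-\lambda_j)[v_i,v_j]=0$; hence two eigenvectors have vanishing flux product unless their eigenvalues are complex conjugates. First I would read off from (\ref{eigenvalues}) that, for real $\eta^\third$ near $0$, the eigenvalues $\krp$ and $\klp$ are real while $\kre$ and $\kle$ are complex conjugates of each other (using that $\kone$ is real and $\overline{\zeta}=\zeta^2$). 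This fixes the conjugation pairing recorded in (\ref{fluxmatrix2}): $\vrp$ pairs only with itself, $\vlp$ only with itself, and $\vre$ only with $\vle$.

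Given this pairing, the eigenprojection $P_i$ is the unique rank-one map with $P_iv_i=v_i$ and $P_iv_j=0$ for $j\neq i$, and it may be written $P_i=\frac{[u_i,\cdot\,]}{[u_i,v_i]}\,v_i$, where $u_i$ is the eigenvector whose flux product with $v_i$ is nonzero. Reading the rows of (\ref{fluxmatrix2}): the functional $[\vrp,\cdot\,]$ already annihilates the other three eigenvectors and $[\vrp,\vrp]=1$, giving (\ref{Prp}); likewise $[\vlp,\cdot\,]$ gives (\ref{Plp}); and since the only nonzero entries in the $+e$ and $-e$ rows are $[\vre,\vle]$ and $[\vle,\vre]$, the functionals selecting $\vre$ and $\vle$ are $[\vle,\cdot\,]$ and $[\vre,\cdot\,]$, giving (\ref{Pre}) and (\ref{Ple}). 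The analyticity of $\Prp$ I would obtain separately from Kato's theory: $\krp$ is a simple eigenvalue that stays isolated from the triple eigenvalue $\kzero$ for small $\eta$, so its Riesz eigenprojection is holomorphic, equal to $[e_+,\cdot\,]e_+$ at $\eta=0$; formula (\ref{Prp}) is its restriction to real $\eta$.

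For the second stage I would substitute the eigenvector expansions (\ref{eigenvectors}) together with the exact normalizations $[\vlp,\vlp]=-C\eta^\thirds$, $[\vre,\vle]=-C\zeta\eta^\thirds$, $[\vle,\vre]=-C\zeta^2\eta^\thirds$ from (\ref{fluxmatrix2}) into the rank-one formulas, treat $[v_i,\cdot\,]v_i$ as a rank-one operator, and collect powers of $\eta^\third$. The leading $\eta^{-\thirds}$ term arises from dividing the order-$\eta^0$ piece $[e_0,\cdot\,]e_0$ by the $\eta^\thirds$ normalization---this is the source of the Laurent (negative-power) part---while the $\eta^{-\third}$ and $\eta^0$ terms assemble the bilinear combinations shown in (\ref{Plp})--(\ref{Ple}). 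Here the exactness of (\ref{fluxmatrix2}) (the remark following Proposition~\ref{prop:PuiseuxSeriesEigenvAsympt}) is essential: had the normalization carried a correction $-C\eta^\thirds(1+O(\eta^\third))$, it would contaminate the displayed $\eta^{-\third}$ coefficient rather than merely the $O(\eta^\third)$ remainder. Convergence of each Laurent--Puiseux series then follows since it is a finite product and quotient of the convergent Puiseux series from Proposition~\ref{prop:PuiseuxSeriesEigenvAsympt}.

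The step demanding the most care---and the one I expect to be the main obstacle---is the bookkeeping of $\zeta$-powers in $\Pre$ and $\Ple$, because the flux form (\ref{def:flux}) is conjugate-linear in its first slot. Concretely, in $[\vle,\cdot\,]$ the coefficient $\eta^\third\zeta^2$ multiplying $w_1$ in $\vle$ becomes $\eta^\third\overline{\zeta^2}=\eta^\third\zeta$ for real $\eta^\third$, and likewise the $\zeta$ on the $w_2$ term becomes $\zeta^2$. Tracking these conjugations through the product $[\vle,\cdot\,]\vre$, combined with $1/[\vle,\vre]=-\zeta/(C\eta^\thirds)$ and $\zeta^3=1$, is exactly what collapses the cube-root phases into the pattern $\zeta\eta^{-\thirds},\ \zeta^2\eta^{-\third},\ 1$ displayed in (\ref{Pre}) (and its conjugate in (\ref{Ple})); a single sign or phase slip here would corrupt the result. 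I would therefore carry out this phase accounting explicitly as the crux, using the phase-free case $\Plp$ as the template.
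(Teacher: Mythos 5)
Your proposal is correct, and its second stage (substituting the eigenvector expansions (\ref{eigenvectors}) and the exact normalizations from (\ref{fluxmatrix2}) into the rank-one formulas, then tracking the $\zeta$-phases through the conjugate-linear first slot) is exactly how the paper obtains the displayed Laurent--Puiseux expansions. Where you differ is in how the structural claims are established. The paper gets both the analyticity of $\Prp$ and the fact that $\Plp,\Pre,\Ple$ are the three branches $P_h(\eta)=\sum_{j\ge -m}B_j(\zeta^h\eta^{\third})^j$ of a single convergent Laurent--Puiseux series by citing standard perturbation theorems for eigenprojections (Baumg\"artel; Kato), i.e.\ from the Riesz integral of the resolvent around the splitting eigenvalue group; the rank-one formulas then appear only implicitly through the flux interaction matrix. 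You instead construct the projections by hand: flux-self-adjointness forces $[v_i,v_j]=0$ unless $\lambda_i=\overline{\lambda_j}$, the conjugation pairing read off from (\ref{eigenvalues}) identifies the dual functional for each eigenvector, and since the four eigenvectors form a basis for $0<|\eta|\ll1$ the rank-one map is the eigenprojection. This is more self-contained and makes the source of each formula transparent. The one place your route is looser than the paper's is the claim that $\Plp,\Pre,\Ple$ are branches of a convergent Laurent--Puiseux series \emph{in a complex neighborhood} of $\eta=0$: your formula $[v_i(\eta),\cdot\,]v_j(\eta)/[v_i(\eta),v_j(\eta)]$ involves $\overline{v_i(\eta)}$ and so, as written, is only an analytic expression for real $\eta^{\third}$. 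To upgrade it you either need the Hermitian-analytic device already used in the proof of Proposition~\ref{prop:PuiseuxSeriesEigenvAsympt} (replace $[\nu(\delta),\cdot\,]$ by $[\nu(\overline{\delta}),\cdot\,]$, which continues analytically in $\delta=\eta^{\third}$), or simply fall back on the cited eigenprojection theorems as the paper does. This is a fixable bookkeeping point, not a gap in the substance of the argument.
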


\smallskip
{\bfseries Nondegenerate basis of modes.}\;
The basis \,$\{\vlp,\vle\}$\, for the leftward space is degenerate in the limit $\eta\to 0$ because $\vlp-\vle\to0$.  By scaling this difference by $\eta^\mthird$, one obtains a vector $\vll$ 
whose limit is a multiple of $w_1$,
\begin{eqnarray}\label{def_vll}
  \vll(\eta) &:=& \eta^\mthird \big(\vlp(\eta)-\vle(\eta)\big) \,=\, (1-\zeta^2)w_1 + \eta^\third(1-\zeta)w_2 + \O(\eta^\thirds) \,, \\
  \lim_{\eta\to0} v_\lfl(\eta) &=& (1-\zeta^2) w_1.   \notag
\end{eqnarray}
Since $w_1=k_1e_1+\varepsilon^0(w_1)e_0$ with $k_1\not=0$, the vector $\vll(0)$ carries negative energy flux $[\vll,\vll]=-C$ (see (\ref{jordanform})) and excites the linear mode $(e_1+ize_0)e^{ik_0z}$ of the Maxwell ODEs.
The basis \,$\{\vll,\vle\}$\, for the leftward space is well behaved as $\eta\to0$ and has limit $\{(1-\zeta^2)w_1,e_0\}$.
The flux interactions between the modes $\{\vrp,\vre,\vll,\vle\}$ are given by the matrix
\begin{equation}\label{fluxmatrix3}
  \left[v_i,v_j\right]_{i,j\in\{+p,+e,-\ell,-e\}} =
    \renewcommand{\arraystretch}{1.1}
\left[
  \begin{array}{cccc}
    1 & 0 & 0 & 0 \\
    0 & 0 & C\zeta\eta^\third & -C\zeta\eta^\thirds \\
    0 & C\zeta^2\eta^\third & -C & 0 \\
    0 & -C\zeta^2\eta^\thirds & 0 & 0
  \end{array}
\right]\qquad
  \text{($\eta^{\third}$ real)}.
\end{equation}

\subsubsection{EM mode structure in the slow-light regime}
The algebraic structure in the slow-light regime $|\eta|\ll 1$ describe in sec.~\ref{sec:UnidirAmbMed} allows us to interpret the EM eigenmodes as leftward and rightward (for $\eta\not=0$) based on the sign of the energy flux for the propagating modes or the sign of the imaginary part of the wavenumbers for the evanescent modes.  It also allows a convenient analysis of the limit of the spaces spanned by these leftward and rightward modes as $\eta\rightarrow 0$.

From Proposition \ref{prop:PuiseuxSeriesEigenvAsympt}, the general solution to the Maxwell ODE (\ref{canonical}) in the periodic medium (\ref{periodicmaterials})~is
\begin{equation}\label{MaxwellODEsPeriodicGeneralSolution}
  \psi(z) \,=\, F(z)\left( a\,\vrp e^{i\krp z} + b\,\vlp e^{i\klp z} + c\,\vre e^{i\kre z} + d\,\vle e^{i\kle z} \right),
  \quad
  a,b,c,d\in\CC.
\end{equation}
The matrix (\ref{fluxmatrix2}) gives the energy-flux interactions among the four modes, which is independent of $z$, and implies
\begin{equation}\label{conservation}
  [\psi(z),\psi(z)] \,=\,  |a|^2 -C\eta^\thirds \left( |b|^2 + 2\Re(\zeta d\bar c) \right).\qquad
  \text{($\eta^{\third}$ real)}
\end{equation}
\smallskip
{\bfseries Rightward and leftward mode spaces.}\;
In the following discussion, we will assume that $\eta^\third>0$ near $\eta=0$ and $k_1>0$ (see (\ref{k1})).  This implies by Proposition \ref{prop:PuiseuxSeriesEigenvAsympt} and equations (\ref{MaxwellODEsPeriodicGeneralSolution}) and (\ref{conservation}) that the EM eigenmodes with corresponding wavenumbers $\klp$, $\krp$ are propagating with negative and positive energy flux, respectively, whereas the EM eigenmodes with corresponding wavenumbers $\kle$, $\kre$ are evanescent and decay exponentially to zero as $z\rightarrow-\infty$ and $z\rightarrow\infty$, respectively, since $\operatorname{Im}\kle<0$ and $\operatorname{Im}\kre>0$.

The modes are designated as follows:
\begin{equation}\label{modes}
\renewcommand{\arraystretch}{1}
\left.
  \begin{array}{rl}
    \text{leftward propagating:} & \wlp(z) = F(z)\,\vlp e^{i\klp z} \\
    \text{rightward propagating:} & \wrp(z) = F(z)\,\vrp e^{i\krp z} \\
    \text{leftward evanescent:} & \wle(z) = F(z)\,\vle e^{i\kle z} \\
    \text{rightward evanescent:} & \wre(z) = F(z)\,\vre e^{i\kre z}\,.
  \end{array}
\right.
\end{equation}
Denote by $\Pl$ and $\Pr$ the rank-2 complementary projections onto the {leftward and rightward spaces},
\begin{eqnarray}
  && \Pl = \Plp + \Ple\,, \quad \Range(\Pl) = V_- = \sspan\{\vlp,\vle\} \label{leftwardprojection}\\
  && \Pr = \Prp + \Pre\,, \quad \Range(\Pr) = V_+ = \sspan\{\vrp,\vre\} \label{rightwardprojection}\,.
\end{eqnarray}

The form of the flux interaction matrix (\ref{fluxmatrix2}) plays an important role in the way fields are scattered by an obstacle.  The critical fact is that each oscillatory mode carries energy in isolation while the evanescent modes induce energy flux only when superimposed with one another.  This idea is manifest in the flux-adjoints of the projection operators: 
Both projections $\Plp$ and $\Prp$ onto the ``propagating subspaces" are flux-self-adjoint, and the projections $\Ple$ and $\Pre$ onto the ``evanescent subspaces" are flux-adjoints of each~other:
\begin{eqnarray}
  && [\Plp\psi_1,\psi_2] = [\psi_1,\Plp\psi_2]\,,\\
  && [\Prp\psi_1,\psi_2] = [\psi_1,\Prp\psi_2]\,,\\
  && [\Ple\psi_1,\psi_2] = [\psi_1,\Pre\psi_2]\,.
\end{eqnarray}
In concise notation, $\Plrp^\fadj=\Plrp$ and $\Ple^\fadj=\Pre$.

\smallskip
{\bfseries Limits of rightward and leftward spaces.}\;
For each (small enough) nonzero value of~$\eta$, the spaces $V_-$ and $V_+$ span $\CC^4$.  But their limits do not, as discussed by Figotin and Vitebskiy \cite{FigotinVitebskiy2006}:
\begin{eqnarray}
  && \Vo_+ \,:=\, \lim_{\eta\to0} V_+(\eta) = \sspan\{e_+,e_0\}\,,\label{Voplus} \\
  && \Vo_- \,:=\, \lim_{\eta\to0} V_-(\eta) = \sspan\{e_0,e_1\}\,. \label{Vominus}
\end{eqnarray}
The span of the limiting spaces $\Vo_-$ and $\Vo_+$ is only a three-dimensional subspace of $\CC^4$, and the intersection of $\Vo_-$ and $\Vo_+$ is the zero-flux eigenspace:
\begin{eqnarray}
  \Vo &:=& \Vo_+ + \Vo_- \,=\, \sspan\{ e_+,e_0,e_1 \}\,, \label{Vo}\\
  && \Vo_+ \cap \Vo_- \,=\, \sspan\{ e_0 \}\,.
\end{eqnarray}
The space $\Vo$ excludes all vectors that excite the quadratic mode, that is, all vectors with a nonzero $e_2$ component.
As a consequence of the fact $\Vo_+ + \Vo_- \not= \CC^4$, the projections $\Pr$ and $\Pl$ must be singular functions of $\eta$, ceasing to exist at $\eta=0$; indeed Proposition \ref{prop:PuiseuxSeriesProjAsympt} reveals that $\Pr(\eta)$ and $\Pl(\eta)$ are Laurent-Puiseux series in $\eta^\third$ whose lowest-order term is $\eta^{-\thirds}$.

The limit (\ref{Voplus}) is understood as the image of the norm-limit as $\eta\to0$ of the operator
\begin{equation}
  Q_+(\eta)\,v \,=\, \frac{(\vrp(\eta),v)}{(\vrp(\eta),\vrp(\eta))}\vrp(\eta) + \frac{(\vre(\eta),v)}{(\vre(\eta),\vre(\eta))}\vre(\eta)
\end{equation}
(which is not a projection), which maps onto the space $V_+(\eta)$.
The limit of $Q_+(\eta)$, denoted by $\mathring Q_+$, maps onto the subspace of $\CC^4$ spanned by the limits of the vectors $\vrp(\eta)$ and $\vre(\eta)$, which are $e_+$ and $e_0$.  Thus the image of $\mathring Q_+$ is $\sspan\{e_+,e_0\}$.  The limit (\ref{Vominus}) is the image as $\eta\to0$ of the norm-limit of the operator
\begin{equation}
  Q_-(\eta)v = \frac{(\vll(\eta),v)}{(\vll(\eta),\vll(\eta))}\vll(\eta) + \frac{(\vle(\eta),v)}{(\vle(\eta),\vle(\eta))}\vle(\eta)
\end{equation}
onto the space $V_-(\eta)$.
The limiting operator $\mathring Q_-$ has image spanned by the limits of $\vll$ and $\vle$, which are $(1-\zeta^2)w_1$ and $e_0$.  Since $w_1=k_1e_1+ce_0$ by Proposition~\ref{prop:PuiseuxSeriesEigenvAsympt} and $k_1\not=0$, the image of $\mathring Q_-$ is $\sspan\{e_0,e_1\}$.

\medskip
{\bfseries Note on notation ($X\to\mathring X$).}\, Many of the quantities we are dealing with depend (usually tacitly) on~$\eta$ and may have a meaning at $\eta=0$ (such as the matrices $T, K$, the spaces $V_\pm$, and the vectors $v_{\pm p,e}$).  We will always denote the value of an $\eta$-dependent quantity $X$ at $\eta=0$ by $\mathring{X}$.

\subsection{Scattering by a defect layer}\label{sec:scatteringdefectlayer}

We now introduce a single layer, or slab, of a lossless medium extending from $z=0$ to $z=L$ with permittivity $\epsilon(z)$ and permeability $\mu(z)$ satisfying (\ref{LosslessLayeredMedia}) and (\ref{PassiveMedia}).  This layer acts as a scatterer of electromagnetic waves originating in the ambient medium, but it can act simultaneously as a waveguide.
Scattering of waves by a defect slab is developed in detail in our previous work~\cite{ShipmanWelters2013}.  The theory there applies in the present study for $0<\eta\ll1$.  In the limit $\eta\to0$ the solution of the scattering problem becomes pathological
because the concept of rightward and leftward modes breaks down, as discussed in the previous section.

\subsubsection{The scattering problem (for $\eta\not=0$)}\label{sec:scattering}

It is convenient to choose the point $z=L$ so that the electromagnetic coefficients $\epsilon(z)$ and $\mu(z)$ in the period $[-d,0]$ are identical to those in the period $[L,L+d]$ ({\itshape i.e.}, $\epsilon(z+L+d)=\epsilon(z)$ and $\mu(z+L+d)=\mu(z)$ for all $z\in[-d,0]$) as in Fig.~\ref{fig:layered}. Thus, in the ambient (periodic) space, any solution $\psi$ of the Maxwell ODEs (\ref{canonical}) has the form
\begin{equation}\label{ambientfield}
\renewcommand{\arraystretch}{1.0}
\left.
  \begin{array}{rl}
  z<0: & \psi(z) \;=\; F(z)e^{iKz}\psi(0) \,, \\
  \vspace{-1.5ex}\\
  z>L: & \psi(z) \;=\; F(z-L)e^{iK(z-L)}\psi(L)\,,    
  \end{array}
\right.
\end{equation}
with $\psi(0)$ and $\psi(L)$ related through the flux-unitary transfer matrix across the slab $T=T(0,L)$,
\begin{equation}\label{SlabTransferMatrix}
  T\psi(0)=\psi(L),
  \qquad
  [T\psi_1,\psi_2] = [\psi_1,T^{-1}\psi_2].
\end{equation}
%


A solution $\psi(z)$ to the scattering problem, {\itshape i.e.}, \emph{a scattering field}, is a solution of the Maxwell ODEs (\ref{canonical}) that is decomposed outside the slab into incoming and outgoing parts.  This decomposition expands the solution~(\ref{ambientfield}) into physically meaningful modes,
\begin{equation}\label{totalfield}
\renewcommand{\arraystretch}{1.0}
\left.
  \begin{array}{rrl}
  z<0: & \psi(z) &=\; \psi_-^\out(z) + \psi_+^\inc(z)\,, \\
  \vspace{-1ex}\\
  z>L: & \psi(z) &=\; \psi_-^\inc(z) + \psi_+^\out(z)\,.    
  \end{array}
\right.
\end{equation}
The scattering problem is illustrated in Fig.~\ref{fig:scattering}.

\begin{figure}
\centerline{\scalebox{0.4}{\includegraphics{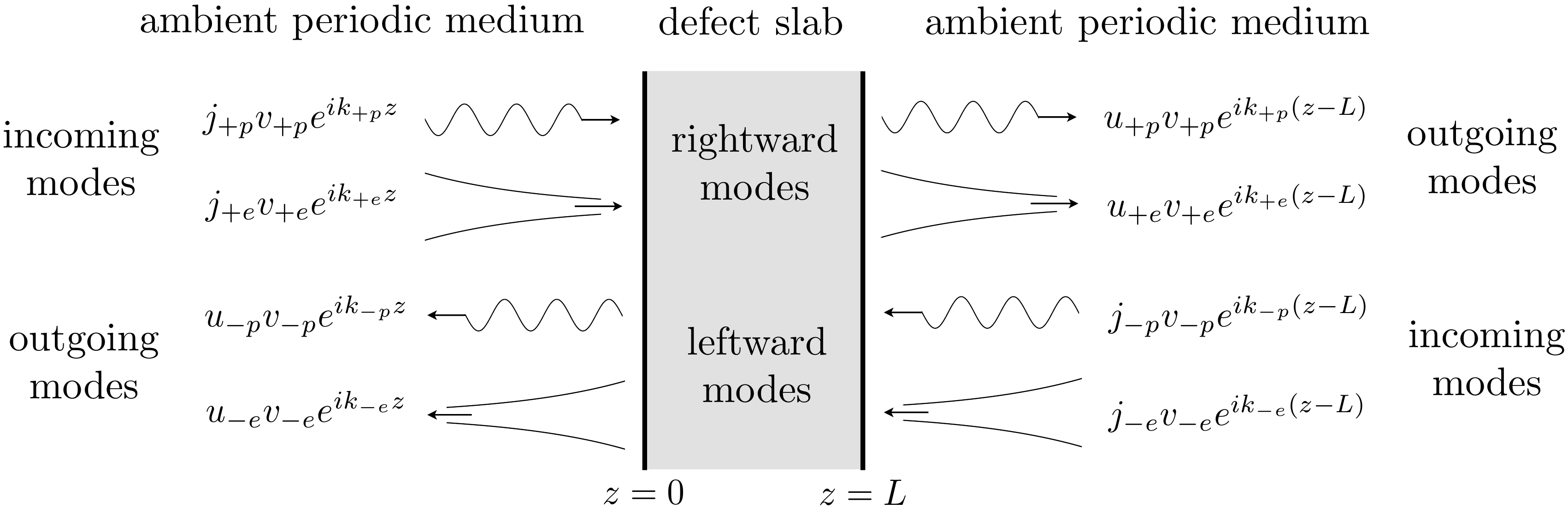}}}
\caption{\small  Scattering of harmonic source fields in a periodic anisotropic ambient medium by a defect layer (slab), at frequency $\omega$ and wavevector $\kk$ parallel to the slab for which there is a pair of rightward and leftward propagating modes and a pair of rightward and leftward evanescent modes.
At such $\kw$, the medium does not admit a frozen mode, that is, $\kw\not=\kwz$, or $0<\eta\ll1$.  Sources on each side of the slab emit fields, which arrive at the slab as {\em incoming} propagating and evanescent modes.  The modes of the {\em outgoing} field are directed in the opposite direction to the incoming modes.  The periodic prefactors $F(z)$ (left) and $F(z-L)$ (right) are~omitted in the figure.}
\label{fig:scattering}
\end{figure}

The field $\psi(z)$ is determined by $\psi(0)$ or $\psi(L)$ alone, and these boundary values are related through the slab transfer matrix $T$ by $T\psi(0)=\psi(L)$.
By defining incoming and outgoing vectors in $\CC^4$ as
\begin{equation}
  \Psi^\inc = \psi_-^\inc(L) + \psi_+^\inc(0)\,,
  \qquad
  \Psi^\out = \psi_-^\out(0) + \psi_+^\out(L)\,,
\end{equation}
the scattering problem may be written concisely as
\begin{equation}\label{scattering2}
  (T\Pl - \Pr) \Psi^\out + (T\Pr - \Pl)\Psi^\inc \,=\, 0\,.
  \qquad
  \text{(scattering problem)}
\end{equation}
This formulation reduces the scattering problem to finite-dimensional linear algebra involving the analytic slab transfer matrix $T=T(\eta)$ and the eigenspaces of $K=K(\eta)$.

The scattering problem is uniquely solvable for $\Psi^\out$ whenever $T\Pl-\Pr$ is invertible.   When it is not invertible, there is a nonzero vector $\Psi^\out$ that solves (\ref{scattering2}) with $\Psi^\inc=0$.
In this case, the projections of the vector $\Psi^\out$ onto the leftward and rightward subspaces are the traces of a solution $\psig(z)$ of the Maxwell ODEs that is outgoing as $|z|\to\infty$.  Because it has no incoming component on either side, conservation of energy (\ref{conservation}) requires that it be exponentially decaying as $|z|\to\infty$ and have zero energy flux for all $z$ (as described in detail in \cite{ShipmanWelters2013}).  Such a field is a {\em guided mode} of the slab, and its vector $\Psi^\out=\Psi^\guided$ of traces at $z=0$ and $z=L$ satisfies
\begin{equation}\label{guidedmode1}
  (T\Pl - \Pr) \Psi^\guided  \,=\, 0\,.
  \qquad
  \text{(guided-mode equation)}
\end{equation}


\subsubsection{Guided frozen modes (at $\eta=0$)}

Returning to the perturbation of a periodic medium that admits a frozen mode at $\kwz$, we assume that (\ref{guidedmode1}) admits no nonzero solution for $\kw$ near $\kwz$, or equivalently that {\em no guided mode $\psi^\guided$ exists for sufficiently small $\eta>0$}.
If fact, if one considers a perturbation of the frequency alone ($\kk(\eta)=\kk^0$) with $\frac{d\omega}{d\eta}(0)\not=0$, this assumption necessarily holds, as stated in the following theorem proved in sec.~\ref{sec:analyticperturbationtheory}.

\begin{theorem}[nondegeneracy of guided-mode condition]\label{thm:noguidemodesundernondegeneracycondition}
If $\kk(\eta)=\kk^0$ for $0<|\eta|\ll 1$ and $\frac{d\omega}{d\eta}(0)\not=0$, then the generic condition (\ref{GenericCondition}) is satisfied and
\begin{equation}
\det(T(\eta)\Pl(\eta)-\Pr(\eta))\not=0,
\qquad 0<|\eta|\ll 1.
\end{equation} 
\end{theorem}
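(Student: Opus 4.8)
The plan is to prove the two assertions in turn, deriving the genericity condition (\ref{GenericCondition}) first and then the nonvanishing of the determinant; throughout I write $s=\eta^\third$ and $\To=T(0)$. Since $\kk(\eta)=\kk^0$, the matrix depends on $\eta$ only through the frequency, so $K'(0)=\omega'(0)\,\partial_\omega K(\kwz)$, and because $\omega'(0)\neq0$ it suffices to show $[e_0,\partial_\omega K(\kwz)\,e_0]\neq0$. I would read this off the characteristic function $R(k,\kk^0,\omega)=\det(kI-K(\kk^0,\omega))$ of (\ref{def:CharacteristicPolynomialIndicatorMatrix}) through Jacobi's formula $\partial_\omega R=-\operatorname{tr}\!\big(\operatorname{adj}(kI-K)\,\partial_\omega K\big)$. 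At $(\kzero,\kk^0,\omega^0)$ the eigenvalue $\kzero$ of $\mathring K$ carries a single $3\times3$ Jordan block and $\kplus\neq\kzero$, so $\kzero I-\mathring K$ has one-dimensional kernel $\sspan\{e_0\}$ and its adjugate is the rank-one operator $v\mapsto\beta\,\varepsilon^2(v)\,e_0$ with $\beta\neq0$, whose left factor is the left eigenfunctional $\varepsilon^2=-[e_0,\cdot\,]$ of (\ref{dualbasis}). Substituting gives $\partial_\omega R(\kzero,\kk^0,\omega^0)=\beta\,[e_0,\partial_\omega K(\kwz)\,e_0]$, and Theorem~\ref{thm:localbandstrucandjordanstructure}(i), with geometric multiplicity $g=1$, asserts that this $\omega$-derivative is nonzero. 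Hence $[e_0,\partial_\omega K(\kwz)\,e_0]\neq0$, which by (\ref{k1}) is exactly (\ref{GenericCondition}).

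For the determinant I would begin from the purely algebraic identity
\[
  \det\!\big(T\Pl-\Pr\big)\;=\;\det\!\big(\Pl T|_{V_-}\big),
\]
valid at every $\eta\neq0$ where $\CC^4=V_-\oplus V_+$: relative to this splitting $T\Pl-\Pr$ is block lower-triangular with diagonal blocks $\Pl T|_{V_-}$ and $-\,\mathrm{id}_{V_+}$. Thus the determinant vanishes exactly when some nonzero $x\in V_-$ satisfies $Tx\in V_+$. By Proposition~\ref{prop:PuiseuxSeriesProjAsympt} its entries are Laurent--Puiseux series in $s$, so $\det(\Pl T|_{V_-})$ is one as well, and a nonzero such series has isolated zeros; it therefore suffices to show it is not identically zero. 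Suppose it were. Then for small real $\eta>0$ a guided mode would exist, and the exact flux matrix (\ref{fluxmatrix2}) together with flux-unitarity of $T$ and the energy balance (\ref{conservation}) would force the propagating components of $x$ and of $Tx$ to cancel (a relation of the form $-C s^2|a|^2=|\gamma|^2$ with $C>0$), leaving $x$ a multiple of $\vle$ and reducing the condition to $T\vle\parallel\vre$. Holding on a real interval, this would make $T\vle$ and $\vre$ parallel as convergent Puiseux series in $s$.

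It remains to refute this parallelism, which is the heart of the argument. At order $s^0$ one has $T\vle\to\To e_0$ and $\vre\to e_0$, so parallelism forces $\To e_0=\mu e_0$, i.e.\ the resonant case $\To(\Vo)=\Vo$; in the nonresonant case parallelism already fails at leading order, and $\det(\Pl T|_{V_-})\neq0$ for $0<|\eta|\ll1$ by continuity. The main obstacle is therefore the resonant case, which I would resolve at order $s^2$. Writing $\lambda=\mu+s\lambda_1+s^2\lambda_2+\cdots$ and matching the $s^2$-coefficients of $T\vle=\lambda\vre$ using (\ref{eigenvectors}) gives
\[
  \To w_2 \;=\; \mu\zeta\,w_2 + \lambda_1 w_1 + \lambda_2\zeta^2 e_0.
\]
Invoking flux-unitarity in the form $[\To w_2,\To e_0]=[w_2,e_0]$, then evaluating the same pairing from the displayed relation together with $\To e_0=\mu e_0$, $[w_1,e_0]=0$, $[e_0,e_0]=0$ and the exact value $[w_2,e_0]=-\kone^2$ from (\ref{LowOrderExpanEigenvecSpanRel}), yields $-\kone^2=-|\mu|^2\overline{\zeta}\,\kone^2$, hence $|\mu|^2=\zeta$. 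This is impossible, since $|\mu|^2$ is real and positive whereas $\zeta=e^{2\pi i/3}$ is not real; here it is precisely $\kone\neq0$ (equivalently (\ref{GenericCondition}), already established, forcing $w_2$ to have a nonzero $e_2$-component) that drives the contradiction. Therefore $\det(\Pl T|_{V_-})\not\equiv0$, so it is nonzero for $0<|\eta|\ll1$, which completes the proof.
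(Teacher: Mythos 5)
Your proof is correct, and its main part takes a genuinely different route from the paper's. For the generic condition the paper simply reads $[e_0,\partial_\omega K\, e_0]>0$ off the positive-definite-type property (\ref{PositiveDefiniteType_kzeroEigenvalue}) of the energy density and applies the chain rule; your derivation via Jacobi's formula and the rank-one adjugate, fed into Theorem~\ref{thm:localbandstrucandjordanstructure}(i), yields the same nonvanishing (though not the sign) and is equally valid. The real divergence is in the determinant part: the paper fixes a branch of $\eta^\third$, observes that identical vanishing of $\det(T\Pl-\Pr)$ would make every real zero non-simple, and invokes Theorem 4.4 of \cite{ShipmanWelters2013} for a contradiction---short, but resting on an external transversality theorem about guided-mode frequencies. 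You instead stay entirely inside the present paper's machinery: the block-triangular reduction to $\det(\Pl T|_{V_-})$, the exact flux matrix (\ref{fluxmatrix2}) plus flux-unitarity to force $T\vle=\lambda\vre$ with $\lambda$ a Puiseux series, and then the $\eta^\thirds$-coefficient of that identity together with $[\To w_2,\To e_0]=[w_2,e_0]=-\kone^2\neq0$ from (\ref{LowOrderExpanEigenvecSpanRel}) to reach the impossible relation $|\lambda(0)|^2=\zeta$. This buys self-containment and makes visible exactly where the hypothesis bites: (\ref{GenericCondition}) gives $\kone\neq0$, which is what rules out the resonant case $\To e_0=\ell e_0$, the only case surviving the leading-order comparison. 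The cost is length; the paper's citation-based argument is two lines once the branch bookkeeping is done. Both are sound.
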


The nature of the unidirectional limit depends on whether the system at $\eta\!=\!0$ admits a source-free field, analogous to the guided mode $\psi^\guided$ above.
The condition for $\psi^\guided$ is that the eigenmode $\vle$ match the eigenmode $\vre$ across the slab, that is, $T\vle = \ell\vre$ for some nonzero multiple of~$\ell$~\cite[sec. 2B and Theorem 4.1]{ShipmanWelters2013}.
Both $\vle$ and $\vre$ tend to the zero-flux eigenmode $e_0$ as $\eta\to0$; thus the guided-mode condition becomes
\begin{equation*}
  \To e_0 = \ell e_0 \quad \text{for some $\ell\in\CC$}\,.
  \qquad \text{(guided frozen mode condition)}
\end{equation*}
The corresponding solution of the Maxwell ODE, denoted by $\psi^0(z) = T(0,z)e_0$, is a {\em guided frozen mode}, and in the ambient medium it has the form
\begin{equation}\label{guidedfrozenmode}
  \psi^0(z) =
  \renewcommand{\arraystretch}{1}
\left\{
  \begin{array}{ll}
    F(z) e_0 e^{i\kzero z} & z\leq0\,, \\
    \ell\, F(z-L) e_0 e^{i\kzero(z-L)} & z\geq L\,.
  \end{array}
\right.
\qquad \text{(guided frozen mode)}
\end{equation}
We emphasize our assumption that no guided mode exists for sufficiently small $\eta>0$.
{Despite the oscillations of $\psi^0(z)$ due to a possibly nonzero $\kzero$, the energy flux of $\psi^0$ in the $z$ direction vanishes: $[\psi^0,\psi^0]=0$.}  The full electromagnetic field has an oscillatory factor of $e^{ik_1x+k_2y}$, imparting energy flux in the direction of $\kk=(k_1,k_2)$ parallel to the layers.  One can conceive of $\psi^0$ as a guided mode of the slab, even though it does not decay as $|z|\to\infty$.

\subsubsection{Slow-light limit of the scattering problem}

In sec.~\ref{sec:pathological}, we separate the analysis of pathological scattering as $\eta\to0$ into scattering from the left and scattering from the right; both problems are illustrated in Fig.~\ref{fig:leftrightscattering}.

\emph{Scattering from the left} means that the incident field emanates from a source to the left of the slab.  The incoming field consists of the incident field on the left of the slab, and the outgoing field consists of the reflected field on the left and the transmitted field on the right.  Thus the incoming and outgoing fields vectors are
%
\begin{equation}
\begin{aligned}
  \Psi^\inc &= j_\rtp \vrp + j_\rte \vre\,, \\
  \Psi^\out &= t_\rtp \vrp + t_\rte \vre + r_\lfp \vlp + r_\lfe \vle \\
                 &= t_\rtp \vrp + t_\rte \vre + \rho_\lfl \vll + \rho_\lfe \vle \,,
\end{aligned}
\qquad  \text{(scattering from the left)}
\end{equation}
with the change of coordinate
\begin{equation}\label{coordinatechange}
  \rho_\lfe=r_\lfe+r_\lfp
  \quad\text{and}\quad
  \rho_\lfl=\eta^\third r_\lfp\,.
\end{equation}
The scattering fields at $z=0$ and $z=L$ are
\begin{eqnarray*}
  \psi(0) &=& j_\rtp v_\rtp + j_\rte v_\rte + r_\lfp v_\lfp + r_\lfe v_\lfe \\
             &=& j_\rtp v_\rtp + j_\rte v_\rte + \rho_\lfl v_\lfl + \rho_\lfe v_\lfe, \\
  \psi(L) &=& t_\rtp v_\rtp + t_\rte v_\rte.
\end{eqnarray*}
Invariance of the energy flux $[\psi(z),\psi(z)]$ together with the flux relations~(\ref{fluxmatrix2},\ref{fluxmatrix3}) provide the conservation laws
\begin{equation}\label{leftconservation}
\begin{aligned}
  |t_\rtp|^2 &= |j_\rtp|^2 - C\eta^\thirds |r_\lfp|^2 - 2C\eta^\thirds \Re(\zeta\,\bar j_\rte r_\lfe)\,, \\
  |t_\rtp|^2 &= |j_\rtp|^2 - C |\rho_\lfl|^2 - 2C\eta^\thirds \Re(\zeta\,\bar j_\rte \rho_\lfe)
                         + 2C\eta^\third \Re(\zeta\,\bar j_\rte \rho_\lfl)\,.
\end{aligned}                         
\end{equation}

\begin{figure}
\centerline{\scalebox{0.35}{\includegraphics{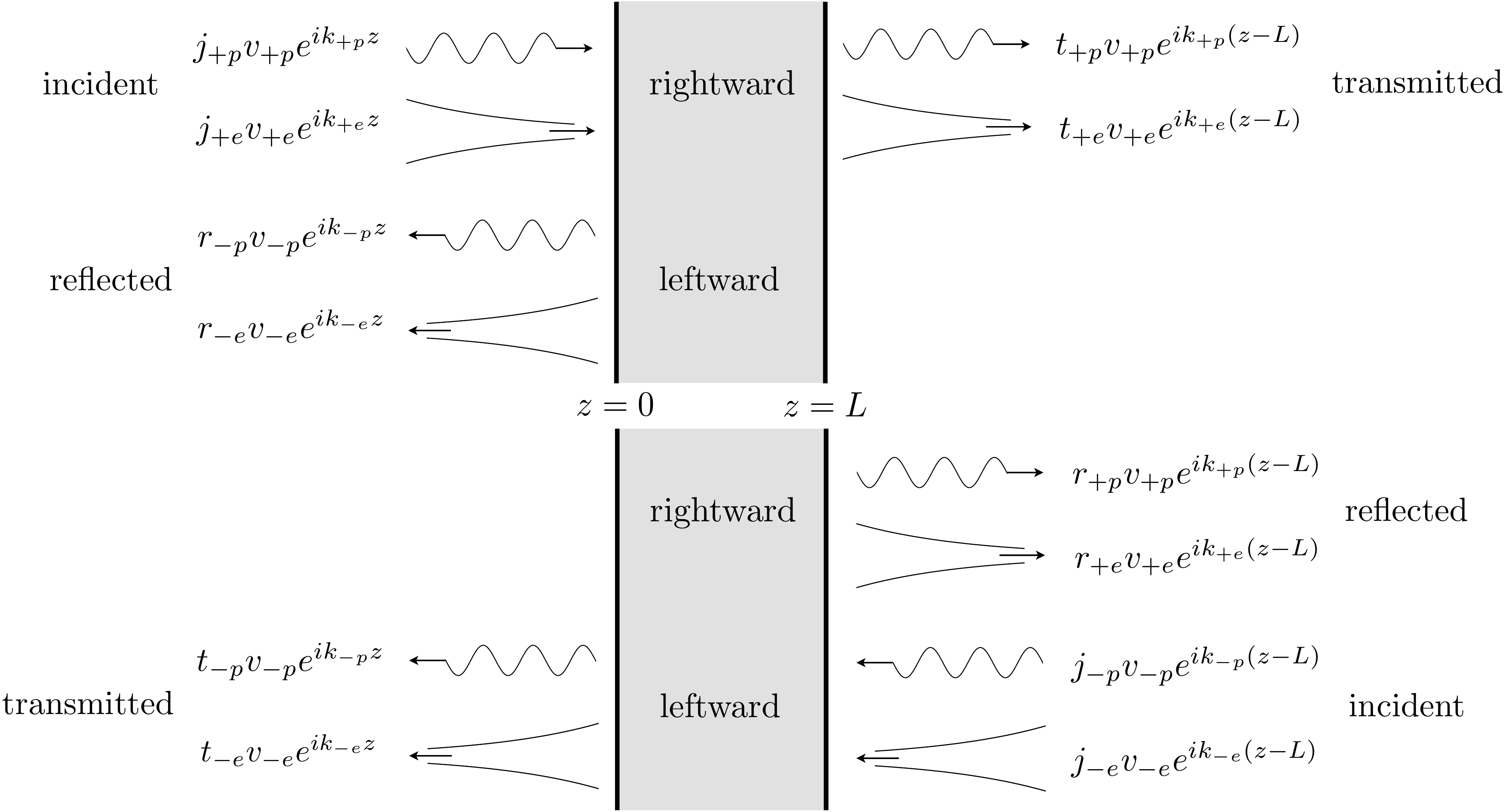}}}
\caption{\small Diagrams of scattering from the left (above) and scattering from the right (below).}
\label{fig:leftrightscattering}
\end{figure}

\noindent
By applying the projections $\Pr$ and $\Pl$ to the scattering problem (\ref{scattering2}) and then taking the flux with each of the eigenvectors yields two equations that are solved successively for the coefficients of the transmitted and the reflected fields:
\begin{equation}\label{leftscattering}
\begin{aligned}
 \mat{1.2}{\left[T\vrp,\vrp\right]}{\left[T\vrp,\vre\right]}{\left[T\vle,\vrp\right]}{\left[T\vle,\vre\right]}
\col{1.2}{t_\rtp}{t_\rte}
= \col{1.2}{j_\rtp[\vrp,\vrp]}{j_\rte[\vle,\vre]}
= \col{1.2}{j_\rtp}{-C\zeta^2\eta^\thirds j_\rte} \,, \\
 \mat{1.2}{\left[T\vlp,\vrp\right]}{\left[T\vlp,\vre\right]}{\left[T\vre,\vrp\right]}{\left[T\vre,\vre\right]}
\col{1.2}{t_\rtp}{t_\rte}
= \col{1.2}{r_\lfp[\vlp,\vlp]}{r_\lfe[\vre,\vle]}
= -C\eta^\thirds\col{1.2}{r_\lfp}{\zeta\, r_\lfe} \,. \\
\end{aligned}
\qquad
\text{(scattering from the left)}
\end{equation}

\emph{Scattering from the right} means
\begin{equation}
\begin{aligned}
  \Psi^\inc &= j_\lfp \vlp + j_\lfe \vle \\
                 &= i_\lfl \vll + i_\lfe \vle\,, \\
  \Psi^\out &= t_\lfp \vlp + t_\lfe \vle + r_\rtp \vrp + r_\rte \vre \\
                 &= \tau_\lfl \vll + \tau_\lfe \vle + r_\rtp \vrp + r_\rte \vre\,,
\end{aligned}
\qquad  \text{(scattering from the right)}
\end{equation}
with the two representations being related by a coordinate change analogous to (\ref{coordinatechange}).
The conservation laws are
\begin{equation}\label{rightconservation}
\begin{aligned}
  -C\eta^\thirds |t_\lfp|^2 &= -C\,\eta^\thirds|j_\lfp|^2 + |r_\rtp|^2 - 2C\,\eta^\thirds \Re(\zeta\,\bar r_\rte j_\lfe)\,, \\
  -C|\tau_\lfl|^2 &= -C|i_\lfl|^2 + |r_\rtp|^2 - 2C\eta^\thirds \Re(\zeta\,\bar r_\rte i_\lfe)
                        + 2C\,\eta^\third \Re(\zeta\,\bar r_\rte i_\lfl)\,,
\end{aligned}                         
\end{equation}
and the equations for the coefficients of the transmitted and reflected coefficients are
\begin{equation}\label{rightscattering}
\begin{aligned}
\mat{1.2}{\left[\vlp,T\vlp\right]}{\left[\vlp,T\vle\right]}{\left[\vre,T\vlp\right]}{\left[\vre,T\vle\right]}
\col{1.2}{t_\lfp}{t_\lfe}
= \col{1.2}{j_\lfp[\vlp,\vlp]}{j_\lfe[\vre,\vle]}
= -C\eta^\thirds \col{1.2}{j_\lfp}{\zeta\, j_\lfe} \,, \\
\mat{1.2}{\left[\vrp,T\vlp\right]}{\left[\vrp,T\vle\right]}{\left[\vle,T\vlp\right]}{\left[\vle,T\vle\right]}
\col{1.2}{t_\lfp}{t_\lfe}
= \col{1.2}{r_\rtp[\vrp,\vrp]}{r_\rte[\vle,\vre]}
= \col{1.2}{r_\rtp}{-C\zeta^2\eta^\thirds r_\rte} \,. \\
\end{aligned}
\qquad
\text{(scattering from the right)}
\end{equation}

A crucial quantity is the determinant of the matrix in~(\ref{leftscattering}),
\begin{eqnarray}
  && D=D(\eta) := [T\vrp,\vrp][T\vle,\vre] - [T\vrp,\vre][T\vle,\vrp]\,, \label{D}\\
  && \Do := D(0) = [\To e_+,e_+][\To e_0,e_0]-[\To e_+,e_0][\To e_0,e_+] \,. \label{Do}
\end{eqnarray}
$D(\eta)$ has a Puiseux series in $\eta$.
The vanishing of $\Do$ is equivalent to the existence of $\psi^0$, as we will see in Theorem~\ref{thm:D}.
In this case, the transfer matrix maps $\Vo$ to itself.
If $\Do\not=0$, then $\To[\Vo]$ has a two-dimensional intersection with~$\Vo$.
The following related quantity will also be relevant:
\begin{equation}
  \mathring D_* := [e_1,\To e_1][e_0,\To e_0]-[e_1,\To e_0][e_0,\To e_1]. \label{Doprime}
\end{equation}
%


\section{Pathological scattering in the slow-light limit}\label{sec:pathological}

We ask the central question, how is the asymptotic behavior, as $\eta\to0$, of the scattering problem reflected in the limiting algebraic structure at $\eta=0$, that is, at the wavevector-frequency pair $\kwz$ of the frozen mode?  There are two cases:
\begin{eqnarray}
  &&\To[\Vo]\not=\Vo \iff \mathring D\not=0 \iff \text{no guided frozen mode} \quad \text{(nonresonant case)} \notag\\
  &&\To[\Vo]=\Vo \iff \mathring D=0 \iff \text{$\exists$ guided frozen mode} \quad\;\, \text{(resonant case)}. \notag
\end{eqnarray}
(Recall the definition (\ref{Vo}) of $\Vo$ and that $\To=T(\eta=0)$.)
These two cases cleanly differentiate between two distinctly different types of scattering behavior, which we call the nonresonant and the resonant.  The resonant case is characterized by the existence of a guided frozen mode when $\kw=\kwz$ ({\itshape i.e.}, $\eta=0$).

\subsection{Algebraic structure at the frozen-mode parameters}

The clear distinction between the nonresonant and resonant cases is manifest in the following theorem, which narrows down what can happen algebraically at the $\eta=0$ limit.

\begin{theorem}[Algebraic structure at frozen-mode parameters]\label{thm:D}   
Let $D(\eta)$ and $\Do$ be defined by (\ref{D},\ref{Do}), $\To=T(0)$ be equal to the transfer matrix across the slab at $\eta=0$ ({\itshape i.e., $\kw=\kwz$}), and $\Vo$, $\Vo_\pm$ be defined by (\ref{Voplus},\ref{Vominus},\ref{Vo}).

If $D(\eta)\not=0$ in a punctured neighborhood of $\eta=0$, then
\begin{equation*}
  \mathrm{dim}\big( \To[\Vo]\cap\Vo \big) = \mathrm{dim}\big( \To^{-1}[\Vo]\cap\Vo \big) =
  \renewcommand{\arraystretch}{1}
\left\{
  \begin{array}{ll}
    2 & \text{if $\Do\not=0$ \; (nonresonant case),} \\
    3 & \text{if $\Do=0$ \; (resonant case),}
  \end{array}
\right.
\end{equation*}
and, with $\mathring W$ equal to any of the four spaces $\To[\Vo_\pm]$ or $\To^{-1}[\Vo_\pm]$,
\begin{equation*}
  \mathrm{dim}\big( \mathring W \cap\Vo \big) = 
  \renewcommand{\arraystretch}{1}
\left\{
  \begin{array}{ll}
    1 & \text{if $\Do\not=0$ \; (nonresonant case),} \\
    2 & \text{if $\Do=0$ \; (resonant case).}
  \end{array}
\right.
\end{equation*}
The following statements are equivalent characterizations of the resonant case.
\begin{enumerate}
 \item $\Do=0$.
 \item $\To e_0 = \ell e_0$ for some nonzero $\ell\in\CC$, that is, the Maxwell ODE admits a guided frozen mode at $\eta=0$.
 \item The space $\Vo$ is invariant under $\To$.
 \item $\To$ or $\To^{-1}$ takes $\Vo_+$ or $\Vo_-$ into $\Vo$.
 \item $\To$ or $\To^{-1}$ takes $e_0$ into $\Vo_-$ or $\Vo_+$.
 \item The transfer matrix $\To$, with respect to the Jordan basis $\{e_+,e_0,e_1,e_2\}$ of the matrix $\widetilde{\mathring{K}}$ in (\ref{jordanform}), has the form
\begin{equation}\label{To}
    \widetilde{\To} =
    \renewcommand{\arraystretch}{1.2}
\left[
  \begin{array}{cccc}
    a_{++} & 0 & a_{+1} & a_{+2} \\
    a_{0+} & \ell & a_{01} & a_{02} \\
    a_{1+} & 0 & a_{11} & a_{12} \\
    0 & 0 & 0 & \bar\ell^{-1}
  \end{array}
\right]
\end{equation}
with \,$|a_{++}|^2 - |a_{1+}|^2 = 1$\,, \,$|a_{11}|^2 - |a_{+1}|^2 = 1$\,, and $\bar a_{++}a_{+1}=\bar a_{1+}a_{11}$\,.  In particular, $|a_{++}|\not=0$ and $|a_{11}|\not=0$.
\item
The matrix $\To^{-1}$, with respect to the Jordan basis, has the form
\begin{equation}\label{Toinv}
    \widetilde{\To^{-1}} =
    \renewcommand{\arraystretch}{1.2}
\left[
  \begin{array}{cccc}
    b_{++} & 0 & b_{+1} & b_{+1} \\
    b_{0+} & \ell^{-1} & b_{01} & b_{02} \\
    b_{1+} & 0 & b_{11} & b_{12} \\
    0 & 0 & 0 & \bar\ell
  \end{array}
\right]
\end{equation}
with \,$|b_{++}|^2 - |b_{1+}|^2 = 1$\,, \,$|b_{11}|^2 - |b_{+1}|^2 = 1$\,, and $\bar b_{++}b_{+1}=\bar b_{1+}b_{11}$\,.  In particular, $|b_{++}|\not=0$ and $|b_{11}|\not=0$.
\end{enumerate}
\end{theorem}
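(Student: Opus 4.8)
The plan is to obtain (\ref{Toinv}) by running, for $\To^{-1}$, the same reasoning that yields (\ref{To}) for $\To$. In the resonant case we have $\To e_0 = \ell e_0$ with $\ell\neq 0$, and $\To$ is flux-unitary, meaning $[\To\psi_1,\psi_2]=[\psi_1,\To^{-1}\psi_2]$ as in (\ref{SlabTransferMatrix}). Applying $\To^{-1}$ to $\To e_0=\ell e_0$ gives $\To^{-1}e_0=\ell^{-1}e_0$, which is column $0$ of $\widetilde{\To^{-1}}$, namely $(0,\ell^{-1},0,0)^T$. Moreover $\To^{-1}$ is itself flux-unitary, since $[\To^{-1}u,\To^{-1}v]=[\To(\To^{-1}u),\To(\To^{-1}v)]=[u,v]$, so $\To^{-1}$ inherits every structural hypothesis used for $\To$, with $\ell$ replaced by $\ell^{-1}$.

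First I would pin down the \emph{shape} of $\widetilde{\To^{-1}}$. The key is the identity $\varepsilon^2=-[e_0,\cdot\,]$ from the dual basis (\ref{dualbasis}): the $e_2$-coordinate of any vector is $-[e_0,\cdot\,]$ evaluated at it, so the bottom ($e_2$) row of $\widetilde{\To^{-1}}$ has $j$-th entry $\varepsilon^2(\To^{-1}e_j)=-[e_0,\To^{-1}e_j]$. Flux-unitarity in the form $[e_0,\To^{-1}e_j]=[\To e_0,e_j]$ together with $\To e_0=\ell e_0$ gives $[e_0,\To^{-1}e_j]=\bar\ell\,[e_0,e_j]$, and by (\ref{jordanform}) one has $[e_0,e_j]=0$ for $j\in\{+,0,1\}$ while $[e_0,e_2]=-1$. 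Hence the $e_2$-row is $(0,0,0,\bar\ell)$, exactly as in (\ref{Toinv}). Read columnwise, the same computation shows $\varepsilon^2(\To^{-1}e_+)=\varepsilon^2(\To^{-1}e_1)=0$, so $\To^{-1}e_+$ and $\To^{-1}e_1$ lie in $\Vo=\sspan\{e_+,e_0,e_1\}$; together with column $0$ this is precisely the zero pattern of (\ref{Toinv}).

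It then remains to derive the three scalar relations. Writing $\To^{-1}e_+=b_{++}e_++b_{0+}e_0+b_{1+}e_1$ and $\To^{-1}e_1=b_{+1}e_++b_{01}e_0+b_{11}e_1$, I would evaluate the flux-preservation identities $[\To^{-1}e_i,\To^{-1}e_j]=[e_i,e_j]$ for $i,j\in\{+,1\}$. By (\ref{jordanform}), $e_0$ is flux-orthogonal to each of $e_+,e_0,e_1$, so every $e_0$-component drops out and only the restriction of the flux form to $\sspan\{e_+,e_1\}$, namely $\mathrm{diag}(1,-1)$, contributes. The three identities then read $|b_{++}|^2-|b_{1+}|^2=1$, $|b_{11}|^2-|b_{+1}|^2=1$, and $\bar b_{++}b_{+1}=\bar b_{1+}b_{11}$, which are the asserted constraints; the first two force $|b_{++}|\geq 1$ and $|b_{11}|\geq 1$, hence nonzero.

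The only point requiring care is the antidiagonal $e_0$--$e_2$ pairing of the flux Gram matrix in (\ref{jordanform}): it is exactly this pairing, encoded in $\varepsilon^2=-[e_0,\cdot\,]$, that converts the \emph{column} statement $\To e_0=\ell e_0$ into control of the $e_2$ \emph{row} of $\To^{-1}$, and one must track the conjugate-linearity of $[\cdot,\cdot]$ in its first slot to land the factor $\bar\ell$ rather than $\ell$. I expect this bookkeeping, not any genuine difficulty, to be the main hazard. A purely mechanical alternative is to compute $\widetilde{\To^{-1}}=G\,\widetilde{\To}^*\,G$ directly from (\ref{To}), where $G$ is the flux matrix of (\ref{jordanform}) (so that $M^\fadj$ has matrix $GM^*G$ and $\To^\fadj=\To^{-1}$); but recovering the zero pattern of (\ref{Toinv}) that way requires first invoking the auxiliary flux-unitarity relations among the unconstrained entries of (\ref{To}), so the symmetry argument above is the cleaner route.
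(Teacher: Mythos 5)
Your derivation of (\ref{Toinv}) from the resonant condition $\To e_0=\ell e_0$ is correct, and it is essentially what the paper intends when it writes ``Statement (g) is proved similarly'': the antidiagonal pairing $[e_0,e_2]=-1$ in (\ref{jordanform}) together with $\varepsilon^2=-[e_0,\cdot\,]$ converts the column statement $\To e_0=\ell e_0$ into the vanishing of the $e_2$-row of $\widetilde{\To^{-1}}$ except for the entry $\bar\ell$, and the three scalar relations follow from flux-preservation restricted to $\sspan\{e_+,e_1\}$. Your tracking of the conjugate-linearity in the first slot (landing $\bar\ell$ rather than $\ell$) is right.

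The difficulty is that this is only a small fragment of Theorem~\ref{thm:D}. What you have shown is the implication (b)$\implies$(g), together with its trivial converse (the second column of (\ref{Toinv}) reads off $\To^{-1}e_0=\ell^{-1}e_0$, hence (b)). You have not addressed the two dimension formulas, the equivalences (b)$\iff$(c)$\iff$(d)$\iff$(e), or, most importantly, the equivalence (a)$\iff$(b). The direction (a)$\implies$(b) is the genuinely delicate step and cannot be obtained by linear algebra at $\eta=0$ alone: the paper assumes $\Do=0$ while $D(\eta)\neq0$ for small $\eta\neq0$, solves the left-scattering system (\ref{leftscattering}) with $j_{+p}=1$, $j_{+e}=0$ by Cramer's rule, invokes the conservation law (\ref{leftconservation}) to bound $t_{+p}$, and deduces $[\To e_0,e_0]=0$ in the limit; combining this with the vanishing of the second term of $D$ at $\eta=0$ gives $[\To e_+,e_0][\To e_0,e_+]=0$, hence statement (d), from which (b) follows via the flux-unitarity argument that forces the $e_+$ and $e_1$ components of $\To e_0$ to vanish. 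None of this perturbative, scattering-theoretic input appears in your proposal, and without it the characterization of the resonant case by $\Do=0$ --- which is the substance of the theorem --- remains unproved. You would need to supply these arguments (or equivalents), plus the dimension counts that follow from the equivalence of (a), (c), and (d), to have a complete proof.
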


\begin{proof}
First we prove that (b), (c), and (d) are equivalent.  Statement (c) trivially implies (d).  To prove that (d)$\implies$(b), let us assume $\To[\Vo_+]\subset\Vo$; similar arguments apply if $\To$ is replaced by $\To^{-1}$ or $\Vo_+$ by $\Vo_-$.  For some numbers $a_i$ and $b_i$, one has
\begin{eqnarray}
  && \To e_+=a_+e_++a_0e_0+a_1e_1\,,\\
  && \To e_0=b_+e_++b_0e_0+b_1e_1\,.
\end{eqnarray}
The flux relations given in~(\ref{jordanform}) and the flux-unitarity of $T$ imply
\begin{eqnarray}
  && 0 = [e_+,e_0] = [\To e_+,\To e_0] = \bar a_+b_+ - \bar a_1b_1\,,\\
  && 1 = [e_+,e_+] = [\To e_+,\To e_+] = |a_+|^2 - |a_1|^2\,, \\
  && 0 = [e_0,e_0] = [\To e_0,\To e_0] = |b_+|^2 - |b_1|^2\,.
\end{eqnarray}
These three equations are simultaneously tenable only if $b_+=b_1=0$.  Thus $\To e_0=b_0e_0$, and $b_0\not=0$ since $\To$ is invertible.  To show that (b)$\implies$(c), let $\To e_0=\ell e_0$.  Then
\begin{eqnarray*}
   && -\varepsilon^2(\To e_+) = [e_0,\To e_+] = [\To^{-1} e_0,e_+] = \bar\ell^{-1}[e_0,e_+] = 0\,,\\
   && -\varepsilon^2(\To e_1) = [e_0,\To e_1] = [\To^{-1} e_0,e_1] = \bar\ell^{-1}[e_0,e_1] = 0\,,
\end{eqnarray*}
which imply that $\To e_+$ and $\To e_1$ are contained in $\Vo$.  Thus $\To [\Vo]=\Vo$.

We now prove the equivalence of (a) and (b).  Suppose that $\To e_0=\ell e_0$.  Then
\begin{eqnarray*}
   \Do = \big[\To e_+,e_+\big]\big[\To e_0,e_0\big]-\big[\To e_+,e_0\big]\big[\To e_0,e_+\big] = \overline{\ell}\big[\To e_+,e_+\big]\big[e_0,e_0\big]-\overline{\ell}\big[\To e_+,e_0\big]\big[e_0,e_+\big] = 0.
\end{eqnarray*}
and thus $\Do=D(\eta=0)=0$.  Conversely, suppose that $\Do=0$ but that $D(\eta)\not=0$ for nonzero small $\eta$.  The first equation of (\ref{leftscattering}), with $j_\rtp=1$ and $j_\rte=0$ yields
\begin{equation}\label{hi1}
  D(\eta)\col{1.2}{t_\rtp}{t_\rte} = \col{1.2}{\big[T\vle,\vre\big]}{-\big[T\vle,\vrp\big]}
\end{equation}
The conservation law~(\ref{leftconservation}) gives $|t_\rtp|^2=1-3\eta^\thirds|r_\lfp|^2$, so that $t_\rtp$ is bounded.  Since $D(0)=0$, (\ref{hi1}) yields $\big[T\vle,\vre\big]\to0$, or $[e_0,\To^{-1}e_0]=[\To e_0,e_0]=0$.  This implies, by the flux relations in~(\ref{jordanform}), that
\begin{equation*}
  \To e_0 \in \sspan\{e_+,e_0,e_1\}\,
  \quad \text{and} \quad
  \To^{-1}e_0 \in \sspan\{e_+,e_0,e_1\}\,.
\end{equation*}
Since the first term in the defintion~(\ref{D}) of $D$ vanishes at $\eta=0$, one obtains, from the limit of the second term,
\begin{equation*}
  [\To e_+,e_0][\To e_0,e_+]=0.
\end{equation*}
In case $[\To e_+,e_0]=0$, one has \,$\To e_+\in\sspan\{e_+,e_0,e_1\}$.  In case $[e_0,\To^{-1}e_+]=[\To e_0,e_+]=0$, one has 
\,$\To^{-1}e_+\in\sspan\{e_+,e_0,e_1\}$.  Either way, statement (d) is true, and this implies (b), as already proved.

Statement (b) implies (e) because $e_0$ is contained in $\Vo_-$ and in $\Vo_+$.  To prove that (e) implies (b), suppose that $\To e_0$ or $\To^{-1} e_0$ is equal to $v=a e_+ + b e_0 + c e_1$.  By the flux-unitarity of $\To$, $0=[e_0,e_0]=[v,v]=|a|^2-|c|^2$.  Thus if $v\in\Vo_+$, or $c=0$, then $a=0$; and if $v\in\Vo_-$, or $a=0$, then $c=0$.  In either case, $v=b e_0$ thus verifying statement (b).

The form of $\To$ in statement (f) is equivalent to (b) and (c) together.  The lower right entry follows from
\begin{equation*}
  -1 = [e_0,e_2] = [\To e_0,\To e_2] = [\ell e_0,a_{+2}e_++a_{02}e_0+a_{12}e_1+a_{22}e_2] = -\bar\ell a_{22}.
\end{equation*}
The relations among the coefficients $a_{ij}$ come from the flux-unitarity of $\To$ and the flux matrix~(\ref{jordanform}).  Statement (g) is proved similarly.

The statements on $\mathrm{dim}\big( \mathring W \cap\Vo \big)$ and $\mathrm{dim}\big( \To^{\pm 1}[\mathring V] \cap\Vo \big)$ follow from the equivalence of (a), (c), and (d).
\end{proof}


\subsection{Behavior of scattering in the slow-light limit}\label{sec:main}

In this section, we analyze the pathological scattering characteristic of the slow-light limit $\eta\to0$ by considering solutions to the scattering problem, {\itshape i.e.}, \emph{scattering fields}, described in sec.~\ref{sec:scatteringdefectlayer} that are Puiseux series in $\eta^\third$.  Although such fields have limits as $\eta\to0$, their modal components can blow up as negative powers of $\eta^\third$.  Moreover, the limit at $\eta=0$ of a scattering field may exhibit linear or quadratic growth in $z$ on the left or right sides of the defect layer.  This behavior is summarized in Table \ref{table:scattering}.  The strategy we take to analyze pathological scattering is summarized as follows.

\vspace{.5cm}

\noindent
{\bfseries Strategy for analysis of pathological scattering} as $\eta\to0$: Identify all scattering fields, {\itshape i.e.}, solutions of the scattering problem, that are Puiseux series (bounded as $\eta\to0$) and find a distinguished, meaningful basis for them.  This is how it is done:
\renewcommand{\labelenumi}{\arabic{enumi}.}
\begin{enumerate}
\item Separate the full scattering problem into scattering from the left and scattering from the right.  Let us consider scattering from the left to be concrete.
\item Scattering-from-the-left fields have values at $z=L$ that are in $V_+(\eta)$ and values at $z=0$ that are in $T(\eta)^{-1}[V_+(\eta)]$.
\item The limiting values, at $\eta=0$, of these scattering fields, evaluated at $z=L$, are in $\Vo_+$, and evaluated at $z=0$ are in $\To^{-1}[\Vo_+]$.  But the limiting values are not solutions to scattering problems in the sense of sec.~\ref{sec:scattering} because the rightward and leftward spaces become dependent at $\eta=0$ (their intersection is $e_0$).  This is manifest in the blowing up of the modal projections as $\eta\to0$.
\item Consider two distinguished vectors, $v_1$ and $v_2$ in $\To^{-1}[\Vo_+]$ and the corresponding values $\To(v_1)$ and $\To(v_2)$ in $\Vo_+$.
\item Perturb $v_1$ and $v_2$ by general Puiseux series into $T(\eta)^{-1}[V_+(\eta)]$ and project onto the eigenmodes.  This reveals the asymptotic incident and reflected fields, which in general blow up as the mode basis degenerates.
\item Identify special Puiseux perturbations of $v_1$ and $v_2$ in $T(\eta)^{-1}[V_+(\eta)]$ which are scattering fields distinguished by the choice of the incident field.  This will require proving that the given incident field does indeed produce a bounded scattering field as $\eta\to0$.  Proving this may be subject to a generic condition relating the perturbation of the Jordan form structure of the ambient space at $\eta=0$ to the transfer matrix $\To$.
\end{enumerate}

\begin{table}[H]
\centerline{\small
\renewcommand{\arraystretch}{1.2}
\begin{tabular}{| l | c | c || rcl |}
\hline
\multicolumn{3}{|c||}{\parbox{0.35\textwidth}{\centering limiting field ($\eta=0$) \\ $\Vo_++\Vo_-=\Vo$\, is $3$-dimensional}}
& \multicolumn{3}{c|}{\parbox{0.38\textwidth}{\centering scattering field on incident edge ($0<\eta\ll 1$) \\ $V_+(\eta)+V_-(\eta)=\CC^4$\, is $4$-dimensional}}
\\\hline
value at incident edge
& \!\parbox{0.098\textwidth}{\centering growth in $z$\\left ($z<0$)}\!
& \!\parbox{0.112\textwidth}{\centering growth in $z$\\right ($z>L$)}\!
& incident field &$\!+\!$& reflected field
\\\hline\hline
\multicolumn{6}{|l|}{{\bfseries\slshape nonresonant scattering from the left}: source field incident at $z=0$}
\\\hline
\multirow{2}{*}{$\mathring v_\rone= e_+ + c e_0 + c e_1\in\Vo$} 
& \multirow{2}{*}{linear} 
& \multirow{2}{*}{bounded}
& $v_\rtp $ &$\!+\!$& $r v_\lfl + r v_\lfe$
\\
& &
& \multicolumn{3}{c|}{(scattering of rightward oscillatory mode)}
\\\hline
\multirow{2}{*}{$\mathring v_\rtwo = e_2 + ce_0 + ce_1  \not\in\Vo$}
& \multirow{2}{*}{quadratic}
& \multirow{2}{*}{bounded}
& $\eta^\mthirds c\big( v_\rte$ &$\!+\!$& $ \zeta^2 v_\lfp + \zeta v_\lfe \big) + O(1)$
\\
& &
& \multicolumn{3}{c|}{(cancellation of high-amplitude modes)}
\\\hline\hline
\multicolumn{6}{|l|}{{\bfseries\slshape nonresonant scattering from the right}: source field incident at $z=L$}
\\\hline
\multirow{2}{*}{$\mathring v_\lone= e_1 + c e_0 + c e_+\in\Vo$}
& \multirow{2}{*}{linear}
& \multirow{2}{*}{linear} 
& $c v_\lfl $ &$\!+\!$& $r v_\rtp + r v_\rte$
\\
& &
& \multicolumn{3}{c|}{(scattering of leftward generalized eigenmode)}
\\\hline
\multirow{2}{*}{$\mathring v_\ltwo = e_2 + ce_0 + ce_+ \not\in\Vo$}
& \multirow{2}{*}{linear} 
& \multirow{2}{*}{quadratic}
& $\eta^\mthirds c\big( v_\lfp +\zeta^2 v_\lfe$ &$\!+\!$& $\zeta v_\rte \big) + O(1)$
\\
& &
& \multicolumn{3}{c|}{(cancellation of high-amplitude modes)}
\\\hline\hline
\multicolumn{6}{|l|}{{\bfseries\slshape resonant scattering from the left}: source field incident at $z=0$; guided frozen mode $\psi^\guided$ exists}
\\\hline
$e_0$
& \multirow{4}{*}{bounded}
& \multirow{4}{*}{bounded}
& $\eta^\third c v_\rtp$&$\!+\!$& $r v_\lfp + r v_\lfe$
\\
(resonant excitation of $\psi^\guided$)
& & & 
\multicolumn{3}{c|}{(inc. field $\to 0$;\; reflected field $\to e_0$)}
\\\hhline{-~~---}
$e_0$
& &
& $c v_\rte$ &$\!+\!$& $r v_\lfp + r v_\lfe$
\\
(excitation of $\psi^\guided$)
& & &
\multicolumn{3}{c|}{(total field $\to e_0$)}
\\\hline
\multirow{2}{*}{$\mathring w_+= e_+ + c e_1\in\Vo$}
& \multirow{2}{*}{linear}
& \multirow{2}{*}{bounded}
& $v_\rtp - \eta^\mthird c v_\rte$ &$\!+\!$& $\eta^\mthird(r v_\lfp + r v_\lfe)$
\\
& & &
\multicolumn{3}{c|}{(cancellation of high-amplitude modes)}
\\\hline\hline
\multicolumn{6}{|l|}{{\bfseries\slshape resonant scattering from the right}: source field incident at $z=L$; guided frozen mode $\psi^\guided$ exists}
\\\hline
$e_0$
& \multirow{4}{*}{bounded}
& \multirow{4}{*}{bounded}
& $\eta^\third c v_\lfl$ &$\!+\!$& $r v_\rte + \eta^\third r v_\rtp$
\\
(resonant excitation of $\psi^\guided$)
& & & 
\multicolumn{3}{c|}{(inc. field $\to 0$;\; reflected field $\to e_0$)}
\\\hhline{-~~---}
$e_0$
& &
& $c v_\lfp + c v_\lfe$ &$\!+\!$& $r v_\rte + \eta^\third r v_\rtp$
\\
(excitation of $\psi^\guided$)
& & &
\multicolumn{3}{c|}{(generic constants $c$:\; total field $\to e_0$)}
\\\hline
\multirow{2}{*}{$\mathring w_- = e_1 + ce_+\in\Vo$}
& \multirow{2}{*}{linear}
& \multirow{2}{*}{linear}
&  $\eta^\mthird(c v_\lfp + c v_\lfe)$ &$\!+\!$& $\eta^\mthird r v_\rte + r v_\rtp$ 
\\
& & &
\multicolumn{3}{c|}{\!\!\!(special $c$'s: cancellation of high-amplitude modes)\!\!\!}
\\\hline
\end{tabular}}
\caption{\small This table is a summary of Theorems~\ref{thm:leftscatteringreg}--\ref{thm:rightscatteringres}.
(The symbol $c$ stands for a constant, which may be different in every occurrence, and $r=r(\eta)$ stands for a Puiseux series that may be different in each occurrence.) 
Near but not at the stationary inflection point of the dispersion relation for the ambient periodic medium, which corresponds to $0<\eta\ll1$, the scattering fields resulting from incident fields coming from the left of the defect slab form a two-dimensional space, and so do the fields resulting from incident fields from the right.  A well chosen basis for each of these spaces consists of two scattering fields (rightmost column) whose limits as $\eta\to0$ have values at the incident edge that form a distinguished basis for the limiting scattering fields (left).
Although the scattering fields themselves are (convergent) Puiseux series in $\eta$, their incident and reflected components at the incident edge may blow up as $\eta\to0$.  Since $\zeta$ is a primitive cube root of $1$, one has $1+\zeta+\zeta^2=0$.}
\label{table:scattering}
\end{table}

\subsubsection{Nonresonant scattering}\label{nonresonant}

{\bfseries Nonresonant scattering from the left.}\;
In scattering from the left for $\eta\not=0$, the space of all scattering fields $\psi(z)$, evaluated at the right edge of the slab ($z=L$), comprise the two-dimensional space $V_+(\eta)$ of transmitted rightward-directed fields.  The values of the fields at the left edge of the slab ($z=0$) comprise the space $T^{-1}(\eta)[V_+(\eta)]$.  Application of the projections (\ref{Prp}--\ref{Ple}) to the field $\psi(0)$ reveals the source and reflected field mode components.  All of these projections except for $\Prp$ blow up to order $\eta^\mthirds$.  Thus, {\em even if the field $\psi(0)$ remains bounded as $\eta\to0$, the modal components making up the incident and reflected fields may become unbounded}, but their large amplitudes mutually cancel to produce a bounded field.

As $\eta\to0$, the space of transmitted fields $V_+(\eta)$ at the right edge of the slab and the space of corresponding field values at the left edge have limits $\Vo_+=\sspan\{e_+,e_0\}$ (at $z=L$) and $\To^{-1}[\Vo_+]$ (at $z=0$), respectively.  {\em In the nonresonant case, Theorem~\ref{thm:D} guarantees that $\To^{-1}[\Vo_+]$ is not contained in $\Vo=\sspan\{e_+,e_0,e_1\}$.}
Recall that $\Vo$ is the three-dimensional span of the limits of the rightward and leftward spaces $V_+$ and $V_-$.  The fact that limits of certain scattering fields at the left edge of the slab do not lie in $\Vo$ is a manifestation of the pathological behavior of wave scattering as one approaches the parameters $\kwz$ ($\eta=0$) of the frozen mode.

The two-dimensional limiting space $\To^{-1}[\Vo_+]$ of fields at $z=0$ retains essential information about the asymptotics of the scattering fields as $\eta\to0$.  This space is spanned by two distinguished vectors, which are treated fully in Theorem~\ref{thm:leftscatteringreg} below.  One of them spans the intersection $\To^{-1}[\Vo_+] \cap \Vo$, which is one-dimensional by Theorem~\ref{thm:D}.  We denote it by $\mathring v_\rone$, and it is the limiting value as $\eta\to0$ of the scattering field produced when the incident field is the rightward propagating mode $F(z)\vrp e^{i\krp z}$.
The other, denoted by $\mathring v_\rtwo$, has a nonzero $e_2$-component and thus does not lie in $\Vo$.  It is the limiting value as $\eta\to0$ of a scattering field that exhibits quadratic growth in $z$ to the left of the slab and is produced by an incident rightward evanescent field with amplitude that blows up like $\eta^\mthirds$.
{\em All fields resulting from scattering from the left are linear combinations of these two distinguished scattering problems}.

\smallskip
Before stating the theorem, let us take a look at scattering fields whose limits as $\eta\to0$ are equal to $\mathring v_\rone$ or $\mathring v_\rtwo$ at $z=0$.  We consider Puiseux-series perturbations in $\eta^\third$ of these two vectors into the space $T(\eta)^{-1}[V_+(\eta)]$ of scattering fields at $z\!=\!0$ and the projections onto their rightward (incoming) and leftward (reflected) components.  Denote these perturbations by
\begin{eqnarray}
  && v_\rone(\eta) = \mathring v_\rone + \eta^\third v_\rone^1 + O(\eta^\thirds), \label{rone} \\
  && v_\rtwo(\eta) = \mathring v_\rtwo + \eta^\third v_\rtwo^1 + O(\eta^\thirds). \label{rtwo}
\end{eqnarray}
Each family of fields $\psi(z;\eta)$ produced by scattering from the left, and whose value at $z=0$ admits a Puiseux series, satisfies
\begin{equation}\label{rcombination}
  \psi(0;\eta) = \alpha(\eta)v_\rone(\eta) + \beta(\eta)v_\rtwo(\eta)
\end{equation}
for some Puiseux series $\alpha(\eta)$ and $\beta(\eta)$.

The projections of $v_\rone(\eta)$ onto the four modes are
\begin{equation}\label{roneprojections}
  \renewcommand{\arraystretch}{1.3}
\left.
  \begin{array}{rl}
     \text{incident} &
     \renewcommand{\arraystretch}{1.3}
      \left\{
      \begin{array}{lcl}
         \Prp v_\rone &=& e_+ + \eta^\third [e_+,v_\rone^1]e_+ + O(\eta^\thirds) \\
         \Pre v_\rone &=& \eta^\mthird\left( \zeta^2 a + \zeta b \right)e_0 + O(1)
      \end{array}
      \right.
      \\
     \text{reflected} &
     \renewcommand{\arraystretch}{1.3}
      \left\{
      \begin{array}{lcl}
         \Plp v_\rone &=& \eta^\mthird\left( a + b \right)e_0 + O(1) \\
         \Ple v_\rone &=& \eta^\mthird\left( \zeta a + \zeta^2 b \right)e_0 + O(1)
      \end{array}
      \right.
  \end{array}
\right.
\;
\text{(bounded left scattering fields near $\mathring v_\rone$)},
\end{equation}
in which
\begin{equation}
  \renewcommand{\arraystretch}{1.3}
\left\{
  \begin{array}{lcl}
    a &=& - \frac{1}{C} [w_1,\mathring v_\rone] = \frac{1}{C} [w_1,e_1][e_1,\mathring v_\rone]=-\frac{1}{3k_1}[e_1,\mathring v_\rone]\,, \\
    b &=& -\frac{1}{C} [e_0,v_\rone^1]\,.
  \end{array}
\right.
\end{equation}
Although the incident and reflected fields are generically unbounded as $\eta\to0$, the total scattering field is bounded (recall that $1+\zeta+\zeta^2=0$).


\smallskip
The projections of $v_\rtwo(\eta)$ onto the four modes are 
\begin{equation}\label{rtwoprojections}
  \renewcommand{\arraystretch}{1.3}
\left.
  \begin{array}{rl}
     \text{incident} &
     \renewcommand{\arraystretch}{1.3}
      \left\{
      \begin{array}{lcl}
         \Prp v_\rtwo &=& O(\eta^\third) \\
         \Pre v_\rtwo &=& \frac{\zeta}{C}\eta^\mthirds e_0 + O(\eta^\mthird)
      \end{array}
      \right.
      \\
     \text{reflected} &
     \renewcommand{\arraystretch}{1.3}
      \left\{
      \begin{array}{lcl}
         \Plp v_\rtwo &=& \frac{1}{C}\eta^\mthirds e_0 + O(\eta^\mthird) \\
         \Ple v_\rtwo &=& \frac{\zeta^2}{C}\eta^\mthirds e_0 + O(\eta^\mthird)
      \end{array}
      \right.
  \end{array}
\right.
\;
\text{(bounded left scattering fields near $\mathring v_\rtwo$)}.
\end{equation}
Observe that the incident field and the reflected field, although in $V_+(\eta)$ and $V_-(\eta)$, are always of strict order $\eta^\mthirds$ and thus blow up.  This agrees with the fact that the limit of the scattering field is not in the space $\Vo$.

\begin{theorem}[Nonresonant pathological scattering from the left]\label{thm:leftscatteringreg}
Assume that $D(\eta)\not=0$ for all sufficiently small $\eta$, including $\eta=0$ so that $\mathring D\not=0$.
\renewcommand{\labelenumi}{\arabic{enumi}.}
\renewcommand{\labelenumii}{\alph{enumii}.}
\begin{enumerate}
\item
($\eta\!=\!0$) The two-dimensional space $\To^{-1}[\Vo_+]$ is spanned by two distinguished vectors $\mathring v_\rone$ and $\mathring v_\rtwo$, characterized by the conditions
\begin{eqnarray}
  && \varepsilon^+(\mathring v_\rone)=1, \quad \varepsilon^2(\mathring v_\rone)=0,
      \quad \text{\itshape i.e.}\;\; \mathring v_\rone = e_+ + \varepsilon^0(\mathring v_\rone) e_0 + \varepsilon^1(\mathring v_\rone) e_1, \\
  && \varepsilon^+(\mathring v_\rtwo)=0, \quad \varepsilon^2(\mathring v_\rtwo)=1,
      \quad \text{\itshape i.e.}\;\; \mathring v_\rtwo = \varepsilon^0(\mathring v_\rtwo) e_0 + \varepsilon^1(\mathring v_\rtwo) e_1 + e_2,
\end{eqnarray}
and $\mathring v_\rone$ is also characterized by the conditions
\begin{equation}\label{vorone}
  \sspan\{\mathring v_\rone\} \,=\, \To^{-1}[\Vo_+] \cap \Vo \,,
  \qquad \varepsilon^+(\mathring v_\rone)=1.
\end{equation}
Denote by\, $\mathring\psi_\rone(z)$\, and \,$\mathring\psi_\rtwo(z)$\, the solutions to the Maxwell ODEs at $\eta=0$ in the presence of the defect slab with values at $z=0$ equal to $\mathring v_\rone$ and $\mathring v_\rtwo$:
\begin{eqnarray}
  && \frac{d}{dz} \mathring\psi_\rone(z) = iJ\mathring A(z) \mathring\psi_\rone(z),
        \quad \mathring\psi_\rone(0) = \mathring v_\rone, \\
  && \frac{d}{dz} \mathring\psi_\rtwo(z) = iJ\mathring A(z) \mathring\psi_\rtwo(z),
        \quad \mathring\psi_\rtwo(0) = \mathring v_\rtwo.
\end{eqnarray}
Both fields are bounded for $z>L$, and
\begin{enumerate}
\item $\mathring\psi_\rone(z)$ grows linearly for $z<0$ unless $\varepsilon^1(\mathring v_\rone)=0$, in which case it is bounded;
\item $\mathring\psi_\rtwo(z)$ grows quadratically for $z<0$.
\end{enumerate}
\item  The fields $\mathring\psi_\rone(z)$ and $\mathring\psi_\rtwo(z)$ are limits, locally uniform in $z$ as $\eta\to0$, of two fields $\psi_\rone(z)$ and $\psi_\rtwo(z)$ that are Puiseux series in $\eta$ and are solutions to problems of scattering from the left with the following properties:
\begin{enumerate}
\item The incident field for $\psi_\rone(z)$ is the rightward propagating mode
\begin{equation}
  \psi_\rone^\inc(z) \,=\, F(z;\eta)e^{ik_\rtp(\eta)z}\vrp(\eta),  
\end{equation}
and at the left edge of the slab, 
\begin{eqnarray}
  \psi_\rone(0) &=& \vrp + \rho_\lfl(\eta)\,\vll + \rho_\lfe(\eta)\,\vle \label{psirone1} \\
                       &=& \vrp + \eta^\mthird \left( \rho_\lfl(0) + O(\eta^\third) \right)\vlp + \eta^\mthird \left( -\rho_\lfl(0) + O(\eta^\third) \right)\vle, \label{psirone2}
\end{eqnarray}
in which the reflection coefficients \,$\rho_\lfl$ and $\rho_\lfe$\, and the terms $O(\eta^\third)$ are Puiseux series.
\item The incident field for $\psi_\rtwo(z)$ is the scaled rightward evanescent mode
\begin{equation}
   \psi_\rtwo^\inc(z) \,=\, \textstyle{\frac{\zeta}{C}}\, \eta^\mthirds\,F(z;\eta)e^{ik_\rte(\eta)z}\vre(\eta),  
\end{equation}
and at the left edge of the slab, 
\begin{equation}\label{label8465}
  \textstyle{\frac{C}{\zeta}}\, \psi_\rtwo(0) =
    \eta^\mthirds\vre + \big( \eta^\mthirds\zeta^2 + \eta^\mthird b + O(1) \big)\vlp + \big( \eta^\mthirds\zeta - \eta^\mthird b + O(1) \big)\vle,
\end{equation}
in which $b$ is a constant.
\end{enumerate}
\item For $0<\eta\ll1$, the scattering fields $\psi_\rone(z)$ and $\psi_\rtwo(z)$ span the two-dimensional space of fields produced by scattering from the left, that is, where the incident field is of the form
\begin{equation}
  \psi^\inc(z) = c\, F(z;\eta)e^{ik_\rtp z}\vrp + d\, F(z;\eta)e^{ik_\rtp z}\vre\,.
\end{equation}
%
\item  If for $0<\eta\ll1$, $\psi(z;\eta)$ is a left scattering field such that $\psi(0;\eta)$ is a Puiseux series (in $\eta^\third$) with limit in $\To^{-1}[\Vo_+]\cap \Vo$ as $\eta\to0$, then the incident field at $z=0$ has the form
\begin{equation}\label{label3987}
  \psi^\inc(0;\eta) = \alpha(\eta) v_\rtp(\eta) + \eta^\mthird \beta(\eta) v_\rte(\eta)\,,
\end{equation}
in which $\alpha$ and $\beta$ are Puiseux series.  Conversely, if $\psi^\inc(z;\eta)$ is an incident field whose value at $z=0$ is of the form (\ref{label3987}), then the corresponding scattering field $\psi(z;\eta)$ satifties $\psi(0;0)\in\To^{-1}[\Vo_+]\cap \Vo$.
\end{enumerate}
\end{theorem}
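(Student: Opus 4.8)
The plan is to obtain both implications directly from the projection formulas of Proposition~\ref{prop:PuiseuxSeriesProjAsympt} and the \emph{exact} flux relations~(\ref{fluxmatrix2}), reducing everything to the behavior of two flux pairings. For a field produced by scattering from the left, the incident part at $z=0$ is the rightward projection $\psi^\inc(0)=\Pr\psi(0)=\Prp\psi(0)+\Pre\psi(0)$, so by (\ref{Prp}) and (\ref{Pre}) its two modal coefficients are
\begin{equation*}
  j_\rtp = \frac{[\vrp,\psi(0)]}{[\vrp,\vrp]} = [\vrp,\psi(0)],
  \qquad
  j_\rte = \frac{[\vle,\psi(0)]}{[\vle,\vre]} = \frac{[\vle,\psi(0)]}{-C\zeta^2\eta^\thirds},
\end{equation*}
where $[\vrp,\vrp]=1$ and $[\vle,\vre]=-C\zeta^2\eta^\thirds$ hold exactly by (\ref{fluxmatrix2}). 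Thus the whole statement is controlled by how the two pairings $[\vrp,\psi(0)]$ and $[\vle,\psi(0)]$ behave as Puiseux series in $\eta^\third$.

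For the forward implication I would argue as follows. Since $\vrp$ is analytic and $\psi(0)$ is assumed Puiseux, $j_\rtp=[\vrp,\psi(0)]=:\alpha(\eta)$ is a Puiseux series. For the second coefficient, $\vle\to e_0$, so $[\vle,\psi(0)]$ is a Puiseux series whose value at $\eta=0$ is $[e_0,\psi(0;0)]=-\varepsilon^2(\psi(0;0))$ by the dual-basis identity (\ref{dualbasis}). The hypothesis $\psi(0;0)\in\Vo=\sspan\{e_+,e_0,e_1\}$ is exactly the statement that the $e_2$-component vanishes, i.e.\ $\varepsilon^2(\psi(0;0))=0$; hence the numerator $[\vle,\psi(0)]$ has vanishing constant term and is divisible by $\eta^\third$. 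Dividing by $-C\zeta^2\eta^\thirds$ yields $j_\rte=\eta^\mthird\beta(\eta)$ with $\beta$ Puiseux, which is the asserted form (\ref{label3987}).

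For the converse I would reconstruct the scattering field from the prescribed incident data $j_\rtp=\alpha$, $j_\rte=\eta^\mthird\beta$. The right-hand side of the first system in (\ref{leftscattering}) is then $(\alpha,\,-C\zeta^2\eta^\third\beta)^{T}$, a bounded Puiseux series; since the system matrix is Puiseux with determinant $D(\eta)$ bounded away from $0$ in the nonresonant case, the transmitted coefficients $t_\rtp,t_\rte$ are Puiseux series. Hence $\psi(L)=t_\rtp\vrp+t_\rte\vre$ and $\psi(0)=T^{-1}\psi(L)$ are Puiseux, with $\psi(L;0)=t_\rtp(0)e_++t_\rte(0)e_0\in\Vo_+$, so $\psi(0;0)=\To^{-1}\psi(L;0)\in\To^{-1}[\Vo_+]$. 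To obtain $\psi(0;0)\in\Vo$ as well, I would compute the $e_2$-component using flux-unitarity of $\To$:
\begin{equation*}
  \varepsilon^2(\psi(0;0)) = -[e_0,\To^{-1}\psi(L;0)] = -[\To e_0,\psi(L;0)]
  = -\big(t_\rtp(0)[\To e_0,e_+]+t_\rte(0)[\To e_0,e_0]\big),
\end{equation*}
and observe that the right-hand side is precisely the $\eta\to0$ limit of the bottom row of the first system in (\ref{leftscattering}), whose value is $\lim_{\eta\to0}(-C\zeta^2\eta^\thirds j_\rte)=\lim_{\eta\to0}(-C\zeta^2\eta^\third\beta)=0$. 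Therefore $\varepsilon^2(\psi(0;0))=0$ and $\psi(0;0)\in\To^{-1}[\Vo_+]\cap\Vo$, completing the converse.

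I expect the main obstacle to be the converse, and in particular the recognition that the bottom scattering equation, evaluated at $\eta=0$, is literally the annihilation of the $e_2$-component of $\psi(0;0)$ once flux-unitarity of $\To$ is invoked. A secondary point requiring care is that the reflected coefficients $r_\lfp,r_\lfe$ obtained from the second system in (\ref{leftscattering}) generically diverge like $\eta^\mthirds$; one must note that they never enter the computation of $\psi(0;0)$, which is built solely from the bounded transmitted data $t_\rtp,t_\rte$ through the analytic inverse $\To^{-1}$, so that no spurious $e_2$-contribution survives in the limit.
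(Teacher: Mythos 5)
Your argument addresses only item~4 of the theorem, and there it is correct --- in fact your converse is cleaner than the paper's. The paper rules out a nonzero $e_2$-component of $\psi(0;0)$ by contradiction, using the projection expansions (\ref{roneprojections},\ref{rtwoprojections}) to show that an $e_2$-component would force a strict $\eta^\mthirds$ blow-up of the $\Pre$-projection, contradicting the assumed $O(\eta^\mthird)$ incident field; you instead read off $[\To e_0,\psi(L;0)]=0$ directly as the $\eta\to0$ limit of the second row of the first system in (\ref{leftscattering}) and convert it to $\varepsilon^2(\psi(0;0))=0$ via flux-unitarity. Both routes are valid, and your observation that $\psi(0;0)$ is built solely from the bounded transmitted data, so the divergent reflection coefficients never contaminate the limit, is exactly the right point to make.

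The gap is that items~1--3 --- which carry most of the content --- are not addressed at all. Item~1 requires constructing $\mathring v_\rone$ and $\mathring v_\rtwo$: one must show that the one-dimensional space $\To^{-1}[\Vo_+]\cap\Vo$ (Theorem~\ref{thm:D}) contains a vector with $\varepsilon^+=1$, which the paper gets from $|a|^2-|c|^2=[w,w]=|\alpha|^2$ together with the fact that $a=0$ would force $\To e_0\in\sspan\{e_0\}$, i.e.\ the resonant case; the growth rates then follow from the Jordan propagation (\ref{jordanpropagation}). Item~2a requires showing that the reflection coefficient $\rho_\lfl$ in the nondegenerate basis is \emph{bounded}, which does not follow from the projection formulas alone --- the paper extracts it from the conservation law (\ref{leftconservation}) in the form $|t_\rtp|^2+C|\rho_\lfl|^2=1$ with $C>0$, and only then concludes that $\rho_\lfe$ is also Puiseux and that $\lim\psi_\rone(0)=\mathring v_\rone$ via the characterization (\ref{vorone}). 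Item~2b is the most delicate: with the $\eta^\mthirds$-scaled evanescent incident field one must enforce boundedness of the total field order by order to derive $r_\lfp^0=\zeta^2$, $r_\lfe^0=\zeta$, $r_\lfp^1=-r_\lfe^1$ (yielding (\ref{label8465})), and then use the projections $\Pre$ and $\Prp$ to verify $\varepsilon^2=1$, $\varepsilon^+=0$ at the limit, identifying it with $\mathring v_\rtwo$. None of this machinery appears in your proposal, so as it stands it proves a corollary of the theorem rather than the theorem itself.
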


{\bfseries Nonresonant scattering from the right.}\;
Consider now a leftward field from the space $V_-(\eta)$, incident upon the slab from the right.  The resulting total field, evaluated at $z=0$ lies in the space $V_-(\eta)$, and, evaluated at $z=L$, lies in the space $T(\eta)[V_-(\eta)]$.  The asymptotic behavior of these scattering fields is captured by two distinguished scattering problems whose limits at $z=L$ are two distinguished vectors $\mathring v_\lone$ and $\mathring v_\ltwo$ that form a basis for the space $\To[\Vo_-]$.  The vector $\mathring v_\lone$ spans the 1D space $\To[\Vo_-]\cap\Vo$, and is the limit of the scattering field produced by the incident field that is a multiple of the leftward vector $v_\lfl$ at $z=L$.  The vector $\mathring v_\ltwo$ lies outside $\Vo$ and is the limiting value of the field produced by an incident field that blows up like $\eta^\mthirds$.

Consider, as before, Puiseux-series perturbations of $\mathring v_\lone$ and $\mathring v_\ltwo$ into the space $T(\eta)[V_-(\eta)]$ of scattering fields at $z\!=\!L$ and look at the projections onto their rightward (incoming) and leftward (reflected) components.  Denote these perturbations by
\begin{eqnarray}
  && v_\lone(\eta) = \mathring v_\lone + \eta^\third v_\lone^1 + O(\eta^\thirds), \label{lone} \\
  && v_\ltwo(\eta) = \mathring v_\ltwo + \eta^\third v_\ltwo^1 + O(\eta^\thirds). \label{ltwo}
\end{eqnarray}

The projections of $v_\lone(\eta)$ onto the four modes are
\begin{equation}\label{loneprojections}
  \renewcommand{\arraystretch}{1.3}
\left.
  \begin{array}{rl}
     \text{reflected} &
     \renewcommand{\arraystretch}{1.3}
      \left\{
      \begin{array}{lcl}
         \Prp v_\lone &=& [e_+,\mathring v_\lone]e_+ + \eta^\third [e_+,v_\lone^1]e_+ + O(\eta^\thirds) \\
         \Pre v_\lone &=& \eta^\mthird\left( \zeta^2 a + \zeta b \right)e_0 + O(1)
      \end{array}
      \right.
      \\
     \text{incident} &
     \renewcommand{\arraystretch}{1.3}
      \left\{
      \begin{array}{lcl}
         \Plp v_\lone &=& \eta^\mthird\left( a + b \right)e_0 + O(1) \\
         \Ple v_\lone &=& \eta^\mthird\left( \zeta a + \zeta^2 b \right)e_0 + O(1)
      \end{array}
      \right.
  \end{array}
\right.
\;
\hspace{-.75cm}\text{(bounded right scattering fields near $\mathring v_\lone$)},
\end{equation}
in which $a = -\frac{1}{C} [w_1,e_1]=1/(3k_1)$ and $b = \frac{1}{C} [e_0,v_\lone^1]$.
Again, the incident and reflected fields are generically unbounded as $\eta\to0$ but the total scattering field is bounded.

Because of the unique solution of the scattering problem, all four projections are determined by the incoming components $\Plp v_\lone$ and $\Ple v_\lone$ alone.  Both the incident and reflected field are bounded when $\zeta^2 a + \zeta b$ vanishes, and, generically in the $\eta\to0$ limit, the incident field contains the linearly growing mode to the right of the slab.

\smallskip
The projections of $v_\ltwo(\eta)$ onto the four modes are 
\begin{equation}\label{ltwoprojections}
  \renewcommand{\arraystretch}{1.3}
\left.
  \begin{array}{rl}
     \text{reflected} &
     \renewcommand{\arraystretch}{1.3}
      \left\{
      \begin{array}{lcl}
         \Prp v_\ltwo &=& [e_+,\mathring v_\ltwo]e_+ + \eta^\third[e_+,v_\ltwo^1]e_+ + O(\eta^\thirds) \\
         \Pre v_\ltwo &=& \frac{\zeta}{C}\eta^\mthirds e_0 + O(\eta^\mthird)
      \end{array}
      \right.
      \\
     \text{incident} &
     \renewcommand{\arraystretch}{1.3}
      \left\{
      \begin{array}{lcl}
         \Plp v_\ltwo &=& \frac{1}{C}\eta^\mthirds e_0 + O(\eta^\mthird) \\
         \Ple v_\ltwo &=& \frac{\zeta^2}{C}\eta^\mthirds e_0 + O(\eta^\mthird)
      \end{array}
      \right.
  \end{array}
\right.
\hspace{-.75cm}\text{(bounded right scattering fields near $\mathring v_\rtwo$)}.
\end{equation}
In this case again, the incident field and the reflected field, although in $V_-(\eta)$ and $V_+(\eta)$, are always of strict order $\eta^\mthirds$, and the limit of the scattering field is not in the space $\Vo=\Vo_-+\Vo_+$.

\begin{theorem}[Nonresonant pathological scattering from the right]\label{thm:rightscatteringreg}
Assume that $D(\eta)\not=0$ for all sufficiently small $\eta$, including $\eta=0$ so that $\mathring D\not=0$.
Items 2--4 hold under the conditions
\begin{equation}\label{generic0}
  [e_0,\To e_0]\not=0\,,
  \quad\text{\itshape i.e.,}\quad
  \To e_0 \not\in \Vo
\end{equation}
and
%
%
%
\begin{equation}\label{generic1}
  \mathring D_* \not=0,
\end{equation}
where $\mathring D_*$ is defined in~\ref{Doprime}.
\vspace{-2ex}
\renewcommand{\labelenumi}{\arabic{enumi}.}
\renewcommand{\labelenumii}{\alph{enumii}.}
\begin{enumerate}
\item
($\eta=0$) The two-dimensional space $\To[\Vo_-]$ is spanned by two distinguished vectors $\mathring v_\lone$ and $\mathring v_\ltwo$, characterized by the conditions
\begin{eqnarray}
  && \varepsilon^1(\mathring v_\lone)=1, \quad \varepsilon^2(\mathring v_\lone)=0,
      \quad \text{\itshape i.e.}\;\; \mathring v_\lone = \varepsilon^+(\mathring v_\lone)e_+ + \varepsilon^0(\mathring v_\lone) e_0 + e_1, \\
  && \varepsilon^1(\mathring v_\ltwo)=0, \quad \varepsilon^2(\mathring v_\ltwo)=1,
      \quad \text{\itshape i.e.}\;\; \mathring v_\ltwo = \varepsilon^+(\mathring v_\ltwo)e_+ + \varepsilon^0(\mathring v_\ltwo) e_0 + e_2,
\end{eqnarray}
and $\mathring v_\lone$ is also characterized by the conditions
\begin{equation}
  \sspan\{\mathring v_\lone\} \,=\, \To[\Vo_-] \cap \Vo \,,
  \qquad \varepsilon^1(\mathring v_\lone)=1.
\end{equation}
Denote by\, $\mathring\psi_\lone(z)$\, and \,$\mathring\psi_\ltwo(z)$\, the solutions to the Maxwell ODEs at $\eta\!=\!0$ in the presence of the defect slab with values at $z=L$ equal to $\mathring v_\lone$ and $\mathring v_\ltwo$:
\begin{eqnarray}
  && \frac{d}{dz} \mathring\psi_\lone(z) = iJ\mathring A(z) \mathring\psi_\lone(z),
        \quad \mathring\psi_\lone(L) = \mathring v_\lone, \label{IVPm1} \\
  && \frac{d}{dz} \mathring\psi_\ltwo(z) = iJ\mathring A(z) \mathring\psi_\ltwo(z),
        \quad \mathring\psi_\ltwo(L) = \mathring v_\ltwo. \label{IVPm2}
\end{eqnarray}
\begin{enumerate}
\item $\mathring\psi_\lone(z)$ grows linearly for $z>L$, and for $z<0$, it grows linearly unless $\varepsilon^1(\To^{-1}(\mathring v_\lone))=0$;
\item $\mathring\psi_\ltwo(z)$ grows quadratically for $z>L$, and for $z<0$, it grows linearly unless $\varepsilon^1(\To^{-1}(\mathring v_\ltwo))=0$.
\end{enumerate}
\item  The fields $\mathring\psi_\lone(z)$ and $\mathring\psi_\ltwo(z)$ are limits, locally uniform in $z$ as $\eta\to0$, of two fields $\psi_\lone(z)$ and $\psi_\ltwo(z)$ that are Puiseux series in $\eta$ and are solutions to problems of scattering from the left with the following properties:
\begin{enumerate}
\item The incident field for $\psi_\lone(z)$ is the scaled leftward mode
\begin{equation}
  \psi_\lone^\inc(z) \,=\, \eta^\mthird \frac{1}{k_1(1-\zeta^2)} \left( F(z-L)v_\lfp e^{ik_\lfp (z-L)} - F(z-L)v_\lfe e^{i k_\lfe (z-L)} \right), 
\end{equation}
and at the right edge of the slab,
\begin{equation}
  \psi_\lone(L) \,=\, {\textstyle\frac{1}{k_1(1-\zeta^2)}}\,\vll + r_\rtp(\eta)\,\vrp + r_\rte(\eta)\,\vre\,,
\end{equation}
in which the reflection coefficients \,$r_\rtp$\, and \,$r_\rte$\, are Puiseux series.
\item The incident field for $\psi_\ltwo(z)$ is
\begin{equation}
  \psi_\ltwo^\inc(z) \,=\, \textstyle{\frac{1}{C}}\, \eta^\mthirds \left( F(z-L)\vlp e^{ik_\lfp (z-L)} + \zeta^2 F(z-L)\vle e^{ik_\lfe (z-L)} \right),
\end{equation}
and at the right edge of the slab,
\begin{equation}
  \psi_\ltwo(L) \,=\,
   \textstyle{\frac{1}{C}} \left( \eta^\mthirds \vlp + \eta^\mthirds\zeta^2 \vle + \eta^\mthirds\zeta \vre \right) + v_+(\eta)\,,
\end{equation}
in which the vector $v_+(\eta)$ is a Puiseux series in the rightward space $V_+(\eta)$ with limit in $\Vo_+$.
\end{enumerate}
\item For $0<\eta\ll1$, the scattering fields $\psi_\lone(z)$ and $\psi_\ltwo(z)$ span the two-dimensional space of fields produced by scattering from the left, that is, where the incident field is of the form
\begin{equation}
  \psi^\inc(z) = c\, F(z-L)\vlp e^{ik_\lfp z} + d\, F(z-L)\vle e^{ik_\lfe z}\,.
\end{equation}
\item  If for $0<\eta\ll1$, $\psi(z;\eta)$ is a right scattering field such that $\psi(L;\eta)$ is a Puiseux series (in $\eta^\third$) with limit in $\To[\Vo_-]\cap \Vo$ as $\eta\to0$, then the incident field at $z=L$ has the form
\begin{equation}\label{label8325}
  \psi^\inc(L;\eta) = \eta^\mthird \big( \alpha(\eta) v_\lfp(\eta) + \beta(\eta) v_\lfe(\eta) \big) \,,
\end{equation}
in which $\alpha$ and $\beta$ are complex Puiseux series.
Conversely, if $\psi^\inc(z;\eta)$ is an incident field whose value at $z=L$ is of the form (\ref{label8325}), then the corresponding scattering field $\psi(z;\eta)$ satisfies $\psi(L;0)\in\To[\Vo_-]\cap \Vo$.
\end{enumerate}
\end{theorem}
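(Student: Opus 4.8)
The plan is to follow the architecture of the proof of Theorem~\ref{thm:leftscatteringreg}, exchanging the roles of the rightward and leftward limiting spaces: the incident edge moves from $z=0$ to $z=L$, the space $\To^{-1}[\Vop]$ is replaced by $\To[\Vom]$, and the quantity $\Do$ governing the distinguished basis is replaced by $\mathring{D}_*$ of (\ref{Doprime}). The structural reason the extra hypotheses (\ref{generic0}) and (\ref{generic1}) appear here, but not in Theorem~\ref{thm:leftscatteringreg}, is that in the left problem the determinant controlling the distinguished basis coincides with the nonresonance quantity $\Do$, whereas in the right problem it is the independent quantity $\mathring{D}_*$, reflecting the inherent directional asymmetry of the slow-light medium.

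First I would establish item~1. By Theorem~\ref{thm:D}, in the nonresonant case $\Do\neq0$ the intersection $\To[\Vom]\cap\Vo$ is one-dimensional. Consider the coordinate map $v\mapsto(\varepsilon^1(v),\varepsilon^2(v))$ on the two-dimensional space $\To[\Vom]=\sspan\{\To e_0,\To e_1\}$. Using $\varepsilon^1=-[e_1,\cdot]$ and $\varepsilon^2=-[e_0,\cdot]$ from (\ref{dualbasis}), the determinant of this map in the basis $\{\To e_0,\To e_1\}$ equals $-\mathring{D}_*$; hence (\ref{generic1}) makes it an isomorphism and produces unique vectors $\mathring v_\lone$ (with $\varepsilon^1=1$, $\varepsilon^2=0$) and $\mathring v_\ltwo$ (with $\varepsilon^1=0$, $\varepsilon^2=1$). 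Since $\varepsilon^2=-[e_0,\cdot]$ annihilates $\Vo=\sspan\{e_+,e_0,e_1\}$, the vector $\mathring v_\lone$ lies in $\Vo$ and therefore spans $\To[\Vom]\cap\Vo$. The growth assertions follow from the Jordan propagation matrix (\ref{jordanpropagation}): for $z>L$ the $e_2$-component governs quadratic growth (present for $\mathring v_\ltwo$, absent for $\mathring v_\lone$, which grows linearly through its $e_1$-component), while for $z<0$ one uses that $\To^{-1}\mathring v_\lone,\To^{-1}\mathring v_\ltwo\in\Vom=\sspan\{e_0,e_1\}$, so both grow at most linearly, with linear growth exactly when the $e_1$-component $\varepsilon^1(\To^{-1}\mathring v_{\cdot})$ is nonzero.

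Next I would treat item~2, the heart of the theorem. I would perturb $\mathring v_\lone,\mathring v_\ltwo$ to Puiseux series $v_\lone(\eta),v_\ltwo(\eta)$ in $T(\eta)[V_-(\eta)]$ as in (\ref{lone})--(\ref{ltwo}), compute their modal components through Proposition~\ref{prop:PuiseuxSeriesProjAsympt}, obtaining the incident/reflected splittings (\ref{loneprojections}) and (\ref{ltwoprojections}), and then solve the right-scattering equations (\ref{rightscattering}) for the transmitted and reflected coefficients. The main obstacle is boundedness: the prescribed incident fields blow up (of order $\eta^\mthird$ for $\psi_\lone$ and of order $\eta^\mthirds$ for $\psi_\ltwo$), and one must prove that the high-amplitude modes cancel to leave a convergent Puiseux series with the stated limit. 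This reduces to showing that (\ref{rightscattering}) is solvable with coefficients of the claimed Puiseux orders, which is delicate because the $2\times2$ flux-interaction matrices in (\ref{rightscattering}) degenerate as $\eta\to0$: since $\vlp,\vle,\vre\to e_0$, every entry tends to $[e_0,\To e_0]$, so the leading matrix has rank one. Here the genericity conditions do their work: (\ref{generic0}) keeps the common leading entry $[e_0,\To e_0]\neq0$ (rank exactly one, not zero), and (\ref{generic1}), $\mathring{D}_*\neq0$, governs the subleading structure so that the system resolves at Puiseux order. Unique solvability for $\eta\neq0$ is guaranteed by nonresonance (cf. Theorem~\ref{thm:noguidemodesundernondegeneracycondition}); combined with these conditions and the identity $1+\zeta+\zeta^2=0$, this yields bounded coefficients, the Puiseux reflected component $v_+(\eta)\in V_+(\eta)$ with limit in $\Vop$, and the incident fields in exactly the scaled forms of items~2a and~2b.

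Finally, items~3 and~4 are comparatively routine. For item~3, the map sending an incident field to its scattering field is linear and injective on the two-dimensional incident space for $0<\eta\ll1$ (unique solvability of (\ref{scattering2})), and $\psi_\lone,\psi_\ltwo$ are independent because their limits $\mathring v_\lone,\mathring v_\ltwo$ are; hence they span. For item~4, writing a right scattering field as $\psi(L;\eta)=\alpha(\eta)v_\lone(\eta)+\beta(\eta)v_\ltwo(\eta)$, the limit lies in $\To[\Vom]\cap\Vo=\sspan\{\mathring v_\lone\}$ precisely when the $v_\ltwo$-component is subleading; since $\Pl v_\ltwo$ is of order $\eta^\mthirds$ while $\Pl v_\lone$ is of order $\eta^\mthird$, tracking this through the incident projections of item~2 forces the incident field to carry the $\eta^\mthird$ prefactor displayed in (\ref{label8325}), and the converse follows by reading the same computation backwards. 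The one genuinely new verification throughout is the degeneration analysis of (\ref{rightscattering}) in item~2; everything else transcribes the left-scattering argument with $\Do$ replaced by $\mathring{D}_*$.
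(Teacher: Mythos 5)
Your overall route coincides with the paper's: isolate the transmitted-coefficient system (the first block of (\ref{rightscattering})), observe that its leading matrix is the rank-one matrix $[e_0,\To e_0]\,\xi_0\xi_0^T$ with $\xi_0=[1\;\,1]^T$, and resolve the resulting singular perturbation problem using (\ref{generic0}) (so the leading matrix has rank exactly one) and (\ref{generic1}) (which is precisely the invertibility of the reduced matrix $\mathcal{M}_0$ of Lemma~\ref{lemma:Mtb}, whose determinant is a nonzero multiple of $k_1^2\mathring D_*$). The paper carries this out via Lemma~\ref{lemma:Mtb}, obtaining $\psi(0)$ as a bounded Puiseux series with nonzero limit in $\Vo_-$ and then transferring to $z=L$ by $T$; your sketch correctly locates the degeneration and the role of each hypothesis, but be aware that the order-by-order resolution you defer to ``the system resolves at Puiseux order'' is exactly the content of that lemma and is where all the work sits (including, for item~4, the converse direction).

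There is one genuine discrepancy, in item~1. Your construction of the distinguished basis uses the coordinate map $v\mapsto(\varepsilon^1(v),\varepsilon^2(v))$ on $\To[\Vo_-]$, whose determinant in the basis $\{\To e_0,\To e_1\}$ is indeed $-\mathring D_*$; so your item~1 is established only under the extra hypothesis (\ref{generic1}). But the theorem asserts item~1 under nonresonance alone, and the paper proves it without $\mathring D_*\not=0$: write the spanning vector of the one-dimensional space $\To[\Vo_-]\cap\Vo$ as $v=\To w=ae_++be_0+ce_1$ with $w=\alpha e_1+\beta e_0\in\Vo_-$; flux-unitarity and (\ref{jordanform}) give $|a|^2-|c|^2=[v,v]=[w,w]=-|\alpha|^2\leq 0$, so $c=0$ would force $a=\alpha=0$ and hence $\To e_0\in\sspan\{e_0\}$, which contradicts $\Do\not=0$ by Theorem~\ref{thm:D}(b); thus $c\not=0$, one normalizes $\varepsilon^1(\mathring v_\lone)=1$, and $\mathring v_\ltwo$ is any vector of $\To[\Vo_-]$ with $e_2$-component normalized to $1$, corrected by a multiple of $\mathring v_\lone$ to kill its $e_1$-component. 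You should replace your determinant argument by this flux-unitarity argument (or explicitly weaken your item~1 to the case $\mathring D_*\not=0$). The growth statements and items~3--4 as you outline them agree with the paper.
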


\subsubsection{Resonant scattering}\label{resonant}

The resonant regime is characterized by the guided frozen mode that occurs when $\To e_0 = \ell e_0$ at $\eta=0$.  This mode is resonantly excited by the rightward propagating mode $F(z)e^{iKz}v_\rtp$ and by the leftward mode $F(z-L)e^{iK(z-L)}v_\lfl$.  
According to Theorem~\ref{thm:D}, the resonant case is also characterized by the non-generic algebraic condition that $\To[\Vo]=\Vo$, that is, the degenerate 3D span of the limiting rightward and leftward spaces (as $\eta\rightarrow 0$) transfers into itself across the slab.

Resonant excitation of the guided frozen mode $\psi^0(z)$ means that it is the limiting value of a scattering field as $\eta\to0$.  Such a scattering field is a (bounded) Puiseux series perturbation of $\psi^0(z)$, which, on either side of the slab is (a multiple of) $e_0$.  Thus, consider a field whose value at $z=0$ or $z=L$ is
\begin{equation}\label{resproj1}
  v_0(\eta) = e_0 + \eta^\third v_0^1 + \eta^\thirds v_0^2 + O(\eta)\,.
\end{equation}
The incoming and outgoing fields at the left and right edges of the slab are obtained by applying the projections onto the eigenmodes,
\begin{equation}\label{resproj2}
  \renewcommand{\arraystretch}{1.3}
\left.
  \begin{array}{rl}
     \text{rightward}\hspace*{-5pt} &
     \renewcommand{\arraystretch}{1.3}
      \left\{\hspace*{-3pt}
      \begin{array}{lcl}
         \Prp v_0 &\!=\!& \eta^\third[e_+,v_0^1]e_+ + \eta^\thirds [e_+,v_0^2]e_+ + O(\eta) \\
         \Pre v_0 &\!=\!& \frac{-1}{C} \left[ \eta^\mthird\zeta[e_0,v_0^1]e_0
                               + \left( [w_2,e_0]+\zeta^2[w_1,v_0^1]+\zeta[e_0,v_0^2] \right)e_0 + \zeta^2[e_0,v_0^1]w_1 + O(\eta^\third) \right]
      \end{array}
      \right.
      \\
     \text{leftward}\hspace*{-5pt} &
     \renewcommand{\arraystretch}{1.3}
      \left\{\hspace*{-3pt}
      \begin{array}{lcl}
         \Plp v_0 &\!=\!& \frac{-1}{C} \left[ \eta^\mthird[e_0,v_0^1]e_0
                               + \left( [w_2,e_0]+[w_1,v_0^1]+[e_0,v_0^2] \right)e_0 + [e_0,v_0^1]w_1 + O(\eta^\third) \right] \\
         \Ple v_0 &\!=\!& \frac{-1}{C} \left[ \eta^\mthird\zeta^2[e_0,v_0^1]e_0
                               + \left( [w_2,e_0]+\zeta[w_1,v_0^1]+\zeta^2[e_0,v_0^2] \right)e_0 + \zeta[e_0,v_0^1]w_1 + O(\eta^\third) \right]
      \end{array}
      \right.
  \end{array}
\right.
\end{equation}
Since $1+\zeta+\zeta^2=0$, the $O(\eta^\mthird)$ terms cancel upon summing the projections, as do all the terms comprising the $O(1)$ part except for $[w_2,e_0]=\overline{\varepsilon^2(w_2)}$, which is nonzero by (\ref{LowOrderExpanEigenvecSpanRel}).

It turns out that, if $v_0(\eta)$ is the solution of a scattering problem, then necessarily
\begin{equation}
  \varepsilon^2(v_0^1) = -[e_0,v_0^1] = 0,
\end{equation}
so that the $\eta^\mthird$ terms vanish and {\em the projections onto all modes are bounded}.

\medskip


\begin{theorem}[Resonant pathological scattering from the left]\label{thm:leftscatteringres}
Suppose that $\To e_0 = \ell e_0$ for some nonzero $\ell\in\CC$, or, equivalently, $\mathring D=0$.  Items 2--4 hold under the condition
\begin{equation}\label{generic2}
[e_+, \To e_1] \not= 0.
\end{equation}
This condition is equivalent in the resonant case $\mathring D = 0$ to any one of the following conditions: $[e_+, \To^{-1} e_1] \not= 0$, $[e_+, \To w_1][e_+, \To^{-1} w_1]\not= 0$, $\To[\mathring V_+]\not= \mathring V_+$, or $\To[\mathring V_-]\not= \mathring V_-$.

\renewcommand{\labelenumi}{\arabic{enumi}.}
\renewcommand{\labelenumii}{\alph{enumii}.}
\begin{enumerate}
  \item The two-dimensional space $\To^{-1}[\Vo_+]\subset\Vo$ is spanned by $e_0$ and a vector $\mathring w_+$  characterized by
\begin{equation}
  \varepsilon^+(\mathring w_+)=1,
  \quad
  \varepsilon^0(\mathring w_+)=0,
  \quad
  \text{\itshape i.e.}\;\; \mathring w_+ = e_+ + \varepsilon^1(\mathring w_+) e_1\,.
\end{equation}
Denote by $\mathring\phi_\rzero(z)$ and $\mathring\phi_+(z)$ the solution to the Maxwell ODEs in the presence of the defect slab with values at $z=0$ equal to $e_0$ and $\mathring w_+$:
\begin{eqnarray}
  && \frac{d}{dz} \mathring\phi_\rzero(z) = iJ\mathring A(z) \mathring\phi_\rzero(z),
        \quad \mathring\phi_\rzero(0) = e_0\,, \\
  && \frac{d}{dz} \mathring\phi_+(z) = iJ\mathring A(z) \mathring\phi_+(z),
        \quad \mathring\phi_+(0) = \mathring w_+\,.
\end{eqnarray}
\begin{enumerate}
\item $\mathring\phi_\rzero(z)$ is (a nonzero multiple of) the guided frozen mode;
\item $\mathring\phi_+(z)$ is bounded for all $z>L$ and grows linearly for $z<0$. In particular, $\varepsilon^1(\mathring w_+)\not=0$.
\end{enumerate}
\item The guided frozen mode $\mathring\phi_\rzero(z)$ is the limit, locally uniform in $z$ as $\eta\to0$, of two fields $\phi_\rone(z)$ and $\phi_\rtwo(z)$ that are Puiseux series in $\eta$ and are solutions of problems of scattering from the left with the following properties:
\begin{enumerate}
  \item ({\bfseries Resonant excitation of the guided frozen mode.}) The incident field for $\phi_\rone(z)$ is the scaled rightward propagation mode 
\begin{equation}
  \phi_\rone^\inc = \eta^\third \zeta k_1[\To e_+, e_1] F(z)\vrp(\eta) e^{ik_\rtp(\eta) z},
\end{equation}
and at the left edge of the slab,
\begin{equation}
  \phi_\rone(0) = \eta^\third \ell\zeta k_1[\To e_+, e_1]\vrp
            + a(\eta) \vlp + b(\eta) \vle\,,
\end{equation}
in which $a(\eta)$ and $b(\eta)$ are Puiseux series with $a(0)+b(0)=1$.  (The excitation is resonant because the incident field vanishes as $\eta\to0$.)
\item The incident field for $\phi_\rtwo(z)$ is the rightward evanescent mode
\begin{equation}
   \phi_\rtwo^\inc = c\,F(z)\vre e^{ik_\rte(\eta) z},
   \qquad
   c = -\frac{\,\ell\, k_1^2\,[\To e_+,e_1][\To e_1, e_+]}{\zeta[\To e_+,e_+]}\,,
\end{equation}
and at the left of the slab,
\begin{equation}
  \phi_\rtwo(0) = c\,\vre
                     + a(\eta) \vlp
                     + b(\eta)\vle\,.
\end{equation}
in which $a(\eta)$ and $b(\eta)$ are Puiseux series.
\end{enumerate}
\item The field $\mathring\phi_+(z)+c\,\mathring\phi_\rzero(z)$, for some constant $c$, is the limit, locally uniform in $z$ as $\eta\to0$, of the singular linear combination
\begin{equation}
  \phi_+(z) \,=\, \eta^\mthird \frac{1}{\,\ell\zeta k_1[\To e_+, e_1]} \left( \phi_\rone(z) - \phi_\rtwo(z) \right)\,,
\end{equation}
which is the scattering field resulting from the incident field
\begin{equation}
  \phi_+^\inc(z) \,=\, F(z)v_\rtp e^{ik_\rtp z} \,+\, \eta^\mthird \zeta k_1\frac{[\To e_1,e_+]}{[\To e_+, e_+]} F(z)v_\rte e^{ik_\rte z}\,.
\end{equation}
\item For $0<\eta\ll1$, any two of the fields $\phi_+(z)$, $\phi_\rone(z)$, $\phi_\rtwo(z)$ span the two-dimensional space of fields produced by scattering from the left, where the incident field is of the form 
\begin{equation}
  \psi^\inc(z) = c\, F(z)\vrp e^{ik_\rtp z} + d\, F(z)\vre e^{ik_\rte z}\,.
\end{equation}
\end{enumerate}
\end{theorem}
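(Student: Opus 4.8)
The plan is to run the six-step strategy of sec.~\ref{sec:main} in the resonant regime $\To e_0=\ell e_0$, reducing every statement to the Jordan-basis normal forms (\ref{To}) and (\ref{Toinv}) of $\To$ and to the eigenvector and projection expansions of Propositions~\ref{prop:PuiseuxSeriesEigenvAsympt} and~\ref{prop:PuiseuxSeriesProjAsympt}. I would first clear the preliminary equivalences of the genericity hypothesis (\ref{generic2}). Reading coefficients off (\ref{To}) and (\ref{Toinv}) with the flux matrix of (\ref{jordanform}) gives $[e_+,\To e_1]=a_{+1}$, $[e_+,\To^{-1}e_1]=b_{+1}$, $[e_+,\To w_1]=\kone a_{+1}$ and $[e_+,\To^{-1}w_1]=\kone b_{+1}$; the unitarity relation $\bar a_{++}a_{+1}=\bar a_{1+}a_{11}$ with $a_{++},a_{11}\ne0$ forces $a_{+1}=0\Leftrightarrow a_{1+}=0$, while the column structure of (\ref{To}) gives $\To[\Vop]=\Vop\Leftrightarrow a_{1+}=0$ and $\To[\Vom]=\Vom\Leftrightarrow a_{+1}=0$. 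Since $\To[\Vop]=\Vop\Leftrightarrow\To^{-1}[\Vop]=\Vop$, all the listed conditions coincide.

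For item~1, Theorem~\ref{thm:D} gives $\dim(\To^{-1}[\Vop]\cap\Vo)=2$ in the resonant case, hence $\To^{-1}[\Vop]\subset\Vo$; as $\To e_0=\ell e_0\in\Vop$ we have $e_0\in\To^{-1}[\Vop]$, and normalizing $\To^{-1}e_+=b_{++}e_++b_{0+}e_0+b_{1+}e_1$ (with $b_{++}\ne0$, and $b_{1+}\ne0$ by genericity) and subtracting a multiple of $e_0$ produces $\mathring w_+=e_++\varepsilon^1(\mathring w_+)e_1$ with $\varepsilon^1(\mathring w_+)=b_{1+}/b_{++}\ne0$. The growth assertions then read off (\ref{jordanpropagation}): $\mathring\phi_\rzero(z)=T(0,z)e_0$ is precisely the guided frozen mode (\ref{guidedfrozenmode}), bounded on both sides because $\To e_0=\ell e_0$; and $\mathring\phi_+(z)$ is bounded for $z>\zend$ since $\To\mathring w_+\in\Vop=\sspan\{e_+,e_0\}$, while for $z<0$ its nonzero $e_1$-component activates the linearly growing mode $(e_1+ize_0)e^{i\kzero z}$.

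The core is items~2 and~3, which I would handle by the \emph{backward} construction suggested by (\ref{resproj1})--(\ref{resproj2}): parametrize a left scattering field by its value $v_0(\eta)=e_0+\eta^\third v_0^1+\eta^\thirds v_0^2+O(\eta)$ at the incident edge, subject to $v_0\in T^{-1}[V_+]$. The first point is that any such field satisfies $\varepsilon^2(v_0^1)=-[e_0,v_0^1]=0$: writing $Tv_0=\ell e_0+\eta^\third\To v_0^1+O(\eta^\thirds)$ and noting that the $e_2$-direction enters $V_+$ only at order $\eta^\thirds$ (through $\eta^\thirds\zeta^2 w_2$ in $\vre$), the $\eta^\third$ coefficient of the $e_2$-component of $Tv_0$ must vanish, i.e. $\varepsilon^2(\To v_0^1)=-\bar\ell^{-1}[e_0,v_0^1]=0$. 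This kills the $\eta^\mthird$ singular terms in (\ref{resproj2}), so all four modal projections are bounded Puiseux series (and $1+\zeta+\zeta^2=0$ gives the further cancellation that keeps the total field bounded). With this in hand, $\Prp v_0=\eta^\third[e_+,v_0^1]e_++O(\eta^\thirds)$ is the propagating incident amplitude, of order $\eta^\third$, while $\Pre v_0=O(1)$ is the evanescent incident part; the field $\phi_\rone$ is singled out by requiring the evanescent incident to vanish, $[\vle,v_0]=0$, whose leading order reads $[w_2,e_0]+\zeta^2[w_1,v_0^1]+\zeta[e_0,v_0^2]=0$, and $\phi_\rtwo$ by requiring the propagating incident to vanish, $[\vrp,v_0]=0$, whose leading order reads $[e_+,v_0^1]=0$. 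The normalization $a(0)+b(0)=1$ of the reflected field for $\phi_\rone$ is then automatic: since $v_0(0)=e_0$ and its incident field tends to $0$, the reflected field $a(\eta)\vlp+b(\eta)\vle$ must tend to $e_0$, forcing $a(0)+b(0)=1$. For item~3, because $\phi_\rone(0)$ and $\phi_\rtwo(0)$ share the limit $e_0$ their difference is $O(\eta^\third)$, and $\eta^\mthird$ times the stated constant recovers the next Puiseux coefficient, which I identify with $\mathring w_++c\,e_0$; the incident field of $\phi_+$ is the matching combination of the two incident fields, simplifying to the stated propagating-plus-evanescent form in which the evanescent part blows up as $\eta^\mthird$.

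Item~4 is then immediate from the solvability of the scattering problem for $\eta\ne0$ (which holds under the standing assumption that no guided mode exists for small $\eta>0$, i.e. $D(\eta)\ne0$): the left scattering fields form a two-dimensional space, $\phi_+$ is a nonzero linear combination of $\phi_\rone$ and $\phi_\rtwo$, and pairwise independence follows by comparing incident data (propagating versus evanescent for $\{\phi_\rone,\phi_\rtwo\}$) and the distinct limits $e_0$ and $\mathring w_+$ for the pairs involving $\phi_+$. I expect the main obstacle to be the existence half of items~2--3: constructing the Puiseux curves $v_0(\eta)\in T^{-1}[V_+]$ order by order so that the incident-structure constraints are consistently solvable, and verifying that the resulting incident amplitudes agree with the exact formulas stated --- in particular that $[e_+,v_0^1]=\zeta\kone[\To e_+,e_1]$ and that the evanescent constant $c$ for $\phi_\rtwo$ comes out as claimed. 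This requires matching $Tv_0\in V_+$ to order $\eta^\thirds$ and solving for $v_0^1,v_0^2$ in terms of $\To$, and it is exactly here that genericity $[e_+,\To e_1]\ne0$ guarantees the leading incident coefficients are nonzero, so that the $\eta^\third$ and $O(1)$ power laws are sharp and the constructed fields are genuinely distinct.
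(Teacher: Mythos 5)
Your treatment of the preliminary equivalences and of item~1 is sound and essentially coordinates-based restatements of the paper's arguments: reading $[e_+,\To e_1]=a_{+1}$, $[e_+,\To w_1]=\kone a_{+1}$, and the invariance of $\Vo_\pm$ off the normal forms (\ref{To}), (\ref{Toinv}) together with $\bar a_{++}a_{+1}=\bar a_{1+}a_{11}$ is a clean alternative to the paper's contrapositive via Theorem~\ref{thm:D}, and your construction of $\mathring w_+$ and the growth statements from (\ref{jordanpropagation}) match the paper.

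The genuine gap is in items~2 and~3, and you have in fact named it yourself: your ``backward'' parametrization by $v_0(\eta)=e_0+\eta^\third v_0^1+\cdots\in T^{-1}[V_+]$ only extracts \emph{necessary} conditions on a bounded scattering field (the vanishing of $\varepsilon^2(v_0^1)$, the leading-order relations singling out $\phi_\rone$ and $\phi_\rtwo$), but it never establishes that such fields \emph{exist} as convergent Puiseux series, nor that their incident data are exactly the stated multiples of $\eta^\third v_\rtp$ and $v_\rte$. This is precisely where all the work lies. The paper proceeds forward: it fixes the incident data ($j_\rtp=\eta^\third$, $j_\rte=0$, and then $j_\rtp=0$, $j_\rte=1$), so the first equation of (\ref{leftscattering}) becomes a singular system $M(\eta)t(\eta)=b(\eta)$ whose leading matrix $M_0=[\To e_+,e_+]\,\epsilon_1\epsilon_1^T$ is rank one (because $\To e_0=\ell e_0$ kills three of the four entries at $\eta=0$), and then invokes Lemma~\ref{lemma:Mtb} to obtain a convergent Puiseux solution together with the exact leading coefficient $t_0$ --- a nonzero multiple of $\epsilon_2$ equal to $(\zeta[\To e_+,w_1])^{-1}$ in case~(i) and the analogous constant in case~(ii). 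The hypotheses of that lemma are exactly where the genericity condition enters, as $\det\mathcal M_0=-\zeta^2[\To e_+,w_1][\To w_1,e_+]\not=0$, and its conclusion is what yields $\psi(L)\to(\mathrm{const})\,e_0$, hence $\psi(0)\to c_\rone e_0$, the constants $c_\rone$, $c_\rtwo$, and thereby the normalizations in items~2a, 2b and the cancellation of leading terms in item~3. An order-by-order construction along your lines would have to reprove the solvability of the singular recursion at every order \emph{and} the convergence of the resulting series; without that (or Lemma~\ref{lemma:Mtb}), the existence half of items~2--3, and with it item~4, remains unproven.
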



\begin{theorem}[Resonant pathological scattering from the right]\label{thm:rightscatteringres}
Assume the resonant condition, that $\To e_0 = \ell e_0$ for some $\ell\in\CC$, or, equivalently, $\mathring D=0$.  Items 2--4 hold under the condition
\begin{equation}\label{delta}
  \delta \,:=\, \ell - \left( |\ell|^2+1 \right) [e_1,\To e_1]  \,\not=\,  0\,.
\end{equation}
\vspace{-2ex}
\renewcommand{\labelenumi}{\arabic{enumi}.}
\renewcommand{\labelenumii}{\alph{enumii}.}
\begin{enumerate}
  \item The two-dimensional space $\To[\Vo_-]\subset\Vo$ is spanned by $e_0$ and a vector $\mathring w_-$ characterized by
\begin{equation}
  \varepsilon^1(\mathring w_-)=1,
  \quad
  \varepsilon^0(\mathring w_-)=0,
  \quad
  \text{\itshape i.e.}\;\; \mathring w_- = \varepsilon^+(\mathring w_-) e_+ + e_1\,.
\end{equation}
Denote by $\mathring\phi_\lzero(z)$ and $\mathring\phi_-(z)$ the solution to the Maxwell ODEs in the presence of the defect slab with values at $z=L$ equal to $e_0$ and $\mathring w_+$:
\begin{eqnarray}
  && \frac{d}{dz} \mathring\phi_\lzero(z) = iJ\mathring A(z) \mathring\phi_\lzero(z),
        \quad \mathring\phi_\lzero(L) = e_0\,, \\
  && \frac{d}{dz} \mathring\phi_-(z) = iJ\mathring A(z) \mathring\phi_-(z),
        \quad \mathring\phi_-(L) = \mathring w_-\,.
\end{eqnarray}
\begin{enumerate}
\item $\mathring\phi_\lzero(z)$ is (a nonzero multiple of) the guided frozen mode;
\item $\mathring\phi_-(z)$ grows linearly for $z<0$ and for $z>L$.
\end{enumerate}
\item The guided frozen mode $\mathring\phi_\lzero(z)$ is the limit, locally uniform in $z$ as $\eta\to0$, of a field $\phi_\lone(z)$ that is a Puiseux series in $\eta$ and satisfies the problem of scattering from the left with incident field equal to the scaled leftward mode
\begin{equation*}
  \phi^\inc_\lone(z) \,=\, c\, \left( F(z-L)v_\lfp e^{ik_\lfp(z-L)} - F(z-L)v_\lfe e^{ik_\lfe(z-L)} \right)
\end{equation*}
with value at the right edge of the slab
\begin{equation}
  \phi^\inc_\lone(L) \,=\, c\,\eta^\third\vll\,,
\end{equation}
in which
$c = \bar\ell/[3\ell\,k_1^4(1-\zeta^2)(2+|\ell|^2)]^{-1}$.
\item The field $\mathring\phi_-(z) + c\mathring\phi_\lzero(z)$, for some constant $c$, is the limit, locally uniform in $z$ as $\eta\to0$, of a field $\phi_-(z)$ that is a Puiseux series in~$\eta$ and satisfies the problem of scattering from the left with incident field that blows up as $\eta\to0$,
\begin{equation}
  \phi_-^\inc(z) \,=\, \eta^\mthird \left( j_\lfp F(z-L)v_\lfp e^{ik_\lfp (z-L)} + j_\lfe F(z-L)v_\lfe e^{ik_\lfe (z-L)} \right) \,,
\end{equation}
and whose value at $z=L$ is equal to
\begin{equation*}
  \phi_-^\inc(L) \,=\, \eta^\mthird \left( j_\lfp v_\lfp + j_\lfe v_\lfe \right)
                  \,=\, j_\lfp v_\lfl + \eta^\mthird (j_\lfp+j_\lfe)v_\lfe\,.
\end{equation*}
in which
\begin{equation}\label{jpje}
  \col{1.2}{j_\lfp}{j_\lfe} \,=\, \frac{\bar\ell}{9k_1^5(\zeta-1)\,\delta}
                                            \col{1.2}{2\zeta\hat\ell - c - \zeta^2\ell}{-2\zeta^2\hat\ell + c + \zeta\ell}\,.
\end{equation}
The vector $[j_\lfp\;\;j_\lfe]$ is not a multiple of $[1\;-\!1]$, that is, $j_\lfp+j_\lfe\not=0$\,.

\item For $0<\eta\ll1$, the fields $\phi_\lone(z)$ and $\phi_-(z)$ span the two-dimensional space of fields produced by scattering from the right, where the incident field is of the form 
\begin{equation}
  \psi^\inc(z) = c\, F(z-L)\vlp e^{ik_\lfp z} + d\, F(z-L)\vle e^{ik_\lfe z}\,.
\end{equation}
\end{enumerate}
\end{theorem}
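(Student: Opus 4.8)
The plan is to dispatch item~1 by indefinite linear algebra at $\eta=0$ and then to treat items~2--4 as a singular perturbation of the scattering-from-the-right equations~(\ref{rightscattering}), in close parallel with the proof of Theorem~\ref{thm:leftscatteringres}. For item~1, Theorem~\ref{thm:D} tells us that the resonant hypothesis $\To e_0=\ell e_0$ is equivalent to $\mathring D=0$ and to the normal forms~(\ref{To}),~(\ref{Toinv}), in which $|a_{11}|\neq0$. Since $\Vo_-=\sspan\{e_0,e_1\}$, I would compute $\To[\Vo_-]=\sspan\{\To e_0,\To e_1\}=\sspan\{e_0,\To e_1\}\subset\Vo$, and reading $\To e_1=a_{+1}e_++a_{01}e_0+a_{11}e_1$ off~(\ref{To}), extract the distinguished vector $\mathring w_-=(a_{+1}/a_{11})e_++e_1$ with $\varepsilon^0(\mathring w_-)=0$ and $\varepsilon^1(\mathring w_-)=1$. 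That $\mathring\phi_\lzero$ is (a multiple of) the guided frozen mode is immediate from $\To^{-1}e_0=\ell^{-1}e_0$, so it equals a multiple of $e_0$ on both sides of the slab. For $\mathring\phi_-$, the nonzero $e_1$-coordinate of $\mathring w_-$ forces linear growth for $z>L$ via the $iz$ entry of~(\ref{jordanpropagation}); for $z<0$ I would use flux-unitarity and the flux table~(\ref{jordanform}) to get $[\To^{-1}\mathring w_-,\To^{-1}\mathring w_-]=[\mathring w_-,\mathring w_-]=|\varepsilon^+(\mathring w_-)|^2-1=-1/|a_{11}|^2<0$ (using $|a_{+1}|^2=|a_{11}|^2-1$), and since any $v\in\Vo$ has flux $|\varepsilon^+(v)|^2-|\varepsilon^1(v)|^2$, a strictly negative flux forces the $e_1$-coordinate of $\To^{-1}\mathring w_-$ to be nonzero, again giving linear growth.

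The heart of items~2--4 is the $\eta^\third$-expansion of the two coefficient matrices in~(\ref{rightscattering}). The decisive feature is that resonance collapses their leading orders. Substituting the eigenvector expansions~(\ref{eigenvectors}), together with $w_1=k_1e_1+ce_0$ and $w_2=k_1^2e_2+ae_1+be_0$ and $\To e_0=\ell e_0$, the $\eta^0$ and $\eta^\third$ parts of every entry of the first matrix vanish, because each surviving flux pairing ($[e_0,\To w_1]$ and $[w_1,e_0]$) is zero by~(\ref{jordanform}); hence all four entries have exact order $\eta^\thirds$. Computing the leading coefficients from~(\ref{To}) and~(\ref{jordanform}) gives a leading matrix equal to $-k_1^2\eta^\thirds$ times a dimensionless matrix whose determinant is $3\zeta(AB+AC+BC)$ with $A=1/\bar\ell$, $B=a_{11}$, $C=\ell$; multiplying by $\bar\ell$ and using $[e_1,\To e_1]=-a_{11}$ identifies $\bar\ell(AB+AC+BC)=\delta$ with $\delta$ as in~(\ref{delta}). \textbf{This determinant is the main obstacle}: it shows that the slow-light limit of the scattering system is invertible precisely when $\delta\neq0$, and extracting it cleanly demands careful bookkeeping of the cube-root phases $\zeta,\zeta^2$ attached to $w_1,w_2$ in $\vlp,\vle,\vre$ and of the conjugate-linearity of $[\cdot,\cdot]$.

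With the leading system invertible, items~2--4 follow by solving~(\ref{rightscattering}) order by order. For the resonant field $\phi_\lone$ of item~2 I would drive the slab with the $\vll$-direction incident field $\phi^\inc_\lone(L)=c(\vlp-\vle)=c\,\eta^\third\vll$, solve for the transmitted and reflected coefficients, and verify from the projection formulas~(\ref{resproj2}) and the conservation law~(\ref{rightconservation}) that $\phi_\lone(L)\to e_0$ (the guided frozen mode) while the incident field itself vanishes as $\eta\to0$; locally uniform convergence $\phi_\lone\to\mathring\phi_\lzero$ then follows from the analyticity of $F$ and $K$. For $\phi_-$ of item~3 I would prescribe an incident field blowing up like $\eta^\mthird$ and choose $[j_\lfp\;\,j_\lfe]$ so that the surviving $\eta^\mthirds$ quadratic-mode excitation reproduces the $e_1,e_2$-content of $\mathring w_-$; solving this leading-order linear system yields formula~(\ref{jpje}), and $j_\lfp+j_\lfe\neq0$ (equivalently, non-proportionality to $[1\;-\!1]$) I expect to reduce once more to $\delta\neq0$. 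Item~4 is then immediate, since the incident directions of $\phi_\lone$ (namely $[1\;-\!1]$) and of $\phi_-$ (not a multiple of $[1\;-\!1]$) are independent and therefore span the two-dimensional space of scattering-from-the-right fields. Throughout, boundedness of each total field as a convergent Puiseux series---despite the $\eta^\mthirds$ blow-up of the individual incident and reflected modal components in~(\ref{resproj2})---follows, as in the nonresonant theorems, from the cancellation $1+\zeta+\zeta^2=0$.
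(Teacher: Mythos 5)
Your proposal follows essentially the same route as the paper's proof: item 1 by flux-unitarity and the normal form of Theorem~\ref{thm:D}, and items 2--4 by observing that resonance forces every entry of the first coefficient matrix in (\ref{rightscattering}) to be exactly of order $\eta^\thirds$, with leading coefficient $M_2$ whose determinant is proportional to $\delta$ (your identity $\bar\ell(AB+AC+BC)=\delta$ agrees with the paper's $\det M_2=3k_1^4\zeta\,\delta/\bar\ell$), after which one solves order by order with suitably scaled incident fields. Two small corrections to your sketch: the nonvanishing of $j_\lfp+j_\lfe$ does \emph{not} reduce to $\delta\neq0$ (which only contributes the overall factor $1/\delta$ in (\ref{jpje})) but requires the separate evaluation that the numerator components sum to $(\zeta-\zeta^2)\bar\ell^{-1}(2+|\ell|^2)\neq0$ --- and likewise the limit $(t^0_\lfp+t^0_\lfe)e_0$ in item 2 is nonzero only because of the explicit computation $t^0_\lfp+t^0_\lfe=Ck_1^2(1-\zeta^2)\bar\ell^{-1}(2+|\ell|^2)$, not by conservation alone --- and the singular incident field in item 3 excites the \emph{linear} mode through the combination $v_\lfp-v_\lfe=\eta^\third v_\lfl$, not the quadratic mode, which in the resonant case cannot be excited at all.
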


\subsubsection{Proofs of the theorems}

Proofs of these theorems involve finding the solution of a system of the form
\begin{equation}
  M(\eta)t(\eta) = b(\eta) 
\end{equation}
as a Puiseux series $t(\eta)$ (see equations (\ref{leftscattering}) and (\ref{rightscattering})).  In the case of nonresonant scattering from the right and both resonant cases, $M(0)$ is noninvertible and the analysis is delicate.  The following lemma treats the solution of this problem under the conditions required by the theorems.

\begin{lemma}\label{lemma:Mtb}
Consider the equation
\begin{equation}\label{Mtb}
  M(z) t(z) = b(z)\,,
\end{equation}
in which $M(z)$ is a $2\times2$ matrix function and $b(z)$ is a column-vector function with $b(0)=0$, both given by power series
\begin{equation}
  M(z) = \sum_{n=0}^\infty M_n z^n\,,
  \qquad
  b(z) = \sum_{n=1}^\infty b_nz^n\,,
\end{equation}
with nonzero radii of convergence.
Suppose that $M_0$ is nonzero and noninvertible with
\begin{eqnarray}
  \rm{Null}(M_0) &=& \sspan\{s\}\,, \\
  \rm{Ran}(M_0) &=& \sspan\{r\}\,.
\end{eqnarray}
For integers $n\geq0$ define the matrix
\begin{equation}
  \mathcal{M}_0 = \mat{1.1}{s^TM_{2}s}{s^TM_{1}r}{r^TM_{1}s}{r^TM_{0}r}\,.
\end{equation}
Suppose that
\renewcommand{\labelenumi}{\alph{enumi}.}
\begin{enumerate}
  \item $s^Tr=0$,
  \item $s^T b_1=0$, {\itshape i.e.}, $b_1\in\mathrm{Ran}(M_0)$,
  \item $s^T M_1\,s = 0$,
  \item $\mathcal{M}_0$ is invertible.
\end{enumerate}
The equation (\ref{Mtb}) admits a power series solution $t(z) = \sum_{n=0}^\infty t_nz^n$ with a nonzero radius of convergence and such that
\begin{equation}\label{label9965}
  t_0 \in\sspan\{s\}\,.
\end{equation}
If either $b_1\not=0$ or $s^T b_2\not=0$, then 
\begin{equation}\label{label3524}
  t_0=0 \implies t_1\not\in\sspan\{s\}\,.  
\end{equation}
In addition, $t_0\not=0$ in each of the following situations:
\renewcommand{\labelenumi}{\roman{enumi}.}
\begin{enumerate}
\item if $s^T M_2 s = 0$, then
\begin{equation}
   M(z)t(z) = z\,r
   \quad \implies \quad
   t_0 = \frac{r^Tr}{r^TM_1s}\, s
\end{equation}
or, alternatively,
\item if $s^T M_2 s = 0$ and $r^T M_0 r \not=0$ then
\begin{equation}
   M(z)t(z) = z^2(s + \alpha r)
   \quad \implies \quad
   t_0 = -\frac{\big( s^Ts\big) \big(r^TM_0r\big)}{\big(s^TM_1r\big) \big(r^TM_1s\big)}\, s \,.
\end{equation}
\end{enumerate}
\end{lemma}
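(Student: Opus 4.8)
The plan is to solve (\ref{Mtb}) by Cramer's rule, writing $t(z)=\det M(z)^{-1}\,\mathrm{adj}(M(z))\,b(z)$ for $z\neq0$, and to show that hypotheses (a)--(d) force $\det M(z)$ to vanish to order exactly two and the numerator $\mathrm{adj}(M(z))b(z)$ to vanish to order at least two, so that $t(z)$ extends to a convergent power series. The explicit coefficient formulas in (\ref{label3524}) and in (i)--(ii) will then be read off from a short order-by-order recursion.

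First I would record the linear-algebraic consequences of the hypotheses. Since $M_0$ has rank one with $\Null(M_0)=\sspan\{s\}$ and $\Range(M_0)=\sspan\{r\}$, one has $M_0 s=0$ and $M_0 r=\mu r$ for some scalar $\mu$; writing $M_0=rw^T$ with $w^T s=0$ and invoking (a) gives $s^TM_0=(s^Tr)w^T=0$, so $s$ is simultaneously a right and a left null vector of $M_0$. The key preliminary observation is that (d) forces $\{s,r\}$ to be a basis: if $r$ were a multiple of $s$, then the second column of $\mathcal M_0$, whose entries $s^TM_1r$ and $r^TM_0r$ are then proportional to $s^TM_1 s=0$ (by (c)) and to $s^TM_0 s=0$ (since $M_0 s=0$), would vanish, contradicting (d); in particular $\mu\neq0$. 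Once $\{s,r\}$ is a basis, (a) additionally forces $s^Ts\neq0$ and $r^Tr\neq0$, because the functional $v\mapsto s^Tv$ annihilates both basis vectors $s$ and $r$ as soon as $s^Ts=0$, hence would vanish identically and give $s=0$. These facts let me treat $s^T$ and $r^T$ as scalar multiples of the dual basis, and they show $\mathrm{adj}(M_0)=\tfrac{\mu}{s^Ts}\,s\,s^T$.

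Next I would carry out the determinant computation. By Jacobi's formula $(\det M)'(0)=\mathrm{tr}(\mathrm{adj}(M_0)M_1)=\tfrac{\mu}{s^Ts}\,s^TM_1 s=0$ by (c), so $\det M$ vanishes to order at least two; expanding the $z^2$-coefficient and rewriting the entries of $\mathcal M_0$ in the basis $\{s,r\}$ (using $r^TM_0r=\mu\,r^Tr$, $r^TM_1 s=\gamma_1 r^Tr$, and the vanishing of the $s$-diagonal entry of $M_1$) I expect to identify this coefficient as $\det\mathcal M_0/(s^Ts\,r^Tr)$, which is nonzero precisely by (d). Hence $\det M(z)=z^2 g(z)$ with $g(0)\neq0$. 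For $N(z):=\mathrm{adj}(M(z))b(z)$ one has $N(0)=\mathrm{adj}(M_0)\,b(0)=0$ since $b(0)=0$, and $N'(0)=\mathrm{adj}(M_0)b_1=\tfrac{\mu}{s^Ts}(s^Tb_1)\,s=0$ by (b); therefore $t(z)=N(z)/(z^2 g(z))$ is analytic at $z=0$ with positive radius of convergence, and it is the unique solution because $M(z)$ is invertible for $0<|z|\ll1$. Evaluating $M(z)t(z)=b(z)$ at $z=0$ gives $M_0 t_0=0$, i.e. $t_0\in\sspan\{s\}$, which is (\ref{label9965}).

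Finally I would extract the coefficient statements by matching powers of $z$ in $M(z)t(z)=b(z)$, setting $t_0=\tau_0 s$. The structural point is that applying $s^T$ to the order-$n$ equation kills $M_0 t_n$, and because $s^TM_1 s=0$ the free parameter of $t_{n-1}$ also drops out, so $\tau_0$ first becomes visible at order two through $s^TM_2 t_0=\tau_0\,s^TM_2 s$. For (\ref{label3524}) this is immediate: assuming $t_0=0$ and $t_1\in\sspan\{s\}$, the order-one equation forces $b_1=0$ and the order-two equation (after applying $s^T$ and using (c)) forces $s^Tb_2=0$, so the contrapositive gives the claim. For (i) and (ii) I would solve orders zero and one explicitly---here $M_1 s=\gamma_1 r\in\Range(M_0)$ because the $s$-diagonal entry of $M_1$ vanishes, so order one is solvable and pins down $t_1$ up to $\sspan\{s\}$---and then impose the order-two solvability condition $s^T(M_1 t_1+M_2 t_0)=s^Tb_2$. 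Since $s^TM_2 s=0$ in both cases, this becomes a single linear equation for $\tau_0$ whose coefficient $s^TM_1 r$ is nonzero by (d); solving it and substituting $\mu$ and $\gamma_1$ back through $r^TM_0r=\mu\,r^Tr$ and $r^TM_1 s=\gamma_1 r^Tr$ yields the two stated formulas. The main obstacle I anticipate is the determinant bookkeeping in the third paragraph, namely correctly matching the $z^2$-coefficient of $\det M$ with $\det\mathcal M_0$ up to the nonzero factor $s^Ts\,r^Tr$, since this is exactly what converts the abstract invertibility hypothesis (d) into the concrete order-of-vanishing statement that makes Cramer's rule deliver a genuine power series.
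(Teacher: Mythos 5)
Your proof is correct, but it takes a genuinely different route from the paper's. The paper never inverts $M(z)$: it substitutes $t_n=c_ns+d_nr$ into the coefficient equations $\sum_{\ell}M_\ell t_{n-\ell}=b_n$, pairs the solvability condition at order $n+2$ with the $r^T$-projection at order $n+1$, and thereby converts the singular problem into a \emph{regular} recursion $\sum_{\ell=0}^n\mathcal{M}_\ell\,[c_{n-\ell}\;d_{n+1-\ell}]^T=[s^Tb_{n+2}\;r^Tb_{n+1}]^T$, i.e.\ $\mathcal{M}(z)v(z)=w(z)$ with $\mathcal{M}(0)=\mathcal{M}_0$ invertible; convergence and the coefficient formulas all fall out of $v=\mathcal{M}^{-1}w$. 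You instead use Cramer's rule and show that $\det M(z)$ vanishes to order exactly two while $\mathrm{adj}(M(z))b(z)$ vanishes to order at least two. The pivotal identity you flag as the main obstacle does hold: the $z^2$-coefficient of $\det M$ is $\operatorname{tr}(\mathrm{adj}(M_0)M_2)+\det M_1$, and with $\mathrm{adj}(M_0)=\tfrac{\mu}{s^Ts}ss^T$, $r^TM_0r=\mu\,r^Tr$, and $s^TM_1s=0$ this equals $\det\mathcal{M}_0/(s^Ts\,r^Tr)$; your preliminary deductions that $\{s,r\}$ is a basis and $s^Ts\neq0$, $r^Tr\neq0$ (needed because these are transposes, not conjugate transposes, over $\CC$) are exactly what make this computation legitimate, and they are correctly derived from (a), (c), (d). Your approach buys a cleaner explanation of \emph{why} the invertibility of $\mathcal{M}_0$ is the right hypothesis --- it is, up to the nonzero factor $s^Ts\,r^Tr$, the leading Taylor coefficient of $\det M$ --- and it gets convergence for free from analyticity of $\mathrm{adj}(M)b/\det M$. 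The paper's recursion is what delivers the coefficient statements most directly and extends without change to deeper degeneracies of $M_0$; for the claims (\ref{label3524}) and (i)--(ii) you end up doing essentially the same order-by-order matching anyway (your order-two solvability condition $d_1\,s^TM_1r=s^Tb_2$ with $d_1=(\text{const}-\tau_0\gamma_1)/\mu$, $\gamma_1=r^TM_1s/r^Tr$, reproduces the paper's $n=0$ instance of the recursion), and the values of $\tau_0$ you obtain agree with the stated formulas.
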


\begin{proof}
The equation $M(z) t(z) = b(z)$ is equivalent to a system of equations for the coefficients,
\begin{equation}\label{coeffeqn}
  \sum_{\ell=0}^n M_\ell\, t_{n-\ell} = b_n
  \qquad
  \text{for }\, n\geq0\,,
\end{equation}
in which $b_0=0$.
We will first prove that this system admits a unique solution $\{t_n\}_{n=0}^\infty$ and then prove that the power series $\sum_{n=0}^\infty t_nz^n$ has a nonzero radius of convergence.  Set
\begin{equation}
  t_n = c_n s + d_n r\,.
\end{equation}
For $n=0$, (\ref{coeffeqn}) is just $M_0t_0 = 0$, which implies that $d_0=0$, or $t_0=c_0s$.
For $n=1$, the equation
\begin{equation}
  M_0t_1 + M_1t_0 = b_1
\end{equation}
has a solution $t_1$ if and only if $s^TM_1t_0=s^Tb_1$, which, using $t_0=c_0s$ becomes
\begin{equation}
  c_0\, s^T M_1 s = s^T b_1\,.
\end{equation}
Both sides of this equation vanish by assumptions (b) and (c) above.  For $n\geq0$, there are two linear relations between the coefficients $c_n$ and $d_{n+1}$.  First, the solvability condition for $t_{n+2}$ in (\ref{coeffeqn}) is
\begin{equation}\label{one}
  \sum_{\ell=1}^{n+2}\left( c_{n+2-\ell}\, s^T M_\ell\, s + d_{n+2-\ell}\, s^T M_\ell\, r \right) = s^T b_{n+2}
  \qquad (n\geq0)\,.
\end{equation}
Second, multiplying (\ref{coeffeqn}) on the left by $r^T$ with $n$ replaced by $n+1$ gives
\begin{equation}\label{two}
  \sum_{\ell=0}^{n+1} \left( c_{n+1-\ell}\, r^T M_\ell\, s + d_{n+1-\ell}\, r^T M_\ell\, r \right) = r^T b_{n+1}
  \qquad (n\geq0)\,.
\end{equation}
Rewrite (\ref{one}) by using $d_0=0$ and $s^T M_1\,s=0$ and shifting the index of the $d$-terms by $1$ and the index of the $c$-terms by $2$ to obtain
\begin{equation}\label{oneprime}
  \sum_{\ell=0}^n \left( c_{n-\ell}\, s^T M_{\ell+2}\, s + d_{n+1-\ell}\, s^T M_{\ell+1}\, r \right) = s^T b_{n+2}
  \qquad (n\geq0)\,.
\end{equation}
Rewrite (\ref{two}) by using $d_0=0$ and $M_0s=0$ and shifting the index of the $c$-terms by $1$:
\begin{equation}\label{twoprime}
  \sum_{\ell=0}^n \left( c_{n-\ell}\, r^T M_{\ell+1}\, s + d_{n+1-\ell}\, r^T M_\ell\, r \right) = r^T b_{n+1}
  \qquad (n\geq0)\,.
\end{equation}
Equations (\ref{oneprime}) and (\ref{twoprime}) together yield the system
\begin{equation}\label{Mvb}
  \sum_{\ell=0}^n \mathcal{M}_\ell \col{1.1}{c_{n-\ell}}{d_{n+1-\ell}} = \col{1.1}{s^T b_{n+2}}{r^T b_{n+1}}
  \qquad (n\geq0)\,,
\end{equation}
in which
\begin{equation}
  \mathcal{M}_n = \mat{1.1}{s^TM_{n+2}s}{s^TM_{n+1}r}{r^TM_{n+1}s}{r^TM_{n}r}\,.
\end{equation}
Set $v_n=[c_n\;d_{n+1}]^T$ and $w_n=[s^T b_{n+2},r^T b_{n+1}]$, and define the power series
\begin{equation}
  \mathcal{M}(z) = \sum_{n=0}^\infty \mathcal{M}_n z^n\,,
  \qquad
  v(z) = \sum_{n=0}^\infty v_n z^n\,,
  \qquad
  w(z) = \sum_{n=0}^\infty w_n z^n\,.
\end{equation}
Equation (\ref{Mvb}) is equivalent to the formal equation
\begin{equation}
  \mathcal{M}(z) v(z) = w(z)\,.
\end{equation}
The series for $\mathcal{M}(z)$ has a nonzero radius of convergence because the series for $M(z)$ does, and since $\mathcal{M}_0$ is invertible by assumption (d), $\mathcal{M}^{-1}(z)$ exists and is analytic in a neighborhood of $z=0$.  Thus the coefficients $v_n$ are given by those of a convergent power series,
\begin{equation}
  v(z) = \mathcal{M}(z)^{-1} w(z)
\end{equation}
which proves that the power series $\sum_{n=0}^\infty c_nz^n$ and $\sum_{n=0}^\infty d_nz^n$ have nonzero radii of convergence and therefore so does $t(z)=\sum_{n=0}^\infty t_nz^n$ since $t_n=c_ns+d_nr$.

For $n=0$, we have
\begin{equation}
  \mathcal{M}_0 \col{1.1}{c_0}{d_1} = \col{1.1}{s^T b_2}{r^T b_1}
  = \renewcommand{\arraystretch}{1.1}
\left\{
  \begin{array}{ll}
    \col{1.1}{0}{r^T r} & \text{(case (i))} \\
    \vspace{-1.5ex} \\
    \col{1.1}{s^T s}{0} & \text{(case (ii))}\,.
  \end{array}
\right.
\end{equation}
Whenever either $b_1\not=0$ or $s^T b_2\not=0$, the vector $[c_0\;d_1]^T$ is nonzero (recall $b_1\in\sspan\{r\}$ by assumption (b)). 
This means that either $t_0=c_0s\not=0$ or $t_1=c_1s+d_1r$ is not in $\sspan\{s\}$.  If the $11$-entry of $\mathcal{M}_0$ vanishes, that is, $s^T M_2\,s=0$, then in case (i) one can solve for $c_0\not=0$ and obtain $t_0$ as in the theorem.  If, in addition, the $22$-entry of $\mathcal{M}_0$ is nonzero, that is, $r^T M_0\,r\not=0$, then in case (ii), again one can solve for $c_0\not=0$. 
\end{proof}

\begin{proof}[Proofs of Theorems~\ref{thm:leftscatteringreg}--\ref{thm:rightscatteringres}]
To produce the vectors $\mathring v_\rone$ and $\mathring v_\rtwo$ in part\,1 of Theorem~\ref{thm:leftscatteringreg}, observe that the nonresonant condition implies, by Theorem~\ref{thm:D}, that $\To^{-1}[\Vo_+]\cap\Vo$ is one-dimensional; let it be spanned by $v=\To^{-1}(w)$, with $0\not=w\in\Vo_+$.  By the characterizations (\ref{Voplus},\ref{Vo}), $v=ae_++be_0+ce_1$ and $w=\alpha e_++\beta e_0$ for some complex constants.  The interaction matrix in (\ref{jordanform}) for the flux form and the flux-unitarity of $\To$ yields
\begin{equation}
  |a|^2-|c|^2 = [v,v] = [\To^{-1}(w),\To^{-1}(w)] = [w,w] = |\alpha|^2.
\end{equation}
If $a=0$, then also $c=\alpha=0$ and one obtains
\begin{equation}
  b\,\To(e_0) = \To(v) = w = \beta e_0,
\end{equation}
which, by Theorem~\ref{thm:D}(b) holds only in the resonant case.  Thus $a\not=0$ and one can set $\mathring v_\rone=a^{-1}v$ and obtain $\varepsilon^{+}(\mathring v_\rone)=1$.  Since $\Vo=\sspan\{e_+,e_0,e_1\}$, one obtains $\varepsilon^2(\mathring v_\rone)=0$.  Because $\To^{-1}[\Vo_+]\cap\Vo\not=\Vo$, there is a vector $\mathring v_\rtwo$ in $\To^{-1}[\Vo_+]$ with a nonzero $e_2$ component with respect to the basis $\{e_+,e_0,e_1,e_2\}$, which can be taken to be $e_2$ itself, and one can arrange the $e_+$ component to vanish by adding a suitable multiple of~$\mathring v_\rone$.

Proof of the existence of the vectors $\mathring v_\lone$ and $\mathring v_\ltwo$ in part~1 of Theorem~\ref{thm:rightscatteringreg} is almost identical, only that now $w=\alpha e_1+\beta e_0\in\Vo_-$ and $[w,w]=-|\alpha|^2$.  This leads to $c\not=0$, and the rest follows analogously.

\smallskip
{\itshape Proof of Theorem~\ref{thm:leftscatteringreg}} (nonresonant scattering from the left).
In Theorem~\ref{thm:leftscatteringreg}, the fields $\mathring\psi_\rone$ and $\mathring\psi_\rtwo$ are given to the right of the slab by an expression of the form
\begin{eqnarray}
  \psi(z) = F(z-L) e^{i\mathring K(z-L)} v
  \qquad (z>L)
\end{eqnarray}
for some vector $v\in\Vo_+$.  Since $\Vo_+$ is spanned by two eigenvectors, $e_+$ and $e_0$ of $\mathring K$ with real eigenvalues (see~\ref{jordanform}), $\psi(z)$ is bounded for $z>L$.  To the left of the slab, the fields have the form
\begin{eqnarray}
  \psi(z) = F(z) e^{i\mathring K z} v
  \qquad (z<0).
\end{eqnarray}
For $\psi=\mathring\psi_\rone$, $v=\mathring v_\rone=ae_++be_0+ce_1$, and because $e_1$ is a the second vector in a Jordan chain for $\mathring K$, $\mathring\psi_\rone$ experiences linear growth unless $c=\varepsilon^1(v)=0$ (see \ref{jordanpropagation}).
For $\psi=\mathring\psi_\rtwo$, $v=\mathring v_\rtwo=be_0+ce_1+de_2$, with $d\not=0$, and since $e_2$ is the third vector in a Jordan chain for $\mathring K$, $\mathring\psi_\rtwo$ experiences quadratic growth. 

To prove part 2a, put $j_\rtp=1$ and $j_\rte=0$ in the problem of scattering from the left (\ref{leftscattering}).  The total scattering field at $z=0$ is equal to
\begin{equation}\label{label7374}
   \psi_\rone(0) \,=\, v_\rtp + \rho_\lfl v_\lfl + \rho_\lfe v_\lfe \,=\, T^{-1}\left( t_\rtp v_\rtp + t_\rte v_\rte \right).
\end{equation}
The first equation of (\ref{leftscattering}) gives $t_\rtp$ and $t_\rte$ as Puiseux series (having a limit as $\eta\to0$), since the matrix is invertible at $\eta=0$ by the nonresonance condition.  The second equation of (\ref{leftscattering}) shows that $r_\lfp$ and $r_\lfe$ are Puiseux-Laurent series, and therefore so are $\rho_\lfl$ and $\rho_\lfe$.  The conservation law (\ref{leftconservation}) reduces to
\begin{equation}
  |t_\rtp|^2 + C|\rho_\lfl|^2 = 1\,,
\end{equation}
and, since $C>0$ (see \ref{UniqueConstantInNormalizedFluxFormPerturbed}), $\rho_\lfl$ is bounded.  We have established that right-hand side of (\ref{label7374}) as well as the first two terms of the left-hand side have limits at $\eta=0$, and therefore so does $\rho_\lfe$.  This proves the first equality in item (2a), and (\ref{coordinatechange}) establishes the second equality.
Now taking the limit as $\eta\to0$, the left-hand side of (\ref{label7374}) becomes
\begin{equation}
  \lim_{\eta\to0}\psi_\rone(0) \,=\, e_++\rho_\lfl(0) (1-\zeta^2)w_1 +\rho_\lfe(0) e_0 \,=\, \mathring v_\rone \,=\, \mathring\psi_\rone(0)\,.
\end{equation}
The middle equality follows from the characterizing condition (\ref{vorone}) for $\mathring v_\rone$.
The limit $\psi_\rone(z)\to\mathring\psi_\rone(z)$ that is locally uniform in $z$ is inherited from that of the transfer matrix $T(0,z;\kappa(\eta),\omega(\eta))$ (see \cite[Appendix]{ShipmanWelters2013}):
\begin{equation}
  \psi_\rone(z;\eta) \,=\, T(0,z;\kappa(\eta),\omega(\eta))\psi_\rone(0;\eta)
  \,\xrightarrow{\eta\to0}\, 
 T(0,z;\kappa(0),\omega(0))\mathring\psi_\rone(0) \,=\, \mathring\psi_\rone(z)\,.
\end{equation}

To prove (2b) of Theorem~\ref{thm:leftscatteringreg}, put $j_\rtp=0$ and $j_\rte=\eta^\mthirds$.  The total scattering field at $z=0$ is equal to
\begin{equation}\label{label9392}
   \eta^\mthirds v_\rte + r_\lfp v_\lfp + r_\lfe v_\lfe = T^{-1}\left( t_\rtp v_\rtp + t_\rte v_\rte \right).
\end{equation}
The first equation of (\ref{leftscattering}) shows that $t_\rtp$ and $t_\rte$ are (bounded) Puiseux series, and thus the scattering field (\ref{label9392}), is bounded as $\eta\to0$.  The projections $P_\lfp$ and $P_\lfe$ (see \ref{Plp} and \ref{Ple}) show that $r_\lfp$ and $r_\lfe$ are $O(\eta^\mthirds)$; set
\begin{eqnarray}
  r_\lfp &=& \eta^\mthirds\left( r_\lfp^0 + \eta^\third r_\lfp^1 + O(\eta^\thirds) \right)\,, \\
  r_\lfe &=& \eta^\mthirds\left( r_\lfe^0 + \eta^\third r_\lfe^1 + O(\eta^\thirds) \right)\,.
\end{eqnarray}
By forcing the left-hand side of (\ref{label9392}) to be $O(1)$ and using the expansions (\ref{eigenvectors}) of the eigenvectors $v_{\rte,\lfp,\lfe}$, one obtains the two equations
\begin{eqnarray}
  && \left( 1 + r^0_\lfp + r_\lfe^0 \right)e_0 \,=\,0\,,\\
  && \left( \zeta + r^0_\lfp + \zeta^2 r_\lfe^0 \right)w_1 + \left( r^1_\lfp + r_\lfe^1 \right)e_0 \,=\, 0\,.
\end{eqnarray}
Because $\varepsilon^1(w_1)=k_1\not=0$, $e_0$ and $w_1$ are independent, and one obtains
$r^0_\lfp=\zeta^2$, $r^0_\lfe=\zeta$, and $r^1_\lfp = -r_\lfe^1$, and thus the total field at $z=0$ is given by the right-hand side of (\ref{label8465}).  Since it is equal to the limit $\To^{-1}\left( \mathring t_\rtp v_\rtp + \mathring t_\rte v_\rte \right)$ as $\eta\to0$ in (\ref{label9392}), it lies in $\To^{-1}[\Vo_+]$.

If one now takes $j_\rte = \eta^\mthirds(-\zeta/C)$ (and retains $j_\rtp=0$), the resulting scattering field is $\psi_\rtwo(z)$, as defined in the theorem, and one obtains (\ref{label8465}).  Since $v_\rte=e_0+O(\eta^\third)$, one has
\begin{equation}
  \Pre\psi_\rtwo(0) \,=\, \eta^\mthirds(-\zeta/C)v_\rte \,=\, \eta^\mthirds(-\zeta/C)e_0 + O(\eta^\mthird)\,,
\end{equation}
and comparing with the expression (\ref{Pre}) for $\Pre$ yields
\begin{equation}
  \varepsilon^2\left(\lim_{\eta\to0}\psi_\rtwo(0)\right) \,=\, -[e_0,\mathring\psi_\rtwo(0)] \,=\, 1\,.
\end{equation}
Also, $e_+$ enters the projections $P_{\lfp,\rte,\lfe}$ only at $O(\eta^\third)$, and $\Prp\psi_\rtwo(0)=j_\rtp v_\rtp=0$, so that, according to expression (\ref{Prp}) for $\Prp$,
\begin{equation}
  \varepsilon^+\left(\lim_{\eta\to0}\psi_\rtwo(0)\right) \,=\, [e_+,\mathring\psi_\rtwo(0)]\,=\, 0\,.
\end{equation}
Thus, by the defining properties of $\mathring v_\rtwo$, one has $\lim_{\eta\to0}\psi_\rtwo(0)=\mathring v_\rtwo=\mathring\psi_\rtwo(0)$.
Again, the locally uniform in $z$ limit $\psi_\rtwo(z)\to\mathring\psi_\rtwo(z)$ is inherited from from that of the transfer matrix $T(0,z;\kappa(\eta),\omega(\eta))$ and the limit $\psi_\rtwo(0)\to\mathring\psi_\rtwo(0)$ as $\eta\to0$.

Part\,3 of Theorem~\ref{thm:leftscatteringreg} follows from the independence of $\mathring\psi_\rone(0)$ and $\mathring\psi_\rtwo(0)$ and the continuity of $\psi_\rone(0)$ and $\psi_\rtwo(0)$ as functions of $\eta$ at $\eta=0$.

To prove part\,4, assume first that $\mathring\psi(0):=\lim_{\eta\to0}\psi(0;\eta)\in\To^{-1}[\Vo_+]\cap\Vo$, so that $\mathring\psi(0)$ is a multiple of~$\mathring v_\rone$.  Thus $\psi(0;\eta)$ can be identified with $v_\rone(\eta)$ of equation (\ref{rone}), and the projections (\ref{roneprojections}) onto the eigenmodes establish the form of the incident field in part\,4.  Conversely, assume the given form of the incident field.  Equation (\ref{leftscattering}) yields
\begin{equation}
   \mat{1.2}{\left[T\vrp,\vrp\right]}{\left[T\vrp,\vre\right]}{\left[T\vle,\vrp\right]}{\left[T\vle,\vre\right]}
\col{1.2}{t_\rtp}{t_\rte}
= \col{1.2}{O(1)}{O(\eta^\third)}\,.
\end{equation}
By the nonresonant condition, the determinant of this matrix is nonzero for $\eta$ sufficiently small, and thus the coefficients $t_\rtp$ and $t_\rte$ and therefore also the total scattering field at $z=L$ admit Puiseux series.  Thus $\psi(0;\eta)=T(\eta)^{-1}\psi(L;\eta)$ admits a Puiseux series and is a combination of $v_\rone(\eta)$ and $v_\rtwo(\eta)$ as in (\ref{rcombination}).   If $\psi(0;0)$ has a nonzero $e_2$ component, then (\ref{roneprojections},\ref{rtwoprojections}) show that the projection onto $v_\rte$ has a nonzero $\eta^\mthirds$-term.  But this contradicts the assumption that the incident field is $O(\eta^\mthird)$.  Thus $\psi(0;0)$ has no $e_2$ component and therefore lies in $\To^{-1}[\Vo_+]\cap\Vo$.

\smallskip
{\itshape Proof of Theorem~\ref{thm:rightscatteringreg}} (nonresonant scattering from the right).
The behavior of the functions $\mathring\psi_\lone(z)$ and $\mathring\psi_\ltwo(z)$ in part\,1(a,b) can be seen from the elementary matrix solution (\ref{jordanpropagation}) at $\eta=0$. 
To prove part\,2, consider first the incident field stipulated in part\,2a, which is equal to a constant times $v_\lfl$ at $z=L$.  The solution of the scattering problem exists and is unique for $\eta$ sufficiently small so that $D(\eta)\not=0$; call it $\psi(z)$.  We must show that $\psi$ is equal to $\psi_{-1}$ by showing that it satisfies the same initial-value problem (\ref{IVPm1}).
To obtain the transmitted field on the left of the slab, one puts in (\ref{rightscattering}) $j_\lfp=\eta^\mthird$ and $j_\lfe=-\eta^\mthird$.

We will prove that $\psi(z)$ admits a convergent Puiseux series in $\eta$.  This means that, at the left edge of the slab, $\psi(0)=\tau_\lfl v_\lfl + \tau_\lfe v_\lfe$, where the coefficients $\tau_\lfl(\eta)$ and $\tau_\lfe(\eta)$ are convergent Puiseux series since the basis $\{v_\lfl, v_\lfe\}$ for $V_+(\eta)$ is nondegenerate as $\eta\to0$ (it converges to the basis $\{e_1,e_0\}$ for $\Vo_+$).  In the degenerate eigenmode basis $\{v_\lfp,v_\lfe\}$, the field $\psi(0) = t_\lfp v_\lfp + t_\lfe v_\lfe$, has coefficients that blow up like $\eta^\mthird$ since $t_\lfp=\eta^\mthird\tau_\lfl$ and $t_\lfe=-\eta^\mthird\tau_\lfl+\tau_\lfe v_\lfe$.  Thus we will demonstrate a convergent series
\begin{equation}
  \col{1.1}{t_\lfp}{t_\lfe} \,=\, \eta^\mthird \vec t(\eta) \,=\, \eta^\mthird \left( \vec t_0 + \eta^\third \vec t_1 + \eta^\thirds \vec t_2 + \dots \right)\,
\end{equation}
with $\vec t_0$ a multiple of $[1,\;-1]^T$.

The coefficients $t_\lfp$ and $t_\lfe$ satisfy the first equation in~(\ref{rightscattering}), which, with $(j_\lfp,j_\lfe)=\eta^\mthird(1,-1)$, becomes
\begin{equation}\label{label4493}
  M(\eta) \vec t(\eta) = \eta^\thirds \vec\xi\,,
\end{equation}
where $\vec\xi=C[-1,\zeta]^T$.  The matrix $M$ admits a Puiseux expansion in $\eta$\,,
\begin{equation}
    M(\eta) \,=\, M_0 + \eta^\third M_1 + \eta^\thirds M_2 + \dots\,
\end{equation}
that is singular at $\eta=0$, that is, $M_0$ is noninvertible.  Recursive calculation of the coefficients $\vec t_n$ is facilitated by defining the vectors
\begin{equation}
  \xi_{-1} = \col{1.1}{1}{-1},\quad
  \xi_0 = \col{1.1}{1}{1},\quad
  \xi_1 = \col{1.1}{1}{\zeta^2},\quad
  \xi_2 = \col{1.1}{1}{\zeta},\quad
  \xi = C \col{1.1}{-1}{\zeta}.
\end{equation}
%
%
Using the expansions (\ref{eigenvectors}), one calculates the first three matrix coefficients of $M(\eta)$,
\begin{eqnarray}
  M_0 &=& [e_0,\To e_0] \xi_0 \xi_0^T \,\\
  M_1 &=& [e_0,\To w_1] \xi_0 \xi_1^T + [w_1,\To e_0] \xi_1 \xi_0^T\,\\
  M_2 &=& [e_0,\To w_2] \xi_0 \xi_2^T + [w_1,\To w_1] \xi_1 \xi_1^T + [w_2,\To e_0] \xi_2 \xi_0^T\,.
\end{eqnarray}
Notice that range $\rm{Ran}(M_0)=\sspan\{\xi_0\}$ and $\rm{Null}(M_0)=\sspan\{\xi_{-1}\}$, that $\xi_{-1}^T\xi_0=0$, and that
\begin{equation}\label{label8856}
  (\xi_{-1}, M_1\xi_{-1})=0\,.
\end{equation}

Problem (\ref{label4493}) is treated in Lemma~\ref{lemma:Mtb}, in which it is shown that $t(\eta)$ admits a convergent Puiseux series by identifying $z$ in the lemma with $\eta^\third$.  Indeed, the condition (\ref{label8856}) coincides with assumption (c) of the lemma, and assumption (d) is, in the present context,
\begin{multline}
  \det\mathcal{M}_0
  = \det \mat{1.2}{(\xi_{-1},M_{2}\xi_{-1})}{(\xi_{-1},M_{1}\xi_0)}{(\xi_0,M_{1}\xi_{-1})}{(\xi_0,M_0 \xi_0)}
  = \det \mat{1.2}{(1-\zeta^2)^2[w_1,\To w_1]}{2(1-\zeta^2)[w_1,\To e_0]}{2(1-\zeta^2)[e_0,\To w_1]}{4[e_0,\To e_0]} \\
  = 4(1-\zeta^2)^2 \left( [e_0,\To e_0][w_1,\To w_1]-[e_0,\To w_1][w_1,\To e_0] \right) = 0\,,
\end{multline}
which is taken as an assumption in Theorem~\ref{thm:rightscatteringreg}.  That $\xi_{-1}^T\xi=C\zeta^2\not=0$ corresponds to the condition $s^Tb_2\not=0$ in the lemma and therefore (\ref{label9965}) and (\ref{label3524}) hold.  This means that $t_0\in\sspan\{\xi_{-1}\}$, that is
\begin{equation}
  t_0 = c_0\col{1}{1}{-1}
\end{equation}
for some $c_0$ and that if $c_0\not=0$, then necessarily $t_1\not\in\sspan\{\xi_{-1}\}$, that is
\begin{equation}
  t_1 = c_1\col{1}{1}{-1} + d_1\col{1}{1}{1}
  \qquad
  \text{($d_1\not=0$ if $c_0=0$)}.
\end{equation}
From this, one infers that $\psi(0)$ has a nonzero limit as $\eta\to0$:
\begin{multline}
  \psi(0) \,=\, \eta^\mthird \sum_{n=0}^\infty c_n \eta^{n/3}(v_\lfp-v_\lfe) \,+\, \eta^\mthird \sum_{n=1}^\infty d_n \eta^{n/3}(v_\lfp+v_\lfe) \\
  = \, v_\lfl \sum_{n=0}^\infty c_n\eta^{n/3} + (v_\lfp + v_\lfe) \sum_{n=0}^\infty d_{n+1} \eta^{n/3}
  \,\to\, c_0 e_1 + 2d_1 e_0
  \qquad \text{as }\, \eta\to0\,.
\end{multline}
This limit is nonzero since either $c_0$ or $d_1$ is nonzero.  Therefore
\begin{equation}
  \psi(L) = T\psi(0) = v_\lfl + \tilde r_\rtp v_\rtp + \tilde r_\rte v_\rte
\end{equation}
is also a convergent Puiseux series in $\eta$, and since $\varepsilon^1(v_\lfl(0))=k_1(1-\zeta^2)$, $v_\rtp(0)=e_+$, and $v_\rte(0)=e_0$, the reflection coefficients $\tilde r_\rtp(\eta)$ and $\tilde r_\rte(\eta)$ are also convergent Puiseux series.

The incident field in part\,2a is $(k_1(1-\zeta^2))^{-1}$ times the incident field we assumed to obtain the scattering field $\psi$.  Thus value of the total scattering field at $z=L$ for part\,2a is $(k_1(1-\zeta^2))^{-1}\psi(L)$, whose limit as $\eta\to0$ lies in $\Vo$ and has $e_1$-component equal to $(k_1(1-\zeta^2))^{-1}\varepsilon^1(v_\lfl(0))=1$.  

Now consider the incident field in part~2b of Theorem~\ref{thm:rightscatteringreg}, for which $j_\lfp=\eta^\mthirds$ and $j_\lfe=\zeta^2\eta^\mthirds$ so that the first equation in (\ref{rightscattering}) becomes
\begin{equation}\label{label3487}
  M(\eta)t(\eta) = -C\eta^\third\xi_0
\end{equation}
with $[t_\lfp\; t_\lfe]^T=\eta^\mthird t(\eta)$.  Since $\mathrm{Ran}(M_0)=\sspan\{\xi_0\}$, assumption (b) of Lemma~\ref{lemma:Mtb} as well as the condition $b_1\not=0$ are satisfied, and one again obtains $\psi(0)$ as a Puiseux series that converges to a nonzero vector in~$\Vo_-$.  At $z=L$, one has
\begin{equation}\label{label2250}
  \psi(L) = T\psi(0) \,=\, \eta^\mthirds v_\lfp + \eta^\mthirds\zeta^2 v_\lfe + r_\rte(\eta) v_\rte + r_\rtp(\eta) v_\rtp\,,
\end{equation}
which is a (bounded) Puiseux series whose limit as $\eta\to0$ is $\To[\psi(0)|_{\eta=0}]\in\To[\Vo_-]$  The form of the projections (\ref{Pre},\ref{Prp}) shows that $r_\rte(\eta)=O(\eta^\mthirds)$ and $r_\rtp(\eta)=O(1)$.  By using the expansions (\ref{eigenvectors}) of the eigenvectors $v_{\rtp,\rte,\lfp,\lfe}$ and enforcing $\psi(L)=O(1)$, one finds that the $\eta^\mthird$ term of $r_\rte(\eta)$ vanishes and obtains $r_\rte(\eta)=\zeta\eta^\mthirds+c+O(\eta)$ for some number $c$.  One then divides by $C$ to obtain the stated value of $\psi_\ltwo(L)$ in part\,2b for the given incident field.

One computes from (\ref{label2250}) that
\begin{equation}
  \psi(L) = 3w_2 + ce_0 + r_\rtp(0) e_+ + O(\eta),
\end{equation}
so that
\begin{equation}
  \varepsilon^2(\psi_\ltwo(L)) = (1/C)3\hspace{1pt}\varepsilon^2(w_2) = 3k_1^2/C = 1,
\end{equation}
where we have used that $C=3k_1^2$ (see (\ref{UniqueConstantInNormalizedFluxFormPerturbed})) and that $\varepsilon^2(w_2)=-[e_0,w_2]=k_1^2$ (see \ref{LowOrderExpanEigenvecSpanRel}).  This proves that $\psi_\ltwo(L)=\mathring v_\ltwo$ and thus that $\mathring\psi_\ltwo(z)$ and $\lim_{\eta\to0}\psi_\ltwo(z)$ satisfy the same initial-value problem and are thus equal to each other.

Part\,3 follows from the linear independence of $\lim_{\eta\to0}\psi_\lone=\mathring v_\lone$ and $\lim_{\eta\to0}\psi_\ltwo=\mathring v_\ltwo$, so that $\psi_\lone$ and $\psi_\ltwo$ span the linear space of right-scattering fields, and the fact that $\psi_\lone^\inc$ and $\psi_\ltwo^\inc$ form a basis for $V_-(\eta)$.

To prove part\,4, assume first that $\mathring\psi(L):=\lim_{\eta\to0}\psi(L;\eta)\in\To[\Vo_-]\cap\Vo$ so that $\mathring\psi(L)$ is a multiple of $\mathring v_\lone$.  Thus $\psi(L;\eta)$ can be identified with $v_\lone(\eta)$ of equation (\ref{lone}) and the projections (\ref{loneprojections}) establish the claimed form of the incident field.  Conversely, assuming the given form of the incident field and $[t_\lfp, t_\lfe]=\eta^\mthird t(\eta)$, the first equation of (\ref{rightscattering}) is
\begin{equation}
  M(\eta)t(\eta) = \eta^\thirds\xi(\eta),
\end{equation}
with $\xi=[\alpha(\eta),\beta(\eta)]^T$.  Lemma~\ref{lemma:Mtb} guarantees that $t(\eta)$ is a Puiseux series with $t_0=c_0[1,-1]^T$.  Thus, since $v_\lfl = v_\lfp-v_\lfe$, one obtains
\begin{equation}
  \psi(0) = c_0 v_\lfl + \sum_{n=0}^\infty \big( (c_{n+1}+d_{n+1}) v_\lfp + (-c_{n+1}+d_{n+1}) v_\lfe \big)\eta^{-n/3},
\end{equation}
which shows that $\psi(0)$ admits a (bounded) Puiseux series with limit in $\Vo_-$, and therefore $\psi(L)=T\psi(0)$ admits a Puiseux series with $\lim_{\eta\to0}\psi(L)\in\To[\Vo_-]$.  If $\lim_{\eta\to0}\psi(L)$ has a nonzero $e_2$ component, then $[e_0,\psi(L)]\not=0$, so that the projections $\Plp\psi(L)$ and $\Ple\psi(L)$ are of strict order $\eta^\mthirds$, contradictory to the assumption that the incident field is $O(\eta^\mthird)$.  Thus $\varepsilon^2(\psi(L)|_{\eta=0})=0$ and therefore $\lim_{\eta\to0}\psi(L)\in\Vo$.  Since $\lim_{\eta\to0}\psi(L)$ also lies in $\To[\Vo_-]$, it must be a multiple of $\mathring v_\lone$.

\smallskip
{\itshape Proof of Theorem~\ref{thm:leftscatteringres}} (resonant scattering from the left).
We begin by proving that the condition (\ref{generic2}) {\itshape i.e.},
\begin{equation}
[e_+, \To e_1] \not= 0,
\end{equation}
is equivalent in the resonant case $\mathring D = 0$ to any one of the conditions $[e_+, \To^{-1} e_1] \not= 0$, $[e_+, \To w_1][e_+, \To^{-1} w_1]\not= 0$, $\To[\mathring V_+]\not= \mathring V_+$, or $\To[\mathring V_-]\not= \mathring V_-$. Assume that $\mathring D = 0$. To prove that statement we shall prove the contrapositive. Suppose $[e_+, \To e_1] = 0$. Then this implies $\To e_1\in \sspan\{e_0,e_1,e_2\}$ and since $\mathring D = 0$ then it follows by Theorem \ref{thm:D} that we must have $\To[\mathring V_-]=V_-$ so that $[e_+, \To^{-1} e_1] = 0$. As we must also have $[\To^{-1}e_+, e_1] = 0$ then this implies $\To^{-1} e_+\in \sspan\{e_+,e_0,e_2\}$ so again by Theorem \ref{thm:D} that we must have $\To^{-1}[\mathring V_+]=V_+$. Hence by (\ref{LowOrderExpanEigenvecSpanRel}) it follows that $[e_+, \To w_1]=[e_+, \To^{-1} w_1]= 0$.  Conversely, suppose any one of those conditions is not true. Then it follows trivially from Theorem \ref{thm:D} and (\ref{LowOrderExpanEigenvecSpanRel}) that $[e_+, \To e_1] \not= 0$. This proves the equivalence of those conditions in the resonant case.

For part~1, since $e_0\in\Vo_+$ and $\To^{-1}e_0=\ell^{-1}e_0$, we have $e_0\in\To^{-1}[\Vo_+]$.  Also, $e_+\in\Vo_+$, and by Theorem~\ref{thm:D} $\To^{-1} e_+\in\Vo$, so that $\To^{-1}e_+ = ae_++be_0+ce_1$.  Since $1=[e_+,e_+]=[\To^{-1}e_+,\To^{-1}e_+]=|a|^2-|c|^2$, we have $a\not=0$, and thus the vector $a^{-1}\To^{-1}e_+$ has $e_+$ component equal to $1$.  By adding the appropriate multiple of $e_0$, one obtains $\mathring w_+$. That $\varepsilon^{1}(\mathring w_+)\not = 0$ follows from the fact that the condition (\ref{generic2}) implies that $\Vo_+\not=\To^{-1}[\Vo_+]=\sspan\{e_0,\mathring w_+\}$.

Consider the scattering problem from the left with incident field at $z=0$ equal to the vector $\eta^\third v_\rtp$.  The first equation of (\ref{leftscattering}) becomes
\begin{equation}\label{label2278}
  M(\eta)t(\eta) = \eta^\third \epsilon_1\,,
\end{equation}
in which $\epsilon_1=[1\;0]^T$ and $\epsilon_2=[0\;1]^T$.
In this case, we have
\begin{equation}
  M_0 = [\To e_+, e_+]\mat{1}{1}{0}{0}{0},
\end{equation}
which is necessarily nonzero by Theorem~\ref{thm:D}(f), and
\begin{equation}
  \mathrm{Ran}(M_0)= \sspan\{\epsilon_1\}\,,
  \qquad
  \mathrm{Null}(M_0)= \sspan\{\epsilon_2\}\,.
\end{equation}
One computes that the $22$-entry of $M_1$ vanishes, as required by Lemma~\ref{lemma:Mtb} vanishes:
\begin{equation}
  (\epsilon_2,M_1\epsilon_2) = [\To e_0, \zeta w_1]+[\zeta^2\To w_1,e_0] = [\ell e_0,\zeta w_1]+[\zeta^2 w_1,\ell^{-1}e_0]=0
\end{equation}
since $w_1\in\sspan\{e_0,e_1\}$ and $[e_0,e_0]=0=[e_0,e_1]$.  The matrix $\mathcal{M}_0$ in the lemma is
\begin{equation}
  \mathcal{M}_0 = \mat{1.1}{\epsilon_2^T M_2\epsilon_2}{\epsilon_2^T M_1\epsilon_1}{\epsilon_1^T M_1\epsilon_2}{\epsilon_1^T M_0\epsilon_1}
  =\mat{1.1}{0}{\zeta[\To w_1,e_+]}{\zeta[\To e_+,w_1]}{[\To e_+,e_+]} \,,
\end{equation}
and the condition $\det\mathcal{M}_0\not=0$ in the lemma is
\begin{equation}
  [\To e_+, w_1][\To w_1, e_+] \not=0\,,
\end{equation}
which is the generic assumption in Theorem~\ref{thm:leftscatteringres}.

Problem \ref{label2278} is handled by problem (i) in the lemma, which gives $t(\eta)$ as a convergent Puiseux series with $t_0$ a nonzero multiple of $\epsilon_2$; one computes that this multiple is equal to $(\zeta[\To e_+,w_1])^{-1}$.  Thus $\psi(L) = t_\rtp v_\rtp + t_\rte v_\rte$ is a convergent Puiseux series whose value at $\eta=0$ is equal to $(\zeta[\To e_+,w_1])^{-1} e_0$.  Since $\To e_0 = \ell e_0\not=0$, $\lim_{\eta\to0}\psi(0) = c_\rone e_0$, with
\begin{equation}
  c_\rone = \frac{1}{\ell\,\zeta\,[\To e_+,w_1]}\,.
\end{equation}

We have proved that the incident field $F(z)\eta^\third v_\rtp e^{ik_\rtp z}$ results in a total field, say $\phi(z)$ that is a Puiseux series in $\eta$, which, at $z=0$ is equal to
\begin{equation}
  \phi(0) = c_\rone e_0 + \eta^\third v_\rone + O(\eta^\thirds)
\end{equation}
for some vector $v_\rone$, and whose projections onto the rightward modes are
\begin{eqnarray*}
  && \Prp\phi(0) \,=\, \eta^\third v_\rtp = \eta^\third e_+ + O(\eta^\thirds)\,, \\
  && \Pre\phi(0) \,=\, 0\,.
\end{eqnarray*}
Comparing this to (\ref{resproj1},\ref{resproj2}) yields
\begin{equation}\label{label4747}
  [e_+,v_\rone]=1\,,
  \qquad
  [e_0,v_\rone]=0\,,
\end{equation}
that is, $\varepsilon^+(v_\rone)=1$ and $\varepsilon^2(v_\rone)=0$, so that $v_\rone\in\Vo$. 

In part 2a of Theorem~\ref{thm:leftscatteringres}, the incident field is $\eta^\third(c_\rone)^{-1}F(z)v_\rtp e^{ik_\rtp z}$, and therefore the scattering field $\phi_\rone(z)$ is equal to $(c_\rone)^{-1} \phi(z)$.  One thus obtains
\begin{equation}
  \lim_{\eta\to0} \phi_\rone(0) = \lim_{\eta\to0} (c_\rone)^{-1} \phi(0) = e_0\,,
\end{equation}
so that $\lim_{\eta\to0}\phi_\rone(z)$ satisfies the initial-value problem for the field $\phi_\rzero(z)$ given in part~1 of the theorem.

\smallskip
Now consider the incident field $F(z)v_\rte e^{ik_\rte z}$, which results in the equation
\begin{equation}
  M(\eta)t(\eta) = -C\zeta^2\eta^\thirds\epsilon_2\,.
\end{equation}
This problem is handled by problem (ii) of Lemma~\ref{lemma:Mtb}.  The conditions there are satisfied because the $11$-entry of $\mathcal{M}_0$ vanishes, while the $22$-entry does not by Theorem~\ref{thm:D}(f), where $[\To e_+,e_+]=t_{++}$.  Again, $\psi(0)$ is found to be a convergent Puiseux series in $\eta$ whose value at $\eta=0$ is equal to $c_\rtwo e_0$, with
\begin{equation}
  c_\rtwo = -\frac{\zeta\,[\To e_+,e_+]}{\ell\,[\To e_+,w_1][\To w_1,e_+]}\,.
\end{equation}
We have proved that the incident field $F(z)v_\rte e^{ik_\rte z}$ results in a total field $\phi(z)$ that is a Puiseux series in $\eta$ such that
\begin{equation}
  \phi(0) = c_\rtwo e_0 + \eta^\third v_\rtwo + O(\eta^\thirds)
\end{equation}
for some vector $v_\rtwo$, and whose projections onto the rightward modes are
\begin{eqnarray*}
  && \Prp\phi(0) \,=\, v_\rte = e_0 + O(\eta^\third)\,, \\
  && \Pre\phi(0) \,=\, 0\,.
\end{eqnarray*}
Comparing this to (\ref{resproj1},\ref{resproj2}) yields
\begin{equation}\label{label4748}
  [e_+,v_\rtwo]=0\,,
  \qquad
  [e_0,v_\rtwo]=0\,,
\end{equation}
that is, $\varepsilon^+(v_\rtwo)=0$ and $\varepsilon^2(v_\rtwo)=0$, so that $v_\rtwo\in\Vo_-$.
In part 2b of Theorem~\ref{thm:leftscatteringres}, the incident field is $(c_\rtwo)^{-1}F(z)v_\rte e^{ik_\rte z}$, and therefore the scattering field $\phi_\rtwo(z)$ is equal to $(c_\rtwo)^{-1} \phi(z)$.  One thus obtains
\begin{equation}
  \lim_{\eta\to0} \phi_\rtwo(0) = \lim_{\eta\to0} (c_\rtwo)^{-1} \phi(0) = e_0\,,
\end{equation}
so that $\lim_{\eta\to0}\phi_\rone(z)$ satisfies the initial-value problem for $\phi_\rzero(z)$.

To prove part~3 of the theorem, observe that
\begin{equation}
  \phi_+(z) \,=\, \eta^\mthird c_\rone\left( \phi_\rone(z) - \phi_\rtwo(z) \right)
\end{equation}
is a Puiseux series since the leading-order terms cancel, and one obtains
\begin{equation}
  \phi_+(0) \,=\, v_\rone - \frac{c_\rone}{c_\rtwo}\, v_\rtwo\,.
\end{equation}
From (\ref{label4747},\ref{label4748}), one obtains
\begin{equation*}
  [e_+,\phi_+(0)]=1\,,
  \qquad
  [e_0,\phi_+(0)]=0\,,
\end{equation*}
so that $\phi_+(0) = \mathring w_+ + c e_0$ for some constant $c$, which is a combination of the initial values of the problems stipulated in part~1 of the theorem that define $\mathring\phi_+$ and $\mathring\phi_\rzero$.  Therefore $\phi_+|_{\eta=0} = \mathring\phi_+ + c\mathring\phi_\rzero$.  The incident field that results in the total scattering field $\phi_+$, evaluated at $z=0$, is
\begin{equation}
  \eta^\mthird\, c_\rone \left( (c_\rone)^{-1} \eta^\third v_\rtp - (c_\rtwo)^{-1} v_\rte \right)
  \,=\, v_\rtp + \eta^\mthird \zeta \frac{[\To w_1,e_+]}{[\To e_+, e_+]} v_\rte\,.
\end{equation}
Part~4 of Theorem~\ref{thm:leftscatteringres} comes from the linear independence of $\phi_\rone$ and $\phi_\rtwo$ and the definition of $\phi_+$.

\smallskip
{\itshape Proof of Theorem~\ref{thm:rightscatteringres}} (resonant scattering from the right).
Part\,1 is similar to part\,1 of Theorem~\ref{thm:leftscatteringres}.  We have $e_0\in\To[\Vo_-]$ and $\To e_1 = ae_++be_0+ce_1$ and $-1=[e_1,e_1]=[\To e_1,\To e_1]=|a|^2-|c|^2$, so that $c\not=0$ and we can define $\mathring w_-$ to be $c^{-1}\To e_0$ plus an appropriate multiple of $e_0$.  Notice that
\begin{equation}\label{}
  c = -[e_1,\To e_1].
\end{equation}
%

%
%
%
%

The first equation in (\ref{rightscattering}) is
\begin{equation}\label{label7644}
  M(\eta) \col{1}{t_\lfp}{t_\lfe} \,=\, -C\eta^\mthirds\col{1}{j_\lfp}{\zeta j_\lfe}
\end{equation}

The first two coefficients in the Puiseux expansions of $M(\eta)$ vanish when $\To e_0=\ell e_0$.  The $11$-entry of $M(\eta)$, for example,~is
\begin{multline}
  [v_\lfp, T v_\lfp] \,=\, [e_0,\To e_0] + \eta^\third \left( [e_0,\To w_1] + [w_1,\To e_0] \right) + O(\eta^\thirds)\\
   \,=\, \ell[e_0,e_0] + \eta^\third\left( \bar\ell^{-1} [e_0,w_1] + \ell[w_1,e_0] \right) + O(\eta^\thirds) \,=\, O(\eta^\thirds)
\end{multline}
since $[e_0,e_0]=0$ and $[e_0,w_1]=0$.  Set $N(\eta)=\eta^\mthirds M(\eta)=\sum_{\ell=0}^\infty M_{\ell+2}\,\eta^{\ell/3}$, so that (\ref{label7644}) becomes
\begin{equation}
  N(\eta) \col{1}{t_\lfp}{t_\lfe}  = -C\col{1}{j_\lfp}{\zeta j_\lfe}\,.
\end{equation}
One computes that the matrix $N(0)$ is
\begin{equation}\label{M2}
  N(0) = M_2 \,=\, -k_1^2
          \mat{1.5}{\hat \ell+c+\ell}{\zeta\hat\ell+\zeta^2c+\ell}{\hat\ell+\zeta^2c+\zeta\ell}{\zeta\hat\ell+\zeta c+\zeta\ell}
  \qquad
  (\hat\ell := \bar\ell^{-1},\;\,c:=-[e_1,\To e_1])
\end{equation}
by using
\begin{equation}
  [e_0,\To w_2] = -k_1^2\hat\ell,
  \qquad
  [w_1,\To w_1] = k_1^2[e_1,\To e_1],
  \qquad
  [w_2,\To e_0] = -k_1^2\ell\,.
\end{equation}
The determinant is
\begin{equation}
  \det M_2 \,=\, \frac{3k_1^4\zeta}{\bar\ell} \left( \ell - [e_1,\To e_1] \left( |\ell|^2+1 \right)\right),
\end{equation}
which, by assumption, does not vanish.
Thus, one obtains $t_\lfp$ and $t_\lfe$ and therefore also the total field $\phi(0)$ at $z=0$ as convergent Puiseux series in $\eta^\third$,
\begin{equation*}
   \col{1}{t_\lfp}{t_\lfe} = \col{1}{t_\lfp^0 + O(\eta^\third)}{t_\lfe^0 + O(\eta^\third)}  \,,
\end{equation*}
\begin{equation}\label{label4639}
  \phi(0) = t_\lfp v_\lfp + t_\lfe v_\lfe = (t_\lfp^0+t_\lfe^0)e_0 + O(\eta^\third).
\end{equation}
If $j_\lfp$ and $j_\lfe$ do not both vanish, then at least one of $t_\lfp$ and $t_\lfe$ is nonzero at $\eta=0$.

Let the incident field at $z=L$ be the scaled mode $\eta^\third v_\lfl = v_\lfp-v_\lfe$, then $j_\lfp=1$ and $j_\lfe=-1$.  One computes that
\begin{equation}\label{label9922}
   [1 \;\;1] \col{1}{t_\lfp^0}{t_\lfe^0}  =  -C [1 \;\;1]\, M_2 \col{1}{1}{-\zeta} \,=\, C k_1^2 \left(1-\zeta^2\right) \hat\ell \left(2+|\ell|^2\right) \,\not=\, 0 \,,
\end{equation}
which implies that $[t_\lfp^0\;\;t_\lfe^0]$ is not a multiple of $[1\;-\!1]$.  Thus by (\ref{label4639}),
\begin{equation}\label{}
  0\not= \lim_{\eta\to0}\phi(0) \in \sspan\{e_0\} \,.  
\end{equation}

Now let $[j_1\;\,j_2]$ be the (nonzero) vector such that
\begin{equation}\label{j1j2}
  -C M_2 \col{1}{j_1}{\zeta\, j_2} = \col{1}{1}{-1}.
\end{equation}
By choosing the incident field such that
\begin{eqnarray*}
 \tilde j_\lfp &=& \eta^\mthird j_1 \\
 \tilde j_\lfe &=& \eta^\mthird j_2,
\end{eqnarray*}
one obtains
\begin{eqnarray*}
  t_\lfp &=& \eta^\mthird \left( 1+a\,\eta^\third+O(\eta^\thirds) \right) \\
  t_\lfe &=& \eta^\mthird \left( -1+b\,\eta^\third+O(\eta^\thirds) \right)
\end{eqnarray*}
for some numbers $a$ and $b$.  Thus,
\begin{multline*}
  \phi(0) \,=\, \eta^\mthird \left( 1+a\,\eta^\third+\dots \right) \left( e_0 + \eta^\third w_1 + \dots \right)
                   + \eta^\mthird \left( -1+b\,\eta^\third+\dots \right) \left( e_0 + \eta^\third \zeta^2 w_1 + \dots \right) \\
     = (a+b) e_0 + (1-\zeta^2)w_1 + O(\eta^\third) = c' e_0 + (1-\zeta^2)k_1\, e_1 + O(\eta^\third),
\end{multline*}
for some constant $c'$.  Importantly, the $e_1$ component of $\phi(0)$ is nonzero, so that
\begin{equation}\label{}
  \phi(L) \,=\, (1-\zeta^2)k_1\, \mathring w_- + c'e_0
\end{equation}
for some constant $c'$.  By the initial-value problems defining $\mathring w_-$ and $\mathring w_\lzero$, one has
\begin{equation*}
  \frac{1}{(1-\zeta^2)k_1} \phi(z) = \phi_-(z) + c'\phi_\lzero(z).
\end{equation*}
for some constant $c$.
By solving for $j_1$ and $j_2$ and dividing $\tilde j_\lfp$ and $\tilde j_\lfe$ by $(1-\zeta^2)k_1$ (and recalling $C=3k_1^2$), one obtains the coefficients $j_\lfp$ and $j_\lfe$ of the incident field $\phi_-^\inc$ in the theorem.  One calculates that the sum of the components of the vector on the right-hand side of (\ref{jpje}) is equal to $(\zeta-\zeta^2)\hat\ell(2+|\ell|^2)\not=0$.  Thus $[j_\lfp\;j_\lfe]$ is not a multiple of $[1\;-\!1]$, and this establishes part~4 of the theorem.
\end{proof}

\section{Analytic perturbation for scattering in the slow-light regime}
\label{sec:analyticperturbationtheory}

This section establishes several results in this paper, Theorems \ref{thm:localbandstrucandjordanstructure} and \ref{thm:noguidemodesundernondegeneracycondition} and Propositions \ref{prop:PuiseuxSeriesEigenvAsympt} and \ref{prop:PuiseuxSeriesProjAsympt}, by applying analytic perturbation theory for the scattering problem to the slow-light regime.

\begin{proof}[Proof of Theorem \ref{thm:localbandstrucandjordanstructure} (sec.~\ref{sec:EMBlochwaves})]
Define the analytic matrix-valued function (suppressing dependence on $\kk=(k_1,k_2)$)
\begin{align}
L(k,\omega)=\frac{c}{16\pi}J(K(\omega)-k I),\;\;\;|\omega-\omega^0|<r,\;k\in\mathbb{C}\label{DefTheHermitianAnalyticMatrixFunc}
\end{align}
for some $r>0$ (see (\ref{J}) and sec.~\ref{sec:periodic} for the definitions of $J$ and $K$), which satisfies $L(\overline{k},\overline{\omega})^\ast=L(k,\omega)$ on its domain (such an analytic function is said to be Hermitian as in \cite[p.~330]{HrynivLancaster1999}). At $k=k^0$, the Hermitian analytic matrix-valued function $L(k^0,\omega)$ has an eigenvalue $\omega=\omega^0$ ({\itshape i.e.}, $\det L(k^0,\omega^0)=0$) and this eigenvalue is of positive definite type \cite[p.~333]{HrynivLancaster1999}, which means that the following quadratic form on $\ker L(k^0,\omega^0)=\ker(K(\omega^0)-k^0 I)$ is positive-definite:
\begin{align}
\psi\mapsto \left(\psi, \frac{\partial L}{\partial\omega}(k^0,\omega^0)\psi\right)=[\psi,K^{\prime}(\omega^0)\psi],\;\;\psi\in \ker L(k^0,\omega^0). \label{PositiveDefiniteType_kzeroEigenvalue}
\end{align}
This follows from positivity of the energy density described in \cite[sec.~III.B]{ShipmanWelters2013} for layered media and the results \cite[Theorems 3.2 and A.1 and proof of Theorem 3.3]{ShipmanWelters2013}, which connect the frequency derivative $\frac{dK}{d\omega}$ of the indicator matrix $K(\omega)$ to the energy density of electromagnetic fields of the form
$(\EE(z),\HH(z))e^{i(k_1x+k_2y-\omega t)}$.

Let $g$ be the geometric multiplicity of the eigenvalue $\omega=\omega^0$ of $L(k^0,\omega)$ (as defined in \cite[p.~325]{HrynivLancaster1999} or \cite[sec.~3.1, Def.~6, 7]{Welters2011a} by $g=\dim \ker L(k^0,\omega^0)$), and let $M$ denote its algebraic multiplicity (as defined in \cite[p.~328]{HrynivLancaster1999} or \cite[sec.~3.1, Def.~6]{Welters2011a}), which by \cite[Lemma 2.4]{HrynivLancaster1999} and \cite[sec.~3.1, Def.~6]{Welters2011a} is given by the order of the zero of $\det L(k^0,\omega)$ at $\omega=\omega_0$.  We will show that $M$ is finite by establishing $M=g$.

To prove that $M=g$, we must show that $\omega=\omega^0$ is an eigenvalue of $L(k^0,\omega)$ of finite algebraic multiplicity which is also a semisimple eigenvalue of $L(k^0,\omega)$ (as defined in \cite[p.~327]{HrynivLancaster1999} or \cite[sec.~3.1, Def.~7]{Welters2011a}). This will follow from \cite[Proposition 15, p.~59]{Welters2011} if we can show that $\frac{\partial L}{\partial \omega}(k^0,\omega^0)\psi_0=-L(k^0,\omega^0)\psi_1$ has no solutions for $\psi_0\in\ker L(k^0,\omega^0)/\{0\}$ and $\psi_1\in \mathbb{C}^4$. But this follows from the fact that $\omega=\omega^0$ is an eigenvalue of $L(k^0,\omega)$ of positive type; for if $(\psi_1,\psi_2)$ were a solution, then
\begin{equation}
0<\left(\psi_0,\frac{\partial L}{\partial\omega}(k^0,\omega^0)\psi_0\right)=-(\psi_0,L(k^0,\omega^0)\psi_1)\\
=-(L(k^0,\omega^0)\psi_0,\psi_1)=0,
\end{equation}
a contradiction.

Because $L(k,\omega)$ is a Hermitian analytic matrix-valued function and $\omega=\omega^0$ is an eigenvalue of $L(k^0,\omega)$ of finite algebraic multiplicity and of positive type, it follows from \cite[Corollary 3.8, 3.7; Theorem 3.6]{HrynivLancaster1999} that there exists a complex neighborhood $\mathcal{W}_0$ of $\omega=\omega^0$ such that, for all $k$ in a complex neighborhood $\mathcal{K}_0$ of $k^0$, the spectrum of $L(k,\omega)$ in $\mathcal{W}_0$ consists of $g$ eigenvalues $\omega_1(k),\ldots,\omega_g(k)$ (counting multiplicities) that are analytic in $k$ and real-valued for real $k$, with $\omega_j(k^0)=\omega^0$ for all $j$.

These eigenvalue functions are nonconstant functions of $k$.  For suppose this were not the case. Then for some $j$ we would have $\omega_j(k)=\omega_0$ and $0=\det L(k,\omega_j(k))=\det L(k,\omega_0)$ for all $k\in \mathcal{K}_0$, which is not possible since $\det L(k,\omega_0)=\det(\frac{c}{16\pi}J)\det(K(\omega^0)-kI)$ is a nonconstant polynomial in $k$.

Now $k^0$ is an eigenvalue of the matrix $K(\omega^0)$ with geometric multiplicity $\dim\ker (K(\omega^0)-k^0I)=\dim\ker L(k^0,\omega^0)=g$.  Let $m$ denote the algebraic multiplicity of $k_0$ and $m_1\geq \ldots \geq m_g$ its partial multiplicities. This means that $g$ is the number of Jordan blocks in the Jordan normal of $K(\omega^0)$ corresponding to the eigenvalue $k^0$, the dimensions of the $g$ Jordan blocks are $m_1,\ldots, m_g$, and the sum of these dimensions is $m$.  According to the spectral theory of holomorphic matrix functions (see \cite[sec.~3.1.1]{Welters2011}), it follows from \cite[Example 1, p.~49; Proposition 13, pp.~55, 56]{Welters2011} that $k=k^0$ is an eigenvalue of the analytic matrix-valued function $L(k,\omega^0)$ with geometric multiplicity $g$, algebraic multiplicity $m$, and partial multiplicities $m_1, \ldots, m_g$.
From this, \cite[Corollary 4.3]{HrynivLancaster1999}, and the proof of \cite[Theorem 4.2]{HrynivLancaster1999}, we infer that the functions $\omega_j(k)$, $j=1,\ldots,g$ can be reordered such that the order of the zero of $\omega_j(k)-\omega^0$ at $k=k^0$ is $m_j$ for $j=1,\ldots,g$.

We will now prove statement (iv) of Theorem \ref{thm:localbandstrucandjordanstructure}. By definition (see \cite[Theorem 5.1.1]{GohbergLancasterRodman2005}) the sign characteristic of the pair $(K(\omega^0), \frac{c}{16\pi}J)$ corresponding to the eigenvalue $k^0$ of $K(\omega^0)$ is an ordered set $\{\sigma_1,\ldots,\sigma_g\}$ of signs $\pm 1$, which is unique up to permutation of signs corresponding to Jordan blocks for $k^0$ of equal dimensions.  The sign characteristic are the signs in the Jordan-energy-flux canonical form of $K(\omega^0)$ with respect to the indefinite inner-product $[\cdot,\cdot]$ corresponding to the eigenvalue $k^0$ (see \cite[Theorem 5.1.1]{GohbergLancasterRodman2005}).

There is an alternative description of the sign characteristic of the pair $(K(\omega^0), \frac{c}{16\pi}J)$ corresponding to the eigenvalue $k^0$ of $K(\omega^0)$ using \cite[Theorem 5.11.2]{GohbergLancasterRodman2005} which is quite different in character from its definition it in terms of \cite[Theorem 5.1.1]{GohbergLancasterRodman2005}.  Consider the Hermitian matrix-valued function $M(k):=-L(k,\omega^0)$.  From analytic perturbation theory for Hermitian matrices, $M(k)$ has exactly $\zeta_1(k),\cdots, \zeta_g(k)$ eigenvalues (counting multiplicities) in a neighborhood of $k=k_0$, which are analytic in a neighborhood of $k=k_0$ and real for real $k$ and satisfy $\zeta_j(k_0)=0$ for all $j$.  After reordering the eigenvalues, the order of the zero of $\zeta_j(k)$ is $m_j$ for all $j$ and the sign characteristic can be arranged such that $\sigma_j=\frac{\zeta_j^{(m_j)}(k^0)}{|\zeta_j^{(m_j)}(k^0)|}$ for all $j$ (see~\cite[Theorem 5.11.2]{GohbergLancasterRodman2005}).

If $k=k^0$ is an eigenvalue of $L(k,\omega^0)$ with partial multiplicities $m_1,\ldots,m_g$, it follows from the arguments above that the analytic eigenvalues $\mu_1(k),\ldots, \mu_g(k)$ of $L(k,\omega^0)-\mu I$ with $\mu_j(k^0)=0$ are $\mu_j(k)=-\zeta_j(k)$, $j=1,\ldots, g$. Then as described in \cite[p.~337]{HrynivLancaster1999}, the sign characteristic of the eigenvalue $k=k^0$ is the set of numbers $\varepsilon^j:=\frac{\mu_j^{(m_j)}(k^0)}{|\mu_j^{(m_j)}(k^0)|}$.  From this, one obtains $-\varepsilon^j=\frac{\zeta_j^{(m_j)}(k^0)}{|\zeta_j^{(m_j)}(k^0)|}=\sigma_j$ for all $j$.  To complete the proof it suffices to show that $\frac{\omega_j^{(m_j)}(k^0)}{|\omega_j^{(m_j)}(k^0)|}=-\varepsilon^j$ for all $j$. But this follows now from the proof of \cite[Theorem 5.1]{HrynivLancaster1999}.

\end{proof}

\begin{proof}[Proof of Proposition \ref{prop:PuiseuxSeriesEigenvAsympt} (sec.~\ref{sec:UnidirAmbMed})]
Define the analytic matrix-valued function
\begin{align}
L(k,\eta)=\frac{c}{16\pi}J(K(\eta)-k I),\;\;\;|\eta|<r,\;k\in\mathbb{C}\label{DefTheHermitianAnalyticMatrixFunc2}
\end{align}
for some $r>0$, which satisfies $L(\overline{k},\overline{\eta})^\ast=L(k,\eta)$ on its domain (such an analytic function is said to be Hermitian as in \cite[p.~330]{HrynivLancaster1999}). At $k=\kzero$, the Hermitian analytic matrix-valued function $L(\kzero,\eta)$ has an eigenvalue $\eta=0$ ({\itshape i.e.}, $\det L(\kzero,0)=0$) and this eigenvalue is of definite type \cite[p.~333]{HrynivLancaster1999}, that is, the quadratic form on $\ker L(\kzero,0)=\ker(K(0)-\kzero I)$, 
\begin{align}
\psi\mapsto \left(\psi, \frac{\partial L}{\partial\eta}(\kzero,0)\psi\right)=[\psi,K^{\prime}(0)\psi],\;\;\psi\in \ker L(\kzero,0),\label{DefiniteType_kzeroEigenvalue}
\end{align}
is of definite type (positive or negative definite), which follows from hypotheses (\ref{GenericCondition}) and the facts that $\ker(K(0)-\kzero I)=\operatorname{span}\{e_0\}$ and $K^{\prime}(0)^{[\ast]}=K^{\prime}(0)$. Moreover, by definition \cite[p.~325]{HrynivLancaster1999}, the geometric multiplicity of the eigenvalue $\eta=0$ for $L(\kzero,\eta)$ is $\dim \ker L(\kzero,0)$ and is therefore equal to $1$ and, by definition \cite[p.~328]{HrynivLancaster1999} and \cite[Lemma 2.4]{HrynivLancaster1999}, the algebraic multiplicity $M$ of the eigenvalue $\eta=0$ for $L(\kzero,\eta)$ is the order of the zero of $\det L(\kzero,\eta)$ at $\eta=0$.  Since the eigenvalue $\eta=0$ of $L(\kzero,\eta)$ is of definite type, by \cite[Corollary 3.8 \& Theorem 3.6]{HrynivLancaster1999} it is also is semisimple, that is, its geometric and algebraic multiplicities are the same, so that $M=1$ and all the eigenvalues of $L(k,\eta)$ in a sufficiently small complex neighborhood of $(\kzero,0)$ are given by a single analytic function $\eta=\eta(k)$ of $k$ which is real-valued for real $k$.

Now $k=\kzero$ is an eigenvalue of $L(k,0)$ with geometric multiplicity $1$ (since $\dim \ker L(\kzero,0)=1$) and algebraic multiplicity $3$ (since order of zero of $\det L(k,0)$ at $k=\kzero$ is $3$). In fact, it follows from (\ref{DefTheHermitianAnalyticMatrixFunc2}) that the definitions of Jordan chain, geometric multiplicity, partial multiplicities, and algebraic multiplicity for the matrix function $L(k,0)$ in \cite{HrynivLancaster1999} correspond to the standard matrix definitions for the matrix $K(0)$ and its eigenvalues. Thus, from \cite[Corollary 4.3]{HrynivLancaster1999} and the proof of \cite[Theorem 4.2]{HrynivLancaster1999} one finds that the eigenvalue $k=\kzero$ of $L(k,\eta)$ at $\eta=0$ has the completely regular splitting property (CRS) (as defined in \cite[p.~327]{HrynivLancaster1999}), meaning that the real-analytic function $\eta=\eta(k)$ has the series expansion about $k=\kzero$
\begin{align}
\eta(k)=c_3(k-\kzero)^3+\cdots,\;\;\;c_3\not=0,
\label{Proof_PropPuiseuxSeriesEigenvAsympt_EtaExpans}
\end{align}
which can be inverted to obtain a convergent Puiseux series expansion in $\eta^{\third}$ whose branches are of the form
\begin{align}
k_{h}(\eta)=\kzero+\eta^{\third}\kone\zeta^h+\eta^{\thirds}k_2\zeta^{2h}+\eta k_3+\ldots,\;\;\;\kone=\frac{1}{\sqrt[3]{c_3}}\not=0,\;\;\;\eta(k_{h}(\eta))=\eta,\;\;\;h=0,1,2,\;\;\;|\eta|\ll 1.
\label{Proof_PropPuiseuxSeriesEigenvAsympt_EtaInvExpans}
\end{align}
Here, $\zeta=e^{i2\pi/3}$ and $\sqrt[3]{c_3}$ is the real cubic root of $c_3$ such that for real $\eta^{\third}$ near $\eta=0$ the function $k_0(\eta)$ is real, and these are all the eigenvalues of $L(k,\eta)$ ({\itshape i.e.} solutions of $\det L(k,\eta)=0$) in a sufficiently small complex neighborhood of $(\kzero,0)$.

To the eigenvalue Puiseux series $k(\eta)$ of $L(k,\eta)$ whose branches are $k_h(\eta)$ ($h=0,1,2$) there corresponds an eigenvector Puiseux series $\psi(\eta)$ that also has a convergent Puiseux series expansion in $\eta^{\third}$ whose branches $\psi_h(\eta)$ satisfy
\begin{align}
L(k_h(\eta),\eta)\psi_h(\eta)=0,\;\;\;\psi_h(\eta)=\psi_0+\eta^{\third}\zeta^h\psi_1+\eta^{\thirds}\zeta^{2h}\psi_2+\eta\psi_3+\ldots,\;\;\;\psi_0=e_0,\;\;\;h=0,1,2,\;\;\;|\eta|\ll 1.\label{Proof_PropPuiseuxSeriesEigenvAsympt_EigenvectorEtaExpans}
\end{align}
(see \cite[Lemma 2]{LancasterMarkusZhou2003}, \cite[\S 3.3.2]{Welters2011a} or \cite{Welters2011}).
It follows from the analyticity of $K(\eta)$ at $\eta=0$, the definition (\ref{DefTheHermitianAnalyticMatrixFunc2}), the expansion (\ref{Proof_PropPuiseuxSeriesEigenvAsympt_EtaInvExpans}), and the identity (\ref{Proof_PropPuiseuxSeriesEigenvAsympt_EigenvectorEtaExpans}) that
\begin{eqnarray}
&& (K(0)-\kzero I)\psi_0=0,\notag\\
&&(K(0)-\kzero I)\psi_1-\kone\psi_0=0,\notag\\
&& (K(0)-\kzero I)\psi_2-\kone\psi_1-k_2\psi_0=0,\notag\\
&&(K(0)-\kzero I)\psi_3-\kone\psi_2-k_2\psi_1+(K^{\prime}(0)-k_3I)\psi_0=0.\notag
\end{eqnarray}
From this, the fact that the sequence $e_0$, $e_1$, $e_2$ is a Jordan chain for $K(0)=K(0)^{[\ast]}$ corresponding to the real eigenvalue $\kzero$, and the Jordan normal form and flux-form (\ref{jordanform}) with respect to the Jordan basis $\{e_+,e_0,e_1,e_2\}$ it follows that
\begin{gather}
\psi_2=\kone^2e_2+a_1e_1+a_0e_0,\;\;\;\psi_1=\kone e_1+a_+e_0,\;\;\;\text{for some }a_+,a_0,a_1\in\mathbb{C}\,\text{ with }a_1=\kone a_++k_2,\label{Proof_PropPuiseuxSeriesEigenvAsympt_LowOrderExpanEigenvecSpanRel}\\
-\kone=[e_1,\psi_1],\;\;\;-\kone^2=[e_0,\psi_2],\;\;\;-\kone^3=[e_0,K^{\prime}(0)e_0].\nonumber
\end{gather}
Now define for $|\eta|\ll 1$,
\begin{align}
\klp(\eta)=k_0(\eta),\; \vlp(\eta)=\psi_0(\eta),\\
\kre(\eta)=k_1(\eta),\;\vre(\eta)=\psi_1(\eta),\\
\kle(\eta)=k_2(\eta),\;\vle(\eta)=\psi_2(\eta).\label{Proof_PropPuiseuxSeriesEigenvAsympt_DefinitionOfEigenpairsTokzeroGroup}
\end{align}
This establishes the statements (\ref{eigenvalues})--(\ref{LowOrderExpanEigenvecSpanRel}) of this proposition for the eigenpairs $(k_j(\eta),v_j(\eta))$ of $K(\eta)$ for $j\in\{-p,+e,-e\}$ when $|\eta|\ll 1$.

Next, there exists a vector-valued function $\nu(\delta)$ which is analytic in a complex neighborhood of $\delta=0$ such that $\psi_h(\eta)=\nu(\zeta^{h}\eta^{\third})$, $h=0,1,2$, for any branch of $\eta^{\third}$ near $\eta=0$.  By (\ref{Proof_PropPuiseuxSeriesEigenvAsympt_EigenvectorEtaExpans}), its power series expansion is
\begin{align}
\nu(\delta)=\psi_0+\delta\psi_1+\delta^2\psi_2+\ldots,\;\;\;\psi_0=e_0,\;\;\;|\delta|\ll 1.
\end{align}
From this, the identities (\ref{Proof_PropPuiseuxSeriesEigenvAsympt_LowOrderExpanEigenvecSpanRel}), and the flux-form (\ref{jordanform}), one finds that the analytic function $[\nu(\overline{\delta}),\nu(\delta)]$ has the power series expansion about $\delta=0$ given by
\begin{align}
[\nu(\overline{\delta}),\nu(\delta)]=([\psi_0,\psi_2]+[\psi_1,\psi_1]+[\psi_2,\psi_0])\delta^2+\ldots=-3\kone^2\delta^2+\ldots=-3\kone^2\delta^2g(\delta),\;\;\;|\delta|\ll 1,
\end{align}
where $g(\delta)$ is an analytic function in a complex neighborhood of $\delta=0$ which is real-valued for real $\delta$ with $g(0)=1$. Thus there exists an analytic function $f(\delta)$ in a complex neighborhood of $\delta=0$ which is real-valued for real $\delta$ such that $f(0)=1$ and $f(\delta)^2=g(\delta)$ for $|\delta|\ll 1$. Consider now the analytic vector-valued function
\begin{align}
\tilde{\nu}(\delta)=f(\delta)^{-1}\nu(\delta),\;\;\;|\delta|\ll 1.
\end{align}
Then we can redefine the eigenvector Puiseux series $\psi(\eta)$ in our proof above to be $\psi(\eta)=\tilde{\nu}(\eta^{\third})$ instead.  All the properties (\ref{Proof_PropPuiseuxSeriesEigenvAsympt_EigenvectorEtaExpans})--(\ref{Proof_PropPuiseuxSeriesEigenvAsympt_LowOrderExpanEigenvecSpanRel}) continue to hold for this Puiseux series, where $\psi_h(\eta)=\tilde{\nu}(\zeta^h\eta^{\third})$, $h=0,1,2$, for $|\eta|\ll 1$. Moreover, with the definition (\ref{Proof_PropPuiseuxSeriesEigenvAsympt_DefinitionOfEigenpairsTokzeroGroup}) we have, for real $\eta^\third$ near $\eta=0$,
\begin{gather}
[\vlp(\eta),\vlp(\eta)]=[\tilde{\nu}(\eta^{\third}),\tilde{\nu}(\eta^{\third})]=-3\kone^2\eta^{\thirds},\nonumber\\
[\vre(\eta),\vle(\eta)]=[\tilde{\nu}(\zeta\eta^{\third}),\tilde{\nu}(\zeta^2\eta^{\third})]=[\tilde{\nu}(\overline{\zeta^2\eta^{\third}}),\tilde{\nu}(\zeta^2\eta^{\third})]=-3\kone^2(\zeta^2\eta^{\third})^2=-3\kone^2\zeta\eta^{\thirds},\label{Proof_PropPuiseuxSeriesEigenvAsympt_kzeroNormalizedEigenvectors}\\
[\vle(\eta),\vre(\eta)]=\overline{[\vre(\eta),\vle(\eta)]}=-3\kone^2\zeta^2\eta^{\thirds}.\nonumber
\end{gather}

We will now prove the statements (\ref{eigenvalues}) and (\ref{eigenvectors}) of this proposition for some eigenpair $\krp(\eta),\vrp(\eta)$ of $K(\eta)$ when $|\eta|\ll 1$. At $\eta=0$, the Hermitian analytic matrix-valued function $L(k,0)$ has an eigenvalue $k=\kplus$ ({\itshape i.e.}, $\det L(\kplus,0)$) and this eigenvalue is of negative type \cite[p.~333]{HrynivLancaster1999}, {\itshape i.e.}, the quadratic form on $\ker L(\kplus,0)=\ker(K(0)-\kplus I)$,
\begin{align}
\psi\mapsto \left(\psi, \frac{\partial L}{\partial k}(\kplus,0)\psi\right)=-[\psi,\psi],\;\;\psi\in \ker L(\kplus,0) ,\label{DefiniteType_kplusEigenvalue}
\end{align}
is negative definite, which follows from hypotheses (\ref{jordanform}) and the fact $\ker(K(0)-\kplus I)=\operatorname{span}\{e_+\}$.  By definition \cite[p.~325]{HrynivLancaster1999}, the geometric multiplicity of the eigenvalue $k=\kplus$ for $L(k,0)$ is $\dim \ker L(\kplus,0)$ and is therefore equal to $1$; and by definition \cite[p.~328]{HrynivLancaster1999} and \cite[Lemma 2.4]{HrynivLancaster1999}, the algebraic multiplicity $M$ of the eigenvalue $k=\kplus$ for $L(k,0)$ is the order of the zero of $\det L(k,0)$ at $k=\kplus$ which implies $M=1$.  This fact and the fact the eigenvalue $k=\kplus$ of $L(k,0)$ is of definite type, together with the result \cite[Corollary 3.8]{HrynivLancaster1999}, imply that all the eigenvalues of $L(k,\eta)$ in a sufficiently small complex neighborhood of $(\kplus,0)$ are given by a single analytic function $k=\krp(\eta)$ of $\eta$ which is real-valued for real $\eta$.

To the eigenvalue $\krp(\eta)$ there corresponds an analytic eigenvector $\vrp(\eta)$ which satisfies
\begin{align}
L(\krp(\eta),\eta)\vrp(\eta)=0,\;\;\;\vrp(\eta)=\phi_0+\eta\phi_1+\ldots,\;\;\;\phi_0=e_+,\;\;\;|\eta|\ll 1.\label{Proof_PropPuiseuxSeriesEigenvAsympt_kplusEigenvectorEtaExpans}
\end{align}
(see \cite[Lemma 2]{LancasterMarkusZhou2003}, \cite[\S 3.3.2]{Welters2011a} or \cite{Welters2011}).  Since the function $g_{+}(\eta)=[\vrp(\overline{\eta}),\vrp(\eta)]$ is analytic in a complex neighborhood of $\eta=0$ and is real-valued for real $\eta$ with $g_{+}(0)=1$, there exists an analytic function $f_{+}(\eta)$ in a complex neighborhood of $\eta=0$ which is real-valued for real $\eta$ such that $f_{+}(0)=1$ and $f_{+}(\eta)^2=g_{+}(\eta)$ for $|\eta|\ll 1$. Thus, we redefine the analytic eigenvector $\vrp(\eta)$ to be $f_{+}(\eta)^{-1}\vrp(\eta)$ instead which has the same properties as in (\ref{Proof_PropPuiseuxSeriesEigenvAsympt_kplusEigenvectorEtaExpans}) but also satisfies the flux-normalization condition
\begin{align}
[\vrp(\eta),\vrp(\eta)]=1,\;\;\;\eta\in\mathbb{R},\;\;\;|\eta|\ll1.
\label{Proof_PropPuiseuxSeriesEigenvAsympt_kplusNormalizedEigenvector}
\end{align}

Finally, by (\ref{Proof_PropPuiseuxSeriesEigenvAsympt_kzeroNormalizedEigenvectors}) and (\ref{Proof_PropPuiseuxSeriesEigenvAsympt_kplusNormalizedEigenvector}) our proof has constructed eigenvectors $\{\vrp,\vre,\vlp,\vle\}$ having all the properties in the proposition, including the self-flux interactions given by (\ref{fluxmatrix2}) for real $\eta^{\third}$ near $\eta=0$ for the constant $C=3\kone^2$ in (\ref{UniqueConstantInNormalizedFluxFormPerturbed}), except we still need to prove that the flux-orthogonality conditions in (\ref{fluxmatrix2}) hold for real $\eta^{\third}$ near $\eta=0$ and the uniqueness statement for the constant $C$.  Flux orthogonality follows from the eigenvalue expansions (\ref{eigenvalues}, \ref{k1}) for which $\krp$, $\klp$ are real for real  $\eta^{\third}$ near $\eta=0$ with $\kplus\not=\kzero$, together with \cite[Theorem 4.2.4]{GohbergLancasterRodman2005}, which implies that for real $\eta$, the eigenspaces for any two eigenvalues $\lambda,\mu$ of $K(\eta)$ with $\lambda\not=\overline{\mu}$ are orthogonal with respect to $[\cdot,\cdot]$ (since $K(\eta)$ is self-adjoint with respect to the indefinite inner-product $[\cdot,\cdot]$). 

We complete the proof now by proving the uniqueness statement for the constant $C$ in this proposition. Suppose that (\ref{eigenvectors})--(\ref{fluxmatrix2}) hold for some vectors $\{\vrp,\vre,\vlp,\vle\}$.  From the flux form (\ref{jordanform}), one obtains, for real $\eta^{\third}$ near $\eta=0$, the asymptotic expansion as $\eta\rightarrow 0$
\begin{align}
-C\eta^{\thirds}=[\vlp(\eta),\vlp(\eta)]=([\psi_0,\psi_2]+[\psi_1,\psi_1]+[\psi_2,\psi_0])\eta^{\thirds}+o(\eta^{\thirds})=-3\kone^{2}\eta^{\thirds}+o(\eta^{\thirds}).
\end{align}
This implies $C=3\kone^{2}$.
\end{proof}

\begin{proof}[Proof of Proposition \ref{prop:PuiseuxSeriesProjAsympt} (sec.~\ref{sec:UnidirAmbMed})]
Proposition (\ref{prop:PuiseuxSeriesEigenvAsympt}) and standard perturbation results \cite[pp.~233-234, Theorem 2]{Baum85}, \cite[p.~70, Theorem 1.8]{Kato95} on eigenprojections imply that in a complex neighborhood of $\eta=0$, the eigenprojection $\Prp=\Prp(\eta)$ of $K(\eta)$ corresponding to the eigenvalue $\krp(\eta)$ is analytic whereas the eigenprojections $\Plp=\Plp(\eta), \Pre=\Pre(\eta), \Ple=\Ple(\eta)$ of $K(\eta)$ corresponding to the eigenvalues $\klp(\eta),\kre(\eta),\kle(\eta)$, with the asymptotic expansions (\ref{eigenvalues}) in $\eta^{\third}$, are the branches of a Laurent-Puiseux series $P(\eta)$ having the series expansion in $\eta^{\third}$ of the form
\begin{align}
P_h(\eta)=\sum_{j=-m}^{\infty}B_j(\zeta^h\eta^{\third})^j,\;\;\;h=0,1,2,\;\;\;0<|\eta|\ll 1,
\end{align}
where $\Plp(\eta)=P_0(\eta),\Pre(\eta)=P_1(\eta), \Ple(\eta)=P_2(\eta)$. The proposition now follows from this expansion, the asymptotic expansions (\ref{eigenvectors}), the flux interactions (\ref{fluxmatrix2}) for real $\eta^{\third}$ near $\eta=0$, and the identities $\zeta^3=1, \zeta^{-1}=\zeta^2$. 
\end{proof}

\begin{proof}[Proof of Theorem \ref{thm:noguidemodesundernondegeneracycondition} (sec.~\ref{sec:scatteringdefectlayer})]

Let us proceed with a frequency perturbation, so that $\kk(\eta)=\kk^0$ for all $\eta$ near $\eta=0$ with $\frac{d\omega}{d\eta}(0)\not=0$. 
We begin with the generic condition (\ref{GenericCondition}).  From the proof of Theorem \ref{thm:localbandstrucandjordanstructure}, specifically (\ref{PositiveDefiniteType_kzeroEigenvalue}), one obtains $[e_0,\frac{\partial K}{\partial \omega}(\kk^0\omega^0)e_0]>0$.  This, together with the chain rule $\frac{dK}{d\eta}(0)=\frac{d\omega}{d\eta}(0)\frac{\partial K}{\partial \omega}(\kk^0\omega^0)$ yields $[e_0,\frac{dK}{d\eta}(0)e_0]\not=0$.
By Proposition \ref{prop:PuiseuxSeriesProjAsympt}, $T(\eta)\Pl(\eta)-\Pr(\eta)$ has a Laurent-Puiseux series expansion in $\eta^{\third}$ and therefore so does $\det(T(\eta)\Pl(\eta)-\Pr(\eta))$, which converges in an open complex neighborhood of $\eta=0$.

Suppose the Theorem were false.  Then $\det(T(\eta)\Pl(\eta)-\Pr(\eta))=0$ for all $\eta$ in a punctured complex open neighborhood of $\eta=0$.  We will show this leads to a contradiction.
Let $\eta^\third$ denote any branch of the cube root of $\eta$, giving a single-valued and analytic function whose domain is the complex plane with a cut along the negative imaginary axis. By Proposition \ref{prop:PuiseuxSeriesEigenvAsympt} we know that $k_1$ is real and nonzero.  Choose the branch depending on the sign of $k_1$: if $k_1>0$ then demand $\eta^\third>0$ for $\eta>0$, and if $k_1<0$ demand $\eta^\third<0$ for $\eta<0$.  The proof for $k_1<0$ is virtually the same as for $k_1>0$, so we assume $k_1>0$ for simplicity.

By Propositions \ref{prop:PuiseuxSeriesEigenvAsympt} and \ref{prop:PuiseuxSeriesProjAsympt} the matrix $K(\eta)$, which is analytic in $\eta$, has four distinct eigenvalues $\krp(\eta)$, $\klp(\eta)$, $\kre(\eta)$, $\kle(\eta)$, which are also single-valued analytic functions in $\eta$ near but not at $\eta=0$, as are their eigenprojections $\Prp(\eta)$, $\Plp(\eta)$, $\Pre(\eta)$, $\Ple(\eta)$ ({\itshape i.e.}, those convergent Puiseux series in (\ref{eigenvalues}) and corresponding convergent Laurent-Puiseux series in (\ref{Prp})--(\ref{Ple}) using this analytic branch $\eta^{\third}$).  This implies that $T(\eta)\Pl(\eta)-\Pr(\eta)$ is a single-valued analytic function of $\eta$ in this domain. It follows from the asymptotic expansions in (\ref{eigenvalues}) and the choice of the branch of $\eta^\third$, that $\krp(\eta)$ (resp. $\klp(\eta)$) is real for real $0<\eta\ll1$ and that the energy-flux form $[\cdot,\cdot]$ is positive (resp. negative) definite on the corresponding eigenspaces. Moreover, $\Im\kre(\eta)>0$ ($\Im\kle(\eta)<0$) for $\eta>0$ near $\eta=0$ implying $e^{i\kre(\eta)z}\rightarrow 0$ ($e^{i\kle(\eta)z}\rightarrow 0$) as $z\rightarrow \infty$ ($z\rightarrow -\infty$). The theorem now follows from the fact that the zero of  $\det B(\kk^0,\omega(\eta))$ (where $B(\kk^0,\omega(\eta)):=T(\kk^0,\omega(\eta))P_-(\eta)-P_+(\eta)$) at $\eta=\eta^0$ for $0<\eta_0\ll 1$ must be simple by \cite[Theorem 4.4]{ShipmanWelters2013} and the hypothesis $\frac{d\omega}{d\eta}(0)\not=0$. But from our discussion above we must have $\det B(\kk^0,\omega(\eta))=0$ for $0<\eta_0\ll 1$, a contradiction.

\end{proof}

\textbf{Acknowledgment.}
Part of this work was completed while A.\,Welters was a postdoc at Massachusetts Institute of Technology in the Department of Mathematics.


\end{document}